\DeclareMathAlphabet\mathbfcal{OMS}{cmsy}{b}{n}
\newtheorem{theorem}{Theorem}
\newtheorem{lemma}{Lemma}
\newtheorem{definition}{Definition}
\newtheorem{assumption}{Assumption}
\newtheorem{proposition}{Proposition}
\begin{document}

\title{Robust nonlinear control of close formation flight}



\author{Qingrui Zhang\footnote{qingrui.zhang@mail.utoronto.ca} and Hugh H. T. Liu\footnote{liu@utias.utoronto.ca}}
\affiliation{University of Toronto, Toronto, Ontario  M3H 5T6, Canada}

\begin{abstract}
This paper investigates the robust nonlinear close formation control problem. It aims to achieve precise position control at dynamic flight operation for a follower aircraft under the aerodynamic impact due to the trailing vortices generated by a leader aircraft. One crucial concern is the control robustness that ensures the boundedness of position error subject to uncertainties and disturbances to be regulated with accuracy. This paper develops a robust nonlinear formation control algorithm to fulfill precise close formation tracking control. The proposed control algorithm consists of baseline control laws and disturbance observers. The baseline control laws are employed to stabilize the nonlinear dynamics of close formation flight, while the disturbance observers are introduced to compensate system uncertainties and formation-related aerodynamic disturbances. The position control performance can be guaranteed within the desired boundedness to harvest enough drag reduction for a follower aircraft in close formation using the proposed design.  The efficacy of the proposed design is demonstrated via numerical simulations of close formation flight of two aircraft.
\end{abstract}

\maketitle

\section*{Nomenclature}

\noindent\begin{tabular}{@{}lcl@{}}
\textit{$b$} &=& Wing span \\
\textit{$\mathbf{C}_{IW}$} &=& Rotation matrix from a wind frame to the inertial frame \\
\textit{$c_i$} &=& Control parameter ($i=V$, $\chi$, $p$, $q$, $r$) \\
\textit{$d_i$} &=& System uncertainties and disturbances ($i=V$, $\gamma$, $\chi$, $\mu$, $\alpha$, $\beta$,$p$, $q$, $r$) \\
\textit{$\widehat{d}_i$} &=& Estimate of system uncertainties and disturbances \\
\textit{$I_{x}$, $I_{xz}$, $I_{y}$, $I_{z}$}&=& Moments of inertia \\
\textit{$K_i$} &=& Control parameter ($i=x$, $z$, $V$, $\gamma$, $\chi$, $\mu$, $\alpha$, $\beta$,$p$, $q$, $r$) \\
\textit{$L$, $D$, $Y$, $T$}&=& Lift, drag, side force, and thrust, respectively \\
\textit{$\mathcal{L}$, $\mathcal{M}$, $\mathcal{N}$} &=& Rolling, pitching, and yawing moments \\
\textit{$l_{x}$, $l_{y}$, $l_z$} &=& The optimal relative positions in the inertial frame\\
\textit{$m_f$}&=& Mass of the follower aircraft\\
\textit{$p$, $q$, $r$}&=& Angular rates in the body frame \\
\textit{$r_{x}$, $r_{y}$}&=& The optimal relative positions in the wind frame of the leader aircraft \\
\textit{$\mathcal{T}_i$} &=& Time constant ($i=V$, $\gamma$, $\chi$, $W_x$, $W_y$, $W_z$,  $\mu$, $\alpha$, $\beta$, $p$, $q$, $r$)\\
\textit{$V_i$, $\gamma_i$, $\chi_i$}&=& Airspeed, flight path angle, and heading angle, respectively ($i=c$, $d$, $f$, $l$, $r$)\\
\textit{$\widehat{V}_f$, $\widehat{\gamma}_f$, $\widehat{\chi}_f$}&=& Resultant airspeed, flight path angle, and heading angle in trailing vortices\\
\textit{$W_x$, $W_y$, $W_z$}&=& Wake velocities induced by trailing vortices \\
\textit{$\widehat{W}_x$, $\widehat{W}_y$, $\widehat{W}_z$}&=& Estimates of wake velocities \\
\textit{$x_i$, $y_i$, $z_i$}&=& Position coordinates ($i=d$, $f$, $l$, $c$) \\
\textit{$x_e$, $y_e$, $z_e$}&=& Position errors in the inertial frame \\
\textit{$\alpha$, $\beta$, $\mu$}&=& Angle of attack, sideslip angle, and bank angle,  respectively \\
\textit{$\Delta L$, $\Delta D$, $\Delta Y$}&=& Vortex-induced lift, drag, and side force, respectively \\
\textit{$\Delta \mathcal{L}$, $\Delta \mathcal{M}$, $\Delta \mathcal{N}$} &=& Vortex-induced moments \\
\textit{$\delta_a$, $\delta_e$, $\delta_r$} &=& Aileron, elevator, and rudder  \\
\textit{$\zeta_i$} &=& Damping ratio ($i=l_x$, $l_y$, $l_z$, $V$, $\gamma$,   $\mu$, $\alpha$, $\beta$)\\
\textit{$\lambda_i$} &=& States of disturbance observer ($i=V$, $\gamma$, $\chi$, $W_x$, $W_y$, $W_z$,  $\mu$, $\alpha$, $\beta$, $p$, $q$, $r$)\\
\textit{$\xi_i$} &=& Auxiliary signals ($i=x$, $z$, $\mu$, $\alpha$, $\beta$)\\
\textit{$\omega_i$} &=& Natural frequencies ($i=l_x$, $l_y$, $l_z$, $V$, $\gamma$,  $\mu$, $\alpha$, $\beta$)\\
\end{tabular} \\

\newpage
\noindent\begin{tabular}{@{}lcl@{}}
\textit{Subscripts:}&& \\
\textit{$c$}&=& Command signal \\
\textit{$d$}&=& Desired signal \\
\end{tabular}  \\

\section{Introduction}
\begin{figure}[htbp]
\centering
\includegraphics[width=0.7\textwidth]{./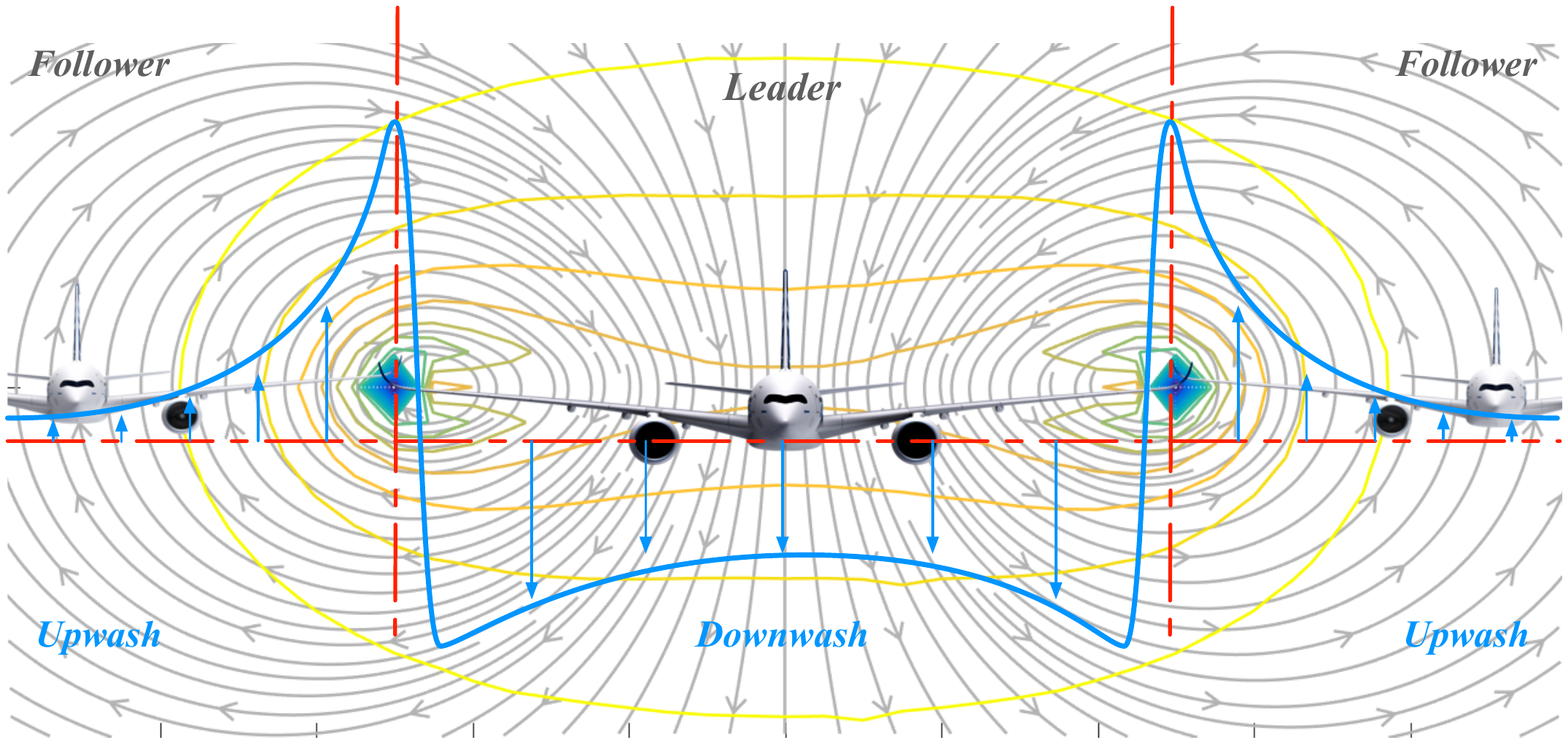}\\ \vspace{-7mm}
\caption{Close formation flight}\vspace{-6mm}
\label{Fig: CFF_2DVortex}
\end{figure}
Close formation flight is inspired by migratory birds who adopt a ``V-shape'' formation flight when migrating from one habitat to another \cite{Lissaman1970Science, Weimerskirch2001Nature, Portugal2014Nature}. In close formation, a follower aircraft, holding a close relative position to a leader aircraft, flies in the upwash wake region of the trailing vortices induced by the leader aircraft as shown in Figure \ref{Fig: CFF_2DVortex}, by which the follower aircraft reduces its drag and thus saves fuels.  Drag reduction of close formation flight has been demonstrated by simulations \cite{Halaas2014SciTech, Kent2015JGCD, Zhang2017JA}, observed by wind tunnel experiments \cite{Bangash2006JA, Cho2014JMST}, and confirmed by flight tests \cite{Ray2002AFMCE, Pahle2012AFMC, Bieniawski2014ASM,Hanson2018AFMC}. 
 
Despite its benefits, close formation flight is challenging in terms of the accuracy and robustness requirement for guidance and control \cite{Zhang2017JA}. The position control accuracy must be guaranteed within the consideration of system uncertainties and formation-related aerodynamic disturbances. 
Yet, different control algorithms have been proposed for close formation flight. Most of them are focused on level and straight flight with constant speeds \cite{Schumacher2000ACC, Pachter2001JGCD, Dogan2005JGCD, Almeida2015GNC}. Two different linear strategies were  applied, namely formation holding control and formation tracking control. Both of them are limited. Formation-holding control assumes a follower aircraft is initially well-trimmed at its optimal position in close formation, such as the proportional-integral (PI) formation control \cite{Pachter2001JGCD}, the close formation control by the linear-quadratic regulator (LQR) \cite{Dogan2005JGCD}, and the linear model predictive control (MPC)-based control \cite{Almeida2015GNC}. Linear formation-tracking control doesn't require the follower aircraft to be initially located at its desired position in close formation \cite{Binetti2003JGCD, Lavretsky2003GNC, Chichka2006JGCD, Zhang2016GNC, Zhang2017GNC, Zhang2018AESCTE}, but they are not guaranteed to address complex nonlinear aircraft dynamics at dynamic flight operation. Additionally, linear control methods will experience dramatic performance degradation or even fail to stabilize the system, when being applied to nonlinear systems. Robust nonlinear formation control is, therefore, more preferable to accommodate close formation flight at dynamic operation.

Nonlinear close formation control is challenging. Contrary to linear cases, model uncertainties and aerodynamic disturbances are less predictable and harder to be described at nonlinear scenarios, making close formation control more difficult. Early robust nonlinear close formation control was investigated using sliding model control \cite{Singh2000IJRNC}  or high order sliding mode (HOSM) control \cite{Galzi2007ACC}. Both of the two methods only focus on outer-loop design, and they require the vortex-upper bounds of induced forces or their derivatives to satisfy certain limits to guarantee the stability. The nonlinear robust formation design including both inner-loop and outer-loop control was reported in \cite{Brodecki2015JGCD} using an incremental nonlinear dynamic inversion (INDI) method, but this method is not robust to model uncertainties and external disturbances. Therefore, present nonlinear methods either rely on unknown model information to ensure both stability and robustness like the sliding mode control \cite{Galzi2007ACC}, or not robust to model uncertainties and external disturbances at general dynamic operations, such as the INDI-based control \cite{Brodecki2015JGCD}. Therefore, it is still an open issue for nonlinear robust close formation control with certain performance guaranty only using available model information.
  
In this paper, we investigate the robust nonlinear control problem for  close formation flight at dynamic operation. The control design is presented under a leader-following architecture.  The fundamental objective is to secure highly precise position control for close formation flight at dynamic flight operation design with the consideration of system uncertainties and aerodynamic impact caused by trailing vortices of leader aircraft a robust nonlinear controller. Control robustness will be one of the critical concerns which significantly affects the possible accuracy for close formation flight as it is subject to system uncertainties and aerodynamic disturbances. A  robust nonlinear  formation controller, which consists of baseline controllers and disturbance observers, is proposed in this paper. The baseline controllers are designed based on a command filtered backstepping technique to stabilize the nominal nonlinear dynamics of an aircraft in close formation \cite{Farrell2005JGCD, Farrell2009TAC}, whereas the disturbance observers could estimate and compensate for system uncertainties and formation-related aerodynamic disturbances by purely observing system inputs and available states. In the proposed design, the follower aircraft is required to track its dynamic optimal relative position to a leader aircraft in the inertial frame under different flight maneuvers. Both the inner-loop and outer-loop control will be studied in this paper, which makes the formation control design more reliably but also more difficult. The assumption on a well-designed inner-loop controller in \cite{Zhang2018TIE} is, therefore, removed in this paper. The proposed design is capable of achieving highly accurate and efficiently robust control performance without using any gradient or boundary information of formation aerodynamic disturbances. Position tracking errors will be ultimately bounded. The final boundaries could be regulated by choosing different control parameters.

The rest of the paper is organized as follows. Section \ref{Sec: Prelim} presents some preliminaries, while Section \ref{Sec: RefSignals} formulates reference trajectories. In Section \ref{Sec: NonlinCntrl}, robust nonlinear control is reported and analyzed.   Numerical simulations are given in Section \ref{Sec: NumSim}. Conclusions are  in Section \ref{Sec: Conclusion}.

\section{Preliminaries} \label{Sec: Prelim}
Some preliminaries are provided, which will be used for the design and analysis in the sequel.
\begin{definition}[Definition 4.6 \cite{Khalil2002Book}] \label{Definition: UniformAndUltBound} 
A system $\dot{\mathbf{x}}=\mathbf{f}(t\text{, }\mathbf{x})$ is uniformly ultimately bounded if there are positive constants $\mathcal{A}_b$  and $\mathcal{A}_c$, there exists $\mathcal{T}=\mathcal{T}\left(\mathcal{A}_a\text{, }\mathcal{A}_b\right)$ for any $\mathcal{A}_a\in \left(0\text{, }\mathcal{A}_c\right)$, such that
\begin{equation}
\Vert \mathbf{x}\left(t_0\right)\Vert \leq \mathcal{A}_a \Rightarrow \Vert \mathbf{x}\left(t\right)\Vert \leq \mathcal{A}_b\text{, }\qquad  \forall t\geq t_0+\mathcal{T}
\end{equation}
\end{definition}
\begin{lemma} \label{Lem: Chap5_DOB_General}
Let $d\left(t\right)$ be a bounded signal whose derivative $\dot{d}\left(t\right)$, namely $d\left(t\right)\in\mathscr{L}_\infty$ and $\dot{d}\left(t\right)\in\mathscr{L}_\infty$. Assume $\widehat{d}\left(t\right)$ is the estimation of $d\left(t\right)$ through a first-order filter as shown in (\ref{Eq: Chap5_1st-OrderFilter}).
\begin{equation}
\mathcal{T}_d\dot{\widehat{d}}=-\widehat{d}+d \label{Eq: Chap5_1st-OrderFilter}
\end{equation}
where $\mathcal{T}_d>0$ is a time constant. Define $\widetilde{d}=\widehat{d}-d$ as the estimation error. If $\widehat{d}\left(0\right)=0$,
\begin{enumerate}
\item $\widetilde{d}$  are globally bounded by $\vert\widetilde{d}\vert\leq \max\left\{\vert{d}\left(0\right)\vert\text{, }\mathcal{T}_d\Vert\dot{d}\Vert_\infty\right\} $;
\item if $\vert{d}\left(0\right)\vert\neq \mathcal{T}_d\Vert\dot{d}\Vert_\infty$, there exist any small positive constant $\epsilon$ and time $t_{\epsilon}$ such that $\vert\widetilde{d}\vert<\mathcal{T}_d\Vert\dot{d}\Vert_\infty+\epsilon$ for all $t>t_{\epsilon}$, where $t_{\epsilon}=\max\left\{0\text{, } \mathcal{T}_d\ln\left(\frac{\vert\vert{d}\left(0\right)\vert-\mathcal{T}_d\Vert\dot{d}\Vert_\infty\vert}{\epsilon}\right)\right\}$;
\item if $\lim_{t\to\infty} \dot{d} =0 $, $\lim_{t\to\infty} \widetilde{d} =0 $.
\end{enumerate}
\end{lemma}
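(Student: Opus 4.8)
The plan is to analyze the linear error dynamics directly. Writing $\widetilde{d} = \widehat{d} - d$ and using the filter equation $\mathcal{T}_d\dot{\widehat{d}} = -\widehat{d} + d$, I would first derive $\mathcal{T}_d\dot{\widetilde{d}} = \mathcal{T}_d\dot{\widehat{d}} - \mathcal{T}_d\dot{d} = -(\widehat{d}-d) - \mathcal{T}_d\dot{d} = -\widetilde{d} - \mathcal{T}_d\dot{d}$. This is a stable scalar linear ODE forced by $-\mathcal{T}_d\dot{d}$, with initial condition $\widetilde{d}(0) = \widehat{d}(0) - d(0) = -d(0)$ since $\widehat{d}(0)=0$. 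The explicit solution is $\widetilde{d}(t) = -d(0)e^{-t/\mathcal{T}_d} - \int_0^t e^{-(t-\tau)/\mathcal{T}_d}\dot{d}(\tau)\,d\tau$.

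For part 1, I would bound the solution using $|\widetilde{d}(t)| \le |d(0)|e^{-t/\mathcal{T}_d} + \|\dot{d}\|_\infty \int_0^t e^{-(t-\tau)/\mathcal{T}_d}\,d\tau = |d(0)|e^{-t/\mathcal{T}_d} + \mathcal{T}_d\|\dot{d}\|_\infty(1 - e^{-t/\mathcal{T}_d})$. Since this is a convex combination of $|d(0)|$ and $\mathcal{T}_d\|\dot{d}\|_\infty$ with weights $e^{-t/\mathcal{T}_d}$ and $1-e^{-t/\mathcal{T}_d}$, it is bounded by $\max\{|d(0)|,\mathcal{T}_d\|\dot{d}\|_\infty\}$ for all $t\ge 0$, giving the global bound. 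For part 2, I would rewrite the same bound as $|\widetilde{d}(t)| \le \mathcal{T}_d\|\dot{d}\|_\infty + \big(|d(0)| - \mathcal{T}_d\|\dot{d}\|_\infty\big)e^{-t/\mathcal{T}_d}$, so that $|\widetilde{d}(t)| - \mathcal{T}_d\|\dot{d}\|_\infty \le \big||d(0)| - \mathcal{T}_d\|\dot{d}\|_\infty\big|e^{-t/\mathcal{T}_d}$; requiring the right-hand side to be less than $\epsilon$ and solving for $t$ yields exactly $t_\epsilon = \max\{0,\ \mathcal{T}_d\ln(||d(0)|-\mathcal{T}_d\|\dot{d}\|_\infty|/\epsilon)\}$ (the outer max handles the case where $||d(0)|-\mathcal{T}_d\|\dot{d}\|_\infty| \le \epsilon$ already at $t=0$). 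The hypothesis $|d(0)| \ne \mathcal{T}_d\|\dot{d}\|_\infty$ just ensures the logarithm argument is nonzero so the expression is well-defined.

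For part 3, assuming $\dot{d}(t) \to 0$, I would split the convolution integral $\int_0^t e^{-(t-\tau)/\mathcal{T}_d}\dot{d}(\tau)\,d\tau$ at a point $\tau = t/2$ (or at a fixed $T$ chosen so that $|\dot{d}(\tau)| < \delta$ for $\tau > T$): the tail contributes at most $\mathcal{T}_d\delta$, while the head contributes at most $\|\dot{d}\|_\infty\,\mathcal{T}_d\,e^{-t/(2\mathcal{T}_d)} \to 0$ (or $e^{-(t-T)/\mathcal{T}_d}\to 0$). Combined with $|d(0)|e^{-t/\mathcal{T}_d}\to 0$, this shows $\limsup_{t\to\infty}|\widetilde{d}(t)| \le \mathcal{T}_d\delta$ for every $\delta>0$, hence $\widetilde{d}(t)\to 0$. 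This is a standard convolution-with-decaying-kernel argument.

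The only mild obstacle is bookkeeping the cases in part 2 — specifically making the formula for $t_\epsilon$ match whether $|d(0)|$ exceeds or falls below $\mathcal{T}_d\|\dot{d}\|_\infty$, and confirming the strict inequality $|\widetilde{d}| < \mathcal{T}_d\|\dot{d}\|_\infty + \epsilon$ rather than $\le$; both are handled cleanly by the exponential being strictly decreasing and strictly positive. Everything else is a direct consequence of the explicit solution of a first-order linear ODE, so no genuine difficulty is anticipated.
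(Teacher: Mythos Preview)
Your proposal is correct and follows essentially the same route as the paper: derive the first-order linear error ODE, write the variation-of-constants solution, and bound it by $|d(0)|e^{-t/\mathcal{T}_d}+\mathcal{T}_d\|\dot d\|_\infty(1-e^{-t/\mathcal{T}_d})$ to obtain all three conclusions. Your convex-combination remark for part~1 and explicit convolution-splitting for part~3 are slightly more detailed than the paper's version (which simply invokes input-to-state stability for part~3), but the underlying argument is identical.
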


\section{Formulation of reference trajectories at dynamic operation} \label{Sec: RefSignals}

In this section, a motion planner is designed for follower aircraft at close formation. According to \cite{Zhang2017JA}, the optimal relative position in close formation is static in the wind frame of the leader aircraft. Assume $\left[r_{x}\text{, } r_{y}\text{, } 0\right]^T$ is the static optimal relative position in the wind frame of the leader aircraft, where $r_x$ ranges from $-2b$ to $-10b$ and $r_y$ is around $\pm 0.95b$ with $b$ denoting the wing span. When flying at close formation, the reference position of a follower aircraft in the inertial frame is 
\begin{equation}
{x}_d ={x}_l+{l}_x \text{,}\quad {y}_d ={y}_l+{l}_y\text{,}\quad \text{and}\quad {z}_d ={z}_l+{l}_z  \label{Eq: RefPos}
\end{equation}
where  $x_l$, $y_l$, and $z_l$ are position coordinates of the leader aircraft in the inertial frame, and $\left[l_{x}\text{, } l_{y}\text{, } l_z\right]^T=\mathbf{C}_{IW}\left(\mu_l, \gamma_l, \chi_l\right)\left[r_{x}\text{, } r_{y}\text{, } 0\right]^T$ where $\mu_l$, $\gamma_l$, and $\chi_l$ are the bank, flight path, and heading angles of the leader aircraft. 
Differentiating (\ref{Eq: RefPos}) yields
\begin{equation}
\dot{x}_d =V_l\cos{\gamma_l}\cos{\chi_l}+\dot{l}_x \text{, }\quad 
\dot{y}_d =V_l\cos{\gamma_l}\sin{\chi_l}+\dot{l}_y \text{,}\quad \text{and}\quad 
\dot{z}_d =-V_l\sin{\gamma_l}+\dot{l}_z  \label{Eq: RefKinematics0}
\end{equation}
where $V_l$, $\gamma_l$, and $\chi_l$ are the airspeed, flight path angle, and heading angle of leader aircraft, respectively.  At dynamic operation,  $l_{x}$, $l_{y}$, and $l_z$ are time-varying, but their derivatives cannot be accurately computed. Hence, in the design, we introduce a command filter  (\ref{Eq: RefEstimator}) to get the command signals $l_{ci}$ and $ \dot{l}_{ci}$ ($i\in\left\{x\text{, } y\text{, }z\right\}$). Let $\mathscr{S}\left(t\right)$ be a smooth signal, so the command filter is 
\begin{equation}
\left[
\begin{array}{c}
\dot{\mathscr{S}}_c \\
\ddot{\mathscr{S}}_c
\end{array}\right]=
\underbrace{\left[\begin{array}{cc}
0\text{,}& 1 \\
-\omega_\mathscr{S}^2\text{,} &-2\zeta_\mathscr{S}\omega_\mathscr{S}
\end{array}\right]}_{\mathbf{A}_{\mathscr{S}}}\left[
\begin{array}{c}
{\mathscr{S}}_c \\
\dot{\mathscr{S}}_c
\end{array}\right]+\left[
\begin{array}{c}
0 \\
\omega_\mathscr{S}^2
\end{array}\right]\mathscr{S}\left(t\right) \label{Eq: RefEstimator}
\end{equation} 
where $\omega_\mathscr{S}>0$ is the natural frequency and $\zeta_\mathscr{S}>0$ is the damping ratio.  Let $\tilde{\mathscr{S}}=\mathscr{S}_c-\mathscr{S}$ and $\mathbf{e}_{\mathscr{S}}=\left[\tilde{\mathscr{S}}\text{, }\dot{\tilde{\mathscr{S}}}\right]^T$. If $\mathscr{S}_c\left(0\right)=\mathscr{S}\left(0\right)$ and $\dot{\mathscr{S}}_c\left(0\right)=0$, Lemma \ref{Lem: 2ndOrderEstProperties} exists for any bounded signal $\mathscr{S}\left(t\right)$.
\begin{lemma} \label{Lem: 2ndOrderEstProperties}
The estimator (\ref{Eq: RefEstimator}) is input-to-state stable with respect to $\mathscr{S}\left(t\right)$. If both ${ \dot{\mathscr{S}}}\left(t\right)$  and ${ \ddot{\mathscr{S}}}\left(t\right)$ are bounded,  $\tilde{\mathscr{S}}$ is uniformly and ultimately bounded, and the following inequality exists for $\mathbf{e}_{\mathscr{S}}\left(t\right)$.
\begin{equation*}\footnotesize
\Vert\mathbf{e}_{\mathscr{S}}\Vert_2\leq \sqrt{\frac{\lambda_{max}\left(\mathbf{P_{\mathscr{S}}}\right)}{\lambda_{min}\left(\mathbf{P_{\mathscr{S}}}\right)}}e^{-\frac{t}{2\lambda_{max}\left(\mathbf{P_{\mathscr{S}}}\right)}}\vert \dot{\mathscr{S}}\left(0\right)\vert+\left(1-e^{-\frac{t}{2\lambda_{max}\left(\mathbf{P_{\mathscr{S}}}\right)}}\right) \frac{2\lambda_{max}^2\left(\mathbf{P}_{\mathscr{S}}\right)\Vert\ddot{\mathscr{S}}+2\zeta_{\mathscr{S}}\omega_{\mathscr{S}}\dot{\mathscr{S}}\Vert_{\infty}}{\lambda_{min}\left(\mathbf{P}_{\mathscr{S}}\right)}
\end{equation*}
where $\lambda_{max}\left(\cdot\right)$ and $\lambda_{min}\left(\cdot\right)$ are the maximum and minimum eigenvalues of a matrix, respectively, $\mathbf{P}_{\mathscr{S}}$ is positive definite such that $\mathbf{P}_{\mathscr{S}}\mathbf{A}_{\mathscr{S}}+\mathbf{A}^T_{\mathscr{S}}\mathbf{P}_{\mathscr{S}}=-\mathbf{I}$. 
Furthermore, there exist 
\begin{equation*}\small
\tilde{\mathscr{S}}=\mathcal{O}\left({1}/{\omega_{\mathscr{S}}}\right) \qquad \text{ and }\qquad \dot{\tilde{\mathscr{S}}}/{\omega_ {\mathscr{S}}}=\mathcal{O}\left({1}/{\omega_ {\mathscr{S}}}\right)\vspace{-3mm}
\end{equation*}
where $\mathcal{O}\left(\cdot\right)$ is an order of magnitude notation \cite{Khalil2002Book}. 
\end{lemma}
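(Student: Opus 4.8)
The plan is to rewrite everything in terms of the filtering error. Substituting $\mathscr{S}_c = \tilde{\mathscr{S}} + \mathscr{S}$ into (\ref{Eq: RefEstimator}), the $\omega_{\mathscr{S}}^2\mathscr{S}$ terms cancel and one is left with the linear forced error system
\begin{equation*}
\dot{\mathbf{e}}_{\mathscr{S}} = \mathbf{A}_{\mathscr{S}}\mathbf{e}_{\mathscr{S}} - \begin{bmatrix} 0 \\ 1 \end{bmatrix}\bigl(\ddot{\mathscr{S}} + 2\zeta_{\mathscr{S}}\omega_{\mathscr{S}}\dot{\mathscr{S}}\bigr),
\end{equation*}
together with the initial condition $\mathbf{e}_{\mathscr{S}}(0) = [\,0,\; -\dot{\mathscr{S}}(0)\,]^T$ that results from $\mathscr{S}_c(0)=\mathscr{S}(0)$ and $\dot{\mathscr{S}}_c(0)=0$. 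Because $\omega_{\mathscr{S}}>0$ and $\zeta_{\mathscr{S}}>0$, the matrix $\mathbf{A}_{\mathscr{S}}$ is Hurwitz, so the Lyapunov equation $\mathbf{P}_{\mathscr{S}}\mathbf{A}_{\mathscr{S}} + \mathbf{A}_{\mathscr{S}}^T\mathbf{P}_{\mathscr{S}} = -\mathbf{I}$ has a unique symmetric positive-definite solution, and the estimator (\ref{Eq: RefEstimator}), being a Hurwitz LTI system driven by $\mathscr{S}(t)$, is ISS with respect to $\mathscr{S}(t)$.

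For the explicit bound I would use $\mathcal{V} = \mathbf{e}_{\mathscr{S}}^T \mathbf{P}_{\mathscr{S}} \mathbf{e}_{\mathscr{S}}$. Differentiating along the error system and using the Lyapunov equation gives $\dot{\mathcal{V}} = -\Vert\mathbf{e}_{\mathscr{S}}\Vert_2^2 - 2\,\mathbf{e}_{\mathscr{S}}^T\mathbf{P}_{\mathscr{S}}[0,1]^T(\ddot{\mathscr{S}} + 2\zeta_{\mathscr{S}}\omega_{\mathscr{S}}\dot{\mathscr{S}})$. Combining $\lambda_{min}(\mathbf{P}_{\mathscr{S}})\Vert\mathbf{e}_{\mathscr{S}}\Vert_2^2 \le \mathcal{V} \le \lambda_{max}(\mathbf{P}_{\mathscr{S}})\Vert\mathbf{e}_{\mathscr{S}}\Vert_2^2$ with the identity $\tfrac{d}{dt}\sqrt{\mathcal{V}} = \dot{\mathcal{V}}/(2\sqrt{\mathcal{V}})$, the Cauchy--Schwarz inequality, and $\Vert[0,1]^T\Vert_2 = 1$, one obtains the scalar differential inequality $\tfrac{d}{dt}\sqrt{\mathcal{V}} \le -\tfrac{1}{2\lambda_{max}(\mathbf{P}_{\mathscr{S}})}\sqrt{\mathcal{V}} + \tfrac{\lambda_{max}(\mathbf{P}_{\mathscr{S}})}{\sqrt{\lambda_{min}(\mathbf{P}_{\mathscr{S}})}}\Vert\ddot{\mathscr{S}} + 2\zeta_{\mathscr{S}}\omega_{\mathscr{S}}\dot{\mathscr{S}}\Vert_\infty$. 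The comparison lemma then expresses $\sqrt{\mathcal{V}(t)}$ as an exponentially decaying term plus a forced term; using $\Vert\mathbf{e}_{\mathscr{S}}\Vert_2 \le \sqrt{\mathcal{V}}/\sqrt{\lambda_{min}(\mathbf{P}_{\mathscr{S}})}$ and $\sqrt{\mathcal{V}(0)} \le \sqrt{\lambda_{max}(\mathbf{P}_{\mathscr{S}})}\,|\dot{\mathscr{S}}(0)|$ reproduces exactly the displayed inequality. Uniform ultimate boundedness of $\tilde{\mathscr{S}}$ then follows, since the forced term is a finite constant whenever $\dot{\mathscr{S}}$ and $\ddot{\mathscr{S}}$ are bounded and it upper-bounds $\limsup_{t\to\infty}\Vert\mathbf{e}_{\mathscr{S}}\Vert_2$ up to an arbitrarily small margin.

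The order-of-magnitude estimates require a sharper argument, and this is the step I expect to be the real obstacle: substituting the $\omega_{\mathscr{S}}$-dependence of $\lambda_{max}(\mathbf{P}_{\mathscr{S}})$ and $\lambda_{min}(\mathbf{P}_{\mathscr{S}})$ (which grow/decay like $\omega_{\mathscr{S}}$ and $1/\omega_{\mathscr{S}}$) into the bound above is far too conservative to reveal the $\mathcal{O}(1/\omega_{\mathscr{S}})$ rate. The fix is to pass to singularly perturbed coordinates. Setting $\epsilon = 1/\omega_{\mathscr{S}}$, $\eta_1 = \tilde{\mathscr{S}}$, $\eta_2 = \dot{\tilde{\mathscr{S}}}/\omega_{\mathscr{S}}$, the error system becomes $\epsilon\dot{\eta} = \mathbf{A}_0\eta + \epsilon\,[\,0,\; -2\zeta_{\mathscr{S}}\dot{\mathscr{S}} - \epsilon\ddot{\mathscr{S}}\,]^T$ with $\mathbf{A}_0 = \left[\begin{smallmatrix} 0 & 1 \\ -1 & -2\zeta_{\mathscr{S}}\end{smallmatrix}\right]$ Hurwitz and independent of $\omega_{\mathscr{S}}$. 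Repeating the Lyapunov/comparison argument with the $\omega_{\mathscr{S}}$-independent solution of $\mathbf{P}_0\mathbf{A}_0 + \mathbf{A}_0^T\mathbf{P}_0 = -\mathbf{I}$ shows that $\Vert\eta(t)\Vert_2$ is the sum of an $\mathcal{O}(1)$ term that decays on the $\mathcal{O}(\epsilon)$ time scale and a forced term of size $\mathcal{O}(\epsilon)\bigl(2\zeta_{\mathscr{S}}\Vert\dot{\mathscr{S}}\Vert_\infty + \epsilon\Vert\ddot{\mathscr{S}}\Vert_\infty\bigr) = \mathcal{O}(1/\omega_{\mathscr{S}})$; since $\eta_1 = \tilde{\mathscr{S}}$ and $\eta_2 = \dot{\tilde{\mathscr{S}}}/\omega_{\mathscr{S}}$, both claimed estimates follow, provided one also checks --- a routine step --- that the constants absorbed in $\mathcal{O}(\cdot)$ depend only on $\zeta_{\mathscr{S}}$, $\Vert\dot{\mathscr{S}}\Vert_\infty$, $\Vert\ddot{\mathscr{S}}\Vert_\infty$, and $|\dot{\mathscr{S}}(0)|$, and not on $\omega_{\mathscr{S}}$.
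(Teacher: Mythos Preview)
Your proposal is correct and follows essentially the same route as the paper: derive the forced error system $\dot{\mathbf{e}}_{\mathscr{S}}=\mathbf{A}_{\mathscr{S}}\mathbf{e}_{\mathscr{S}}+[0,-1]^T(\ddot{\mathscr{S}}+2\zeta_{\mathscr{S}}\omega_{\mathscr{S}}\dot{\mathscr{S}})$, use the quadratic Lyapunov function $\mathbf{e}_{\mathscr{S}}^T\mathbf{P}_{\mathscr{S}}\mathbf{e}_{\mathscr{S}}$, pass to $\sqrt{\mathcal{V}}$, apply the comparison lemma, and then handle the $\mathcal{O}(1/\omega_{\mathscr{S}})$ claims via the singular-perturbation rescaling $\eta_1=\tilde{\mathscr{S}}$, $\eta_2=\dot{\tilde{\mathscr{S}}}/\omega_{\mathscr{S}}$. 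The only difference is cosmetic: where the paper simply cites ``properties of singularly perturbed systems'' for the last step, you explicitly rerun the Lyapunov/comparison argument with the $\omega_{\mathscr{S}}$-independent matrix $\mathbf{A}_0$, which is arguably cleaner and makes the dependence of the constants transparent.
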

In real flight, $\dot{\gamma}_l$,  $\dot{\chi}_l$, and $\dot{\mu}_l$ and their derivatives are all bounded, so (\ref{Eq: RefEstimator}) is a valid choice to obtain $l_{ci}$ and $ \dot{l}_{ci}$ ($i\in\left\{x\text{, } y\text{, }z\right\}$). In addition, Lemma \ref{Lem: 2ndOrderEstProperties} indicates that the command signals $l_{ci}$ and $ \dot{l}_{ci}$ ($i\in\left\{x\text{, } y\text{, }z\right\}$) could be ensured to be arbitrarily close to their corresponding desired values $l_{i}$ and $ \dot{l}_{i}$ by choosing proper command filter parameters. Note that $\mathscr{S}_c\left(0\right)=\mathscr{S}\left(0\right)$  is needed to avoid the peaking phenomenon (Page 613,   \cite{Khalil2002Book}). If $\mathscr{S}_c\left(0\right)\neq\mathscr{S}\left(0\right)$, $\mathscr{S}_c$, the signal $\dot{\mathscr{S}}_c$ will transiently peak to $\mathcal{O}\left({\omega_\mathscr{S}}\right)$ before it exponentially decays, resulting in the so-called peaking phenomenon due to $\mathscr{S}_c\left(0\right)\neq\mathscr{S}\left(0\right)$. Without loss of generality, the following assumption is introduced. 
\begin{assumption} \label{Assump: Chap5_BoundedSignal}
The attitude signals $\gamma_l$,  $\chi_l$, and $\mu_l$ and their derivatives are all bounded.
\end{assumption}
In light of (\ref{Eq: RefEstimator}), the command position for a follower aircraft in close formation is ${x}_r=x_l+{l}_{cx}$, ${y}_r=y_l+{l}_{cy}$, and ${z}_r=z_l+{l}_{cz}$, and accordingly,
\begin{equation}
\left\{
\begin{array}{l}
\dot{{x}}_r=\dot{x}_l+\dot{l}_{cx} \text{,}\qquad
\dot{{y}}_r=\dot{y}_l+\dot{l}_{cy}\sin{\chi_r} \text{,}\qquad
\dot{{z}}_r=\dot{z}_l+\dot{l}_{cz}\\
V_r 		  = \sqrt{\dot{x}_r^2+\dot{y}_r^2+\dot{z}_r^2} \text{, }\quad
\gamma_r = -\sin^{-1}\left(\frac{\dot{z}_r}{V_r}\right) \text{, }\quad
\chi_r	  =\chi_l+\sin^{-1}\left(\frac{-\dot{l}_{cx}\sin{\chi_l}+\dot{l}_{cy}\cos{\chi_l}}{V_r\cos{\gamma_r}}\right) 
\end{array}\right. \label{Eq: MotionPlan}
\end{equation}

\section{Robust nonlinear formation control design} \label{Sec: NonlinCntrl}
\begin{figure}[htbp]
\centering
\includegraphics[width=0.8\textwidth]{./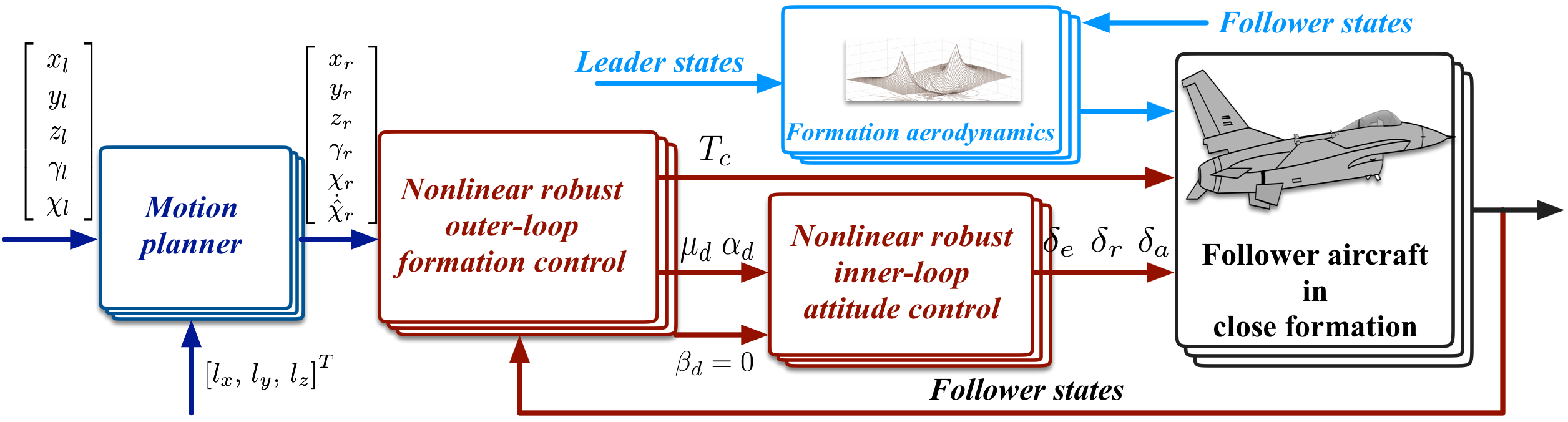} \\ \vspace{-5mm}
\caption{The entire formation control structure}\vspace{-4mm}
\label{Fig: Chap5_EntireCntrl}
\end{figure}

The proposed design in this section can be easily extended to the case with more than three aircraft, even though it is discussed under the leader-follower architecture with two aircraft. In the proposed design, command filtered backstepping technique is employed, which avoids the analytic calculation of time derivatives of intermediate virtual inputs \cite{Farrell2005JGCD, Sonneveldt2007JGCD, Farrell2009TAC, Sonneveldt2009JGCD}. As shown in Figure \ref{Fig: Chap5_EntireCntrl}, the entire design consists of two major loops: an outer loop for formation position control and an inner loop for attitude control. The outer-loop control allows a follower aircraft to track the planned motion by (\ref{Eq: MotionPlan}), and generates command thrust $T_c$, desired bank angle $\mu_d$, and desired angle of attack $\alpha_d$. The inner-loop control stabilizes follower aircraft's attitudes to their desired values $\mu_d$ and $\alpha_d$ from the outer-loop control, while holding zero sideslip angle $\beta_f$.

\subsection{Outer-loop formation position control} \label{Subsec: Chap5_OuterloopForm}
Let  $\overline{D}$ and $\overline{L}$ be the nominal values of the drag $D$ and lift $L$, respectively. They are obtained by either available aerodynamic data or certain analytical models \cite{Morelli1998ACC}.  The sideslip angle $\beta_f$ is negligibly small, as it is always stabilized to be zero. Accordingly, the side force $Y$ is small and taken as a model uncertainty. The outer-loop dynamics used for control design are  
\begin{equation}
\left\{\begin{array}{lcl}
\dot{x}_f &=&V_f\cos{\gamma_f}\cos{\chi_f}+W_{x}  \\
\dot{y}_f &=&V_f\cos{\gamma_f}\cos{\chi_f}+W_{y}\\
\dot{z}_f &=&-V_f\sin{\gamma_f}+W_{z}
\end{array}\right.  \qquad
\left\{\begin{array}{lcl}
\dot{V}_f&=& \frac{1}{m_f}\left(T\cos{\alpha_f}\cos{\beta_f}-\overline{D}\right)-g\sin{\gamma_f}+d_{V}\\
\dot{\gamma}_f&=&\frac{\left(\overline{L}+T\sin{\alpha_f}\right)\cos{\mu_f}}{m_fV_f}-\frac{g}{V_f}\cos{\gamma_f}+d_{\gamma}\\
\dot{\chi}_f&=&\frac{\left(\overline{L}+T\sin{\alpha_f}\right)\sin{\mu_f}}{m_fV_f\cos{\gamma_f}}+d_{\chi}
\end{array}\right.\label{Eq: Chap5_OuterDesignModel}
\end{equation}
where $x_f$, ${y}_f$, and ${z}_f$ are follower position coordinates in the inertial frame, $V_f$ is the airspeed, $\gamma_f$ and $\chi_f$ are the flight path and heading angles,  $T$ is the thrust, $W_{x}$, $W_{y}$, and $W_{z}$ are induced wake velocities, and $d_{V}$, $d_{\gamma}$, and $d_{\chi}$ are the augmentation of system uncertainties and disturbances.
\begin{equation}
\left\{\begin{array}{lcl}
d_{V}&=&-\frac{D-\overline{D}+\Delta D}{m_f}-\frac{\dot{W}_{W_{x}}}{m_f}\\
d_{\gamma}&=&\frac{\left(L-\overline{L}+\Delta L\right)\cos{\mu_f}-\left(Y+\Delta Y-T\cos{\alpha_f}\sin{\beta_f}\right)\sin{\mu_f}}{m_fV_f}-\frac{\dot{W}_{W_{y}}\sin{\mu_f}+\dot{W}_{W_{z}}\cos{\mu_f}}{m_fV_f}\\
d_{\chi}&=&\frac{\left(L-\overline{L}+\Delta L\right)\sin{\mu_f}+\left(Y+\Delta Y-T\cos{\alpha_f}\sin{\beta_f}\right)\cos{\mu_f}}{m_fV_f\cos{\gamma_f}}+\frac{\dot{W}_{W_{y}}\cos{\mu_f}-\dot{W}_{W_{z}}\sin{\mu_f}}{m_fV_f\cos{\gamma_f}}\\ 
\end{array}\right. \label{Eq: Chap5_Uncer+Disturb}
\end{equation}
where $\dot{W}_{W_{x}}$, $\dot{W}_{W_{y}}$, and $\dot{W}_{W_{z}}$ are the wake velocity derivatives denoted in the wind frame of follower aircraft, $\Delta L$, $\Delta D$, and $\Delta Y$ are the vortex-induced forces. According to \cite{Zhang2017JA}, $W_{x}$, $W_{y}$, and $W_{z}$  are bounded, and have much slower dynamics in comparison with aircraft speed and attitudes,  so their derivatives are relatively small. Furthermore, the following assumption is introduced.
\begin{assumption} \label{Assump: Chap5_BoundW}
Induced wake velocities $W_{x}$, $W_{y}$, and $W_{z}$  are all bounded, and furthermore, they are piecewise constant, namely $\dot{W}_{{x}}=0$, $\dot{W}_{{y}}=0$, and $\dot{W}_{{z}}=0$.
\end{assumption}
The following nonlinear disturbance observer is employed. 
\begin{equation}
\widehat{\mathbf{W}}  =  \boldsymbol{\lambda}_W+\boldsymbol{\mathcal{T}}^{-1}_W\mathbf{X_P}\text{, }\qquad
\boldsymbol{\dot{\lambda}}_W =-\boldsymbol{\mathcal{T}^{-1}}_W\boldsymbol{\lambda}_W-\boldsymbol{\mathcal{T}^{-1}}_W\left(\boldsymbol{\mathcal{T}}^{-1}_W\mathbf{X_P}+\mathbf{{U}_P}\right)  \label{Eq: Chap5_WakeEst}
\end{equation} 
where $\boldsymbol{\mathcal{T}}_W=diag\left\{\mathcal{T}_{Wx}\text{, }\mathcal{T}_{Wy}\text{, }\mathcal{T}_{Wz}\right\}>0$, $\boldsymbol{\lambda}_W=\left[\lambda_{Wx}\text{, }\lambda_{Wy}\text{, }\lambda_{Wz}\right]^T$, $\mathbf{X_P}=\left[x_f\text{, }y_f\text{, }z_f\right]^T$, $\mathbf{{U}_P}=\left[V_f\cos{\gamma_f}\cos{\chi_f}\text{, }V_f\cos{\gamma_f}\sin{\chi_f}\text{, }-V_f\sin{\gamma_f}\right]^T$, $\widehat{\mathbf{W}}=\left[\widehat{{W}}_x\text{, }\widehat{{W}}_y\text{, }\widehat{{W}}_z\right]^T$ where $\widehat{{W}}_x$, $\widehat{{W}}_y$, and $\widehat{{W}}_z$ are estimates of $W_{x}$, $W_{y}$, and $W_{z}$, respectively.
It is chosen that $\boldsymbol{\lambda}_W\left(0\right)=-\boldsymbol{\mathcal{T}}_W^{-1}\mathbf{X_P}\left(0\right)$.
Let $\widetilde{W}_x=\widehat{W}_x-W_x$,  $\widetilde{W}_y=\widehat{W}_y-W_y$,  and $\widetilde{W}_z=\widehat{W}_z-W_z$.  Under Assumption \ref{Assump: Chap5_BoundW}, one has 
\begin{equation}
\dot{\widetilde{W}}_x=-\widetilde{W}_x /\mathcal{T}_{Wx}\text{, }\qquad \dot{\widetilde{W}}_y=-\widetilde{W}_y/\mathcal{T}_{Wy}\text{,} \qquad\text{and }\quad\dot{\widetilde{W}}_z=-\widetilde{W}_z/\mathcal{T}_{Wz} \label{Eq: EstErrDyn}
\end{equation}
According to Lemma \ref{Lem: Chap5_DOB_General}, $\widetilde{W}_x$, $\widetilde{W}_y$, and $\widetilde{W}_z$ can be made arbitrarily small by choosing sufficiently small time constants, even if $\dot{W}_{{x}}\neq0$, $\dot{W}_{{y}}\neq0$, and $\dot{W}_{{z}}\neq0$. To simplify the analysis, it is assumed that  $\widetilde{W}_x =\widetilde{W}_y =\widetilde{W}_z=0 $.
In light of (\ref{Eq: Chap5_WakeEst}), one has 
\begin{equation}
\dot{x}_f =\widehat{V}_f\cos{\widehat{\gamma}_f}\cos{\widehat{\chi}_f}  \text{,}\quad 
\dot{y}_f =\widehat{V}_f\cos{\widehat{\gamma}_f}\sin{\widehat{\chi}_f} \text{,} \quad \text{and} \quad
\dot{z}_f =-\widehat{V}_f\sin{\widehat{\gamma}_f}
 \label{Eq: NewKinF2}
\end{equation}
where 
\begin{equation*}\small
\left\{\begin{array}{ll}
\widehat{V}_f &= \sqrt{\left({V}_f\cos{{\gamma}_f}\cos{{\chi}_f}+\widehat{W}_x\right)^2+\left({V}_f\cos{{\gamma}_f}\cos{{\chi}_f}+\widehat{W}_y\right)^2+\left(-{V}_f\sin{{\gamma}_f}+\widehat{W}_z \right)^2}\\
\widehat{\gamma}_f &= -\sin^{-1}\left(\frac{\widehat{W}_{z}-{V}_f\sin{\gamma_f}}{\widehat{V}_f}\right) \\
 \widehat{\chi}_f &=\chi_f+\sin^{-1}\left(\frac{\widehat{W}_y\cos{\chi_f}-\widehat{W}_x\sin{\chi_f}}{\widehat{V}_f\cos{\widehat{\gamma}_f}}\right)
 \end{array}\right.
 \end{equation*}  
Let $x_e=x_f-{x}_r$, $y_e=y_f-{y}_r$, and $z_e=z_f-{z}_r$. Transform $x_e$, $y_e$, and $z_e$ into a new frame.
\begin{equation*}
e_x =\cos{\widehat{\chi}_f}x_e + \sin{\widehat{\chi}_f}y_e\text{, }\quad 
e_y =-\sin{\widehat{\chi}_f}x_e+\cos{\widehat{\chi}_f}y_e\text{, }\quad 
e_z=z_e
\end{equation*}
We have
\begin{equation}
\left\{\begin{array}{lcl}
\dot{e}_x &=&\widehat{V}_f\cos{\widehat{\gamma}_f}-{V}_r\cos{{\gamma}_r}\cos{e_\chi}+\dot{\widehat{\chi}}_f e_y  \\
\dot{e}_y &=&{V}_r\cos{{\gamma}_r}\sin{e_\chi}-\dot{\widehat{\chi}}_f e_x\\
\dot{e}_z &=&-\widehat{V}_f\sin{\widehat{\gamma}_f}+{V}_r\sin{{\gamma}_r}
\end{array}\right. \label{Eq: NewKinF}
\end{equation}
where $e_\chi=\widehat{\chi}_f-\chi_r$. The desired velocity and flight path angle are shown in (\ref{Eq: Chap5_Desired_V_Gamma}).
\begin{equation}
V_d 		  = \left(-K_xe_x+V_r\cos\gamma_r\cos{e_\chi}\right)/\cos{\widehat{\gamma}_f}-\delta V\text{, }\quad
\gamma_d  = \sin^{-1}\left(\left({K_ze_z+V_r\sin\gamma_r+\widehat{W}_z}\right)/{V_f}\right) \label{Eq: Chap5_Desired_V_Gamma}
\end{equation}
where $K_x\text{, }K_z>0$ are control parameters, and $\delta V=\widehat{V}_f-V_f$. The desired signals $V_d$ and $\gamma_d$ are passed through a command filter to obtain $V_c$, $\gamma_c$, and their rates. Hence, 
\begin{equation}
\left[
\begin{array}{c}
\dot{\mathscr{S}}_c \\
\ddot{\mathscr{S}}_c
\end{array}\right]=
\left[\begin{array}{cc}
0\text{,}& 1 \\
-\omega_\mathscr{S}^2\text{,} &-2\zeta_\mathscr{S}\omega_\mathscr{S}
\end{array}\right]\left[
\begin{array}{c}
\mathscr{S}_c \\
\dot{\mathscr{S}}_c
\end{array}\right]+\left[
\begin{array}{c}
0 \\
\omega_\mathscr{S}^2
\end{array}\right]\mathscr{S}_d \text{,  } \quad\mathscr{S}\in\left\{V\text{, } \gamma\right\} \label{Eq: Chap5_VGa_CMD}
\end{equation} 
Let $e_V=V_f-V_c$ and $e_\gamma=\gamma_f-\gamma_c$. Note that $\overline{L}=\overline{L}_0+\overline{L}_\alpha\alpha_f$ where $\overline{L}_0$ is the lift at $\alpha_f=0$ and $\overline{L}_\alpha$ is the lift derivative with respect to the angle of attack. According to (\ref{Eq: Chap5_OuterDesignModel}), one has
\begin{equation}
\left\{\begin{array}{lcl}
\dot{e}_V &=& \underbrace{\frac{T\cos{\alpha_f}\cos{\beta_f}-\overline{D}}{m_f}-g\sin{\gamma_f}}_{u_V}-\dot{V}_c+d_{V}\\
\dot{e}_\gamma&=&\underbrace{\frac{\left(\overline{L}_0+\overline{L}_\alpha \alpha_f+T\sin{\alpha_f}\right)\cos{\mu_f}-m_fg\cos{\gamma_f}}{m_fV_f}}_{u_\gamma}-\dot{\gamma}_c+d_{\gamma}\\
\dot{e}_\chi&=&\underbrace{\frac{\left(\overline{L}_0+\overline{L}_\alpha \alpha_f+T\sin{\alpha_f}\right)\sin{\mu_f}}{m_fV_f\cos{\gamma_f}}}_{u_\chi}-\dot{\widehat{\chi}}_r+{d}_\chi \end{array}\right. \label{Eq: Chap5_ErrDyn_VGammaChi}
\end{equation}
where $u_V$, $u_\gamma$, and $u_{\chi}$ are intermediate control inputs, $\dot{\widehat{\chi}}_r$ is the estimation of $\dot{\chi}_r$ by passing $\chi_r$ through a 2nd-order filter similar to (\ref{Eq: RefEstimator}).
Two more uncertainty terms ${\dot{\widetilde{\chi}}}_r$ and $\dot{\widetilde{\chi}}_f$ are included in ${d}_\chi$ in (\ref{Eq: Chap5_ErrDyn_VGammaChi}), where  ${\dot{\widetilde{\chi}}}_r={\dot{\widehat{\chi}}}_r-\dot{\chi}_r$ and $\dot{\widetilde{\chi}}_f=\frac{d}{dt}\sin^{-1}\left(\frac{{\widehat{W}}_y\cos{\chi_f}-\widehat{W}_x\sin{\chi_f}}{\widehat{V}_f\cos{\widehat{\gamma}_f}}\right)$. Hence, ${d}_\chi$ in (\ref{Eq: Chap5_ErrDyn_VGammaChi})  is re-defined to be ${d}_\chi=\frac{\left(L-\overline{L}+\Delta L\right)\sin{\mu_f}+\left(Y+\Delta Y-\cos{\alpha_f}\sin{\beta_f}\right)\cos{\mu_f}}{m_fV_f\cos{\gamma_f}}+\dot{\widetilde{\chi}}_r+\dot{\widetilde{\chi}}_f+\frac{\dot{W}_{W_{y}}\cos{\mu_f}-\dot{W}_{W_{z}}\sin{\mu_f}}{m_fV_f\cos{\gamma_f}}$.
Note that $\chi_r$ is a smooth signal with bounded derivatives. According to Lemma \ref{Lem: 2ndOrderEstProperties}, ${\dot{\widetilde{\chi}}}_r$ and its derivative are bounded, so ${\dot{\widetilde{\chi}}}_f$ and its derivative are also bounded. 
\begin{assumption}\label{Assump: Chap5_Uncer+Distur_Bound}
The uncertainties and disturbances $d_{V}$, $d_{\gamma}$, $d_{\chi}$, and their derivatives are bounded.
\end{assumption}
The following virtual inputs $u_V^d$, $u_\gamma^d$, and $u_\chi^d$ are choosing for the error systems (\ref{Eq: NewKinF}) and (\ref{Eq: Chap5_ErrDyn_VGammaChi}). \vspace{-3mm}
\begin{equation}
u_V^d = u_{V 0}^d -\widehat{d}_V+\dot{V}_c \text{,}\quad u_\gamma^d = u_{\gamma 0}^d -\widehat{d}_\gamma+\dot{\gamma}_c \text{,} \quad \text{and} \quad u_\chi^d = u_{\chi 0}^d -\widehat{d}_\chi+\dot{\widehat{\chi}}_r \label{Eq: Chap5_FormCntrl_Entire} \vspace{-3mm}
\end{equation}
where $\widehat{d}_V$, $\widehat{d}_\gamma$, and $\widehat{{d}}_\chi$ are estimates of ${d}_V$, ${d}_\gamma$, and ${{d}}_\chi$, respectively, and $u_{V0}^d$, $u_{\gamma0}^d$, and $u_{\chi0}^d$ are 
\begin{equation}
{u}_{V0}^d	 = -K_Ve_V-\frac{c_V\varepsilon_x\cos\widehat{\gamma}_f}{H}\text{,}\quad
u_{\gamma0}^d = -K_\gamma e_\gamma \text{,} \quad\text{and}\quad
u_{\chi0}^d	 = -K_\chi e_{\chi}-\frac{c_\chi e_y V_r\cos{\gamma_r}\cos\left(\frac{e_\chi}{2}\right)}{H}  \label{Eq: Chap5_NonmCntrl_VChiGa}
\end{equation}
where $K_V$, $K_\gamma$, $K_\chi$, $c_V$, $c_\chi>0$ are control parameters,  $H=\sqrt{\varepsilon_x^2+e_y^2+1}$, $\varepsilon_x=e_x-\xi_x$, and $\varepsilon_z=e_z-\xi_z$, where $\xi_x$ and $\xi_z$ in (\ref{Eq: Chap5_Auxili_X_Z}) are used to counteract the estimation errors in the  filter (\ref{Eq: Chap5_VGa_CMD}). \vspace{-3mm}
\begin{equation}
\dot{\xi}_x = -K_x\xi_x+\left(V_c-V_d\right)\cos\widehat{\gamma}_f\text{,}\qquad
\dot{\xi}_z = -K_z\xi_z+V_f\left(\sin\gamma_d-\sin\gamma_f\right)\label{Eq: Chap5_Auxili_X_Z} \vspace{-3mm}
\end{equation}

\begin{figure}[tbp]
\centering
\includegraphics[width=0.45\textwidth]{./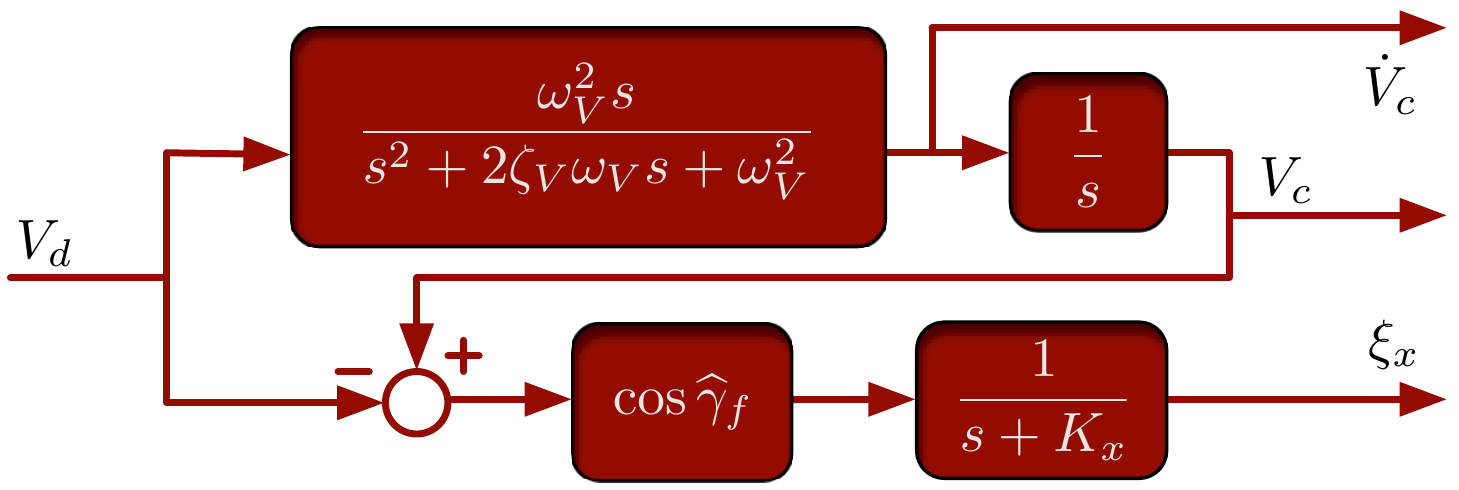}\\ \vspace{-6mm}
\caption{Command filter and auxiliary system for speed control}\vspace{-4mm}
\label{Fig: Speed_CF}
\end{figure}
Shown in Figure \ref{Fig: Speed_CF} is the command filter and auxiliary system for speed control.  If one chooses $u_V={u}_{V}^d$, $u_\gamma={u}_{\gamma}^d$, and $u_\chi={u}_{\chi}^d$, there exists
\begin{equation}
\left\{\begin{array}{lcl}
\dot{V}_f &=& -K_Ve_V-c_V\frac{\varepsilon_x\cos\widehat{\gamma}_f}{H}+\dot{V}_c-\widehat{d}_V+d_{V}\\
\dot{\gamma}_f&=&-K_\gamma e_\gamma+\dot{\gamma}_c-\widehat{d}_\gamma+d_{\gamma}\\
\dot{\chi}_f&=& -K_\chi \sin\left(\frac{e_\chi}{2}\right)-\frac{c_\chi e_y V_r\cos{\gamma_r}\cos\left(\frac{e_\chi}{2}\right)}{H}+\dot{\widehat{\chi}}_r-\widehat{{d}}_\chi+{d}_\chi \end{array}\right. \label{Eq: Chap5_SteadyErrDyn_VGammaChi}
\end{equation}
Based on (\ref{Eq: Chap5_SteadyErrDyn_VGammaChi}), the nonlinear disturbance observer  is
\begin{equation}
\mathbf{\widehat{D}_{D}}  		=  \boldsymbol{\lambda_D}+\boldsymbol{\mathcal{T}_D^{-1}}\mathbf{X_D} \text{, }\qquad
\boldsymbol{\dot{\lambda}_{D}}	=-\boldsymbol{\mathcal{T}_D^{-1}}\boldsymbol{\lambda_D}-\boldsymbol{\mathcal{T}_D^{-1}}\left(\boldsymbol{\mathcal{T}_D^{-1}}\mathbf{X_D} +\mathbf{U_D}-\mathbf{\widehat{D}_{D}}   \right) \label{Eq: Chap5_DO_GaChiV_Mat}
\end{equation}
where $\boldsymbol{\mathcal{T}_D}=diag\left\{\mathcal{T}_V\text{, }\mathcal{T}_\gamma\text{, }\mathcal{T}_\chi\right\}>0$ is a positive definite constant matrix, $\boldsymbol{\lambda_D}=\left[\lambda_V\text{, }\lambda_\gamma\text{, }\lambda_\chi\right]^T$, $\mathbf{X_D}=\left[V_f\text{, }\gamma_f\text{, }\chi_f\right]^T$, $\mathbf{U_D}=\left[{u}_{V0}^d\text{, }u_{\gamma0}^d\text{, }u_{\chi0}^d\right]^T$, and $\mathbf{\widehat{D}_{D}}=\left[\widehat{d}_V\text{, }\widehat{d}_\gamma\text{, }\widehat{d}_\chi\right]^T$. The uncertainty and disturbance estimates $\widehat{d}_V$, $\widehat{d}_\gamma$, and $\widehat{{d}}_\chi$ need to be fed back to the estimator for the next estimation updates. Combining (\ref{Eq: Chap5_NonmCntrl_VChiGa}) and (\ref{Eq: Chap5_DO_GaChiV_Mat}), one is able to get $u_V^d$, $u_\gamma^d$, and $u_\chi^d$.  Hence,
 \begin{equation}
\left\{\begin{array}{lcl}
T_c &=& \frac{m_f\left(u_V^d+g\sin{\gamma_f}\right)+\overline{D}}{\cos{\alpha_ f}\cos{\beta_f}}\\
\alpha_d &=&\frac{m_fV_f\sqrt{\left(u_\gamma^d+\frac{g}{V}\cos{\gamma_f}\right)^2+\left(u_\chi^d\right)^2\cos^2{\gamma_f}}-T\sin{\alpha_f}-\overline{L}_0}{\overline{L}_\alpha} \\
\mu_d &=&\tan^{-1}\left(\frac{m_fV_fu_\chi^d\cos{\gamma_f}}{m_fV_fu_\gamma+m_fg\cos{\gamma_f}}\right)
\end{array}\right. \label{Eq: Chap5_Desired_TcAoAMu}
\end{equation}
\begin{figure}[tbp]
\centering
\includegraphics[width=0.9\textwidth]{./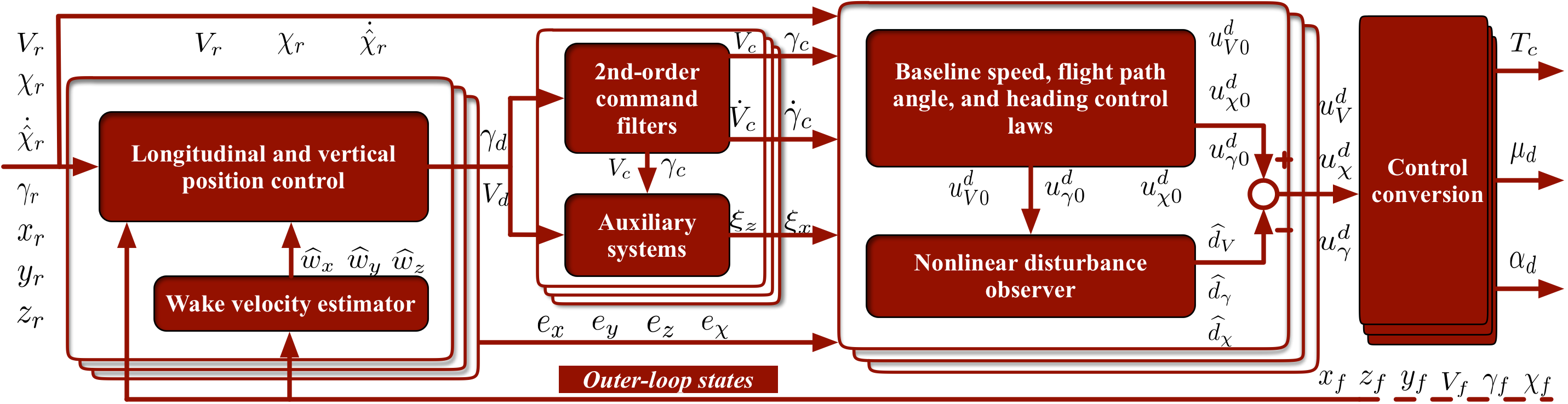} \\ \vspace{-6mm}
\caption{The outer-loop formation position control structure}\vspace{-4mm}
\label{Fig: Chap5_Outerloop}
\end{figure}
The entire outer-loop control structure is illustrated in Figure \ref{Fig: Chap5_Outerloop}. The following assumption is introduced for ${d_V}$, ${d_\gamma}$, and ${d_\chi}$ for the stability analysis.
\begin{assumption} \label{Assump: Chap5_SlowDisturb}
${d_V}$, ${d_\gamma}$, and ${d_\chi}$ have slow dynamics, and furthermore, $\dot{d}_V\simeq 0$, $\dot{d}_\gamma \simeq 0$, and $\dot{d}_\chi\simeq 0$.
\end{assumption}
The following error dynamics will be obtained.
\begin{equation}
\left\{\begin{array}{lcl}
\dot{\varepsilon}_x &=&\widehat{V}_f\cos{\widehat{\gamma}_f}-{V}_r\cos{{\gamma}_r}\cos{e_\chi}-\dot{\xi}_x+\dot{\widehat{\chi}}_f e_y  \\
\dot{e}_y &=&{V}_r\cos{{\gamma}_r}\sin{e_\chi}-\dot{\widehat{\chi}}_f \varepsilon_x\\
\dot{\varepsilon}_z &=&-\widehat{V}_f\sin{\widehat{\gamma}_f}+{V}_r\sin{{\gamma}_r}-\dot{\xi}_z
\end{array}\right. \label{Eq: NewKinF2}
\end{equation}
Assume $\alpha_f$ and  $\mu_f$ are able to be rapidly stabilized by its inner-loop attitude controller to their desired values $\alpha_d$ and $\mu_d$, respectively.  The following theorem holds.

\begin{theorem} \label{Thm: OuterForm}
If Assumptions \ref{Assump: Chap5_Uncer+Distur_Bound} and \ref{Assump: Chap5_SlowDisturb} hold, and $K_V$, $K_\gamma$, $K_\chi$,  $\mathcal{T}_V$, $\mathcal{T}_\gamma$, $\mathcal{T}_\chi$, $c_V$,   $c_\chi>0$, $0<K_x<2\zeta_V\omega_V$, and $ 0<K_z<2\zeta_\gamma\omega_\gamma$, the proposed outer-loop formation controller given by (\ref{Eq: Chap5_FormCntrl_Entire}), (\ref{Eq: Chap5_NonmCntrl_VChiGa}), and (\ref{Eq: Chap5_DO_GaChiV_Mat}) will stabilize the outer-loop formation error system composed of (\ref{Eq: Chap5_ErrDyn_VGammaChi}) and (\ref{Eq: NewKinF2}), and \vspace{-3mm}
\begin{equation*}
\lim_{t\to\infty} e_x \to \xi_x\text{, }\quad \lim_{t\to\infty} e_z \to \xi_z \text{, }\quad \text{ and } \quad \lim_{t\to\infty} e_y \to 0  \vspace{-3mm}
\end{equation*}
where $\xi_x$ and $\xi_z$ will be ultimately arbitrarily bounded by control parameters $\omega_V$, $K_x$, and $\omega_\gamma$, $K_z$, respectively. If $\xi_x\left(0\right)=\xi_z\left(0\right)=0$, there exist \vspace{-3mm}
 \begin{equation}
 \lim_{t\to\infty}{\xi}_x \leq\frac{1}{K_x}\Vert V_c-V_d\Vert_\infty \text{ }\quad\text{and }\quad
 \lim_{t\to\infty} {\xi}_x \leq \frac{1}{K_z}\Vert V_f\Vert_\infty\Vert \sin{\gamma_d}-\sin{\gamma_f}\Vert_\infty \vspace{-3mm}
 \end{equation} 
 where $V_c-V_d$ will exponentially converge to $\mathcal{O}\left(\frac{1}{\omega_V}\right)$ and $\lim_{t\to\infty}\gamma_f-\gamma_d\to\mathcal{O}\left(\frac{1}{\omega_\gamma}\right)$. 
 Therefore, by tuning  $\omega_V$, $K_x$, and $\omega_\gamma$, $K_z$, the ultimate boundaries of ${e}_x$ and ${e}_z$ could be regulated accordingly.
\end{theorem}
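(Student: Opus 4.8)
The plan is to treat the closed loop as a cascade. First I would dispatch the disturbance observer: repeating for \eqref{Eq: Chap5_DO_GaChiV_Mat} the manipulation that takes \eqref{Eq: Chap5_WakeEst} to \eqref{Eq: EstErrDyn} — substitute $\widehat{\mathbf D}_D=\boldsymbol\lambda_D+\boldsymbol{\mathcal T}_D^{-1}\mathbf X_D$ into $\dot{\boldsymbol\lambda}_D$ and use the closed-loop $\dot V_f,\dot\gamma_f,\dot\chi_f$ of \eqref{Eq: Chap5_SteadyErrDyn_VGammaChi} — one obtains $\mathcal T_i\dot{\widehat d}_i=-\widehat d_i+d_i$, hence $\dot{\widetilde d}_i=-\widetilde d_i/\mathcal T_i-\dot d_i$ for $i\in\{V,\gamma,\chi\}$. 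By Assumption~\ref{Assump: Chap5_SlowDisturb} and Lemma~\ref{Lem: Chap5_DOB_General}, each $\widetilde d_i$ decays into a ball whose radius vanishes with $\mathcal T_i$; as was done for the wake estimates I would take $\widetilde d_i\to0$ exponentially, so the $\widetilde d_i$-terms entering the position loop are a vanishing perturbation.

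Second, I would close the loop. Substituting $V_d,\gamma_d$ from \eqref{Eq: Chap5_Desired_V_Gamma}, the auxiliary dynamics \eqref{Eq: Chap5_Auxili_X_Z}, and the virtual inputs \eqref{Eq: Chap5_FormCntrl_Entire}--\eqref{Eq: Chap5_NonmCntrl_VChiGa} into \eqref{Eq: NewKinF} and \eqref{Eq: Chap5_ErrDyn_VGammaChi}, the definitions of $V_d,\xi_x$ (resp.\ $\gamma_d,\xi_z$) are exactly what is needed so that $\widehat V_f\cos\widehat\gamma_f-V_r\cos\gamma_r\cos e_\chi-\dot\xi_x=-K_x\varepsilon_x+e_V\cos\widehat\gamma_f$, collapsing the $\varepsilon_x$- and $\varepsilon_z$-equations of \eqref{Eq: NewKinF2} to
\[
\dot\varepsilon_x=-K_x\varepsilon_x+e_V\cos\widehat\gamma_f+\dot{\widehat\chi}_f e_y,\qquad \dot e_y=V_r\cos\gamma_r\sin e_\chi-\dot{\widehat\chi}_f\varepsilon_x,\qquad \dot\varepsilon_z=-K_z\varepsilon_z,
\]
while \eqref{Eq: Chap5_ErrDyn_VGammaChi} becomes $\dot e_V=-K_Ve_V-c_V\varepsilon_x\cos\widehat\gamma_f/H-\widetilde d_V$, $\dot e_\gamma=-K_\gamma e_\gamma-\widetilde d_\gamma$, and $\dot e_\chi=-K_\chi\sin(e_\chi/2)-c_\chi e_yV_r\cos\gamma_r\cos(e_\chi/2)/H-\widetilde d_\chi$ as in \eqref{Eq: Chap5_SteadyErrDyn_VGammaChi}. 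Taking
\[
\mathcal V=(H-1)+\tfrac12\varepsilon_z^2+\tfrac1{2c_V}e_V^2+\tfrac12 e_\gamma^2+\tfrac4{c_\chi}\bigl(1-\cos\tfrac{e_\chi}{2}\bigr),\qquad H=\sqrt{\varepsilon_x^2+e_y^2+1},
\]
and differentiating along the closed loop, the $\dot{\widehat\chi}_f$ terms cancel by skew symmetry and the two $1/H$ terms built into $u_{V0}^d,u_{\chi0}^d$ in \eqref{Eq: Chap5_NonmCntrl_VChiGa} cancel the cross-terms $\varepsilon_xe_V\cos\widehat\gamma_f/H$ and $e_yV_r\cos\gamma_r\sin e_\chi/H$ generated by $\dot H$ (using $\sin e_\chi=2\sin(e_\chi/2)\cos(e_\chi/2)$), leaving
\[
\dot{\mathcal V}=-\frac{K_x\varepsilon_x^2}{H}-K_z\varepsilon_z^2-\frac{K_V}{c_V}e_V^2-K_\gamma e_\gamma^2-\frac{2K_\chi}{c_\chi}\sin^2\tfrac{e_\chi}{2}-\frac1{c_V}e_V\widetilde d_V-e_\gamma\widetilde d_\gamma-\frac2{c_\chi}\sin\tfrac{e_\chi}{2}\,\widetilde d_\chi .
\]
Since the $\widetilde d_i$ vanish, $\mathcal V$ is bounded, every error is bounded, and $\varepsilon_x,\varepsilon_z,e_V,e_\gamma,\sin(e_\chi/2)$ are square-integrable; as their derivatives are bounded (Assumptions~\ref{Assump: Chap5_Uncer+Distur_Bound}--\ref{Assump: Chap5_SlowDisturb} and boundedness of $V_r,\gamma_r,\dot{\widehat\chi}_f$), Barbalat's lemma gives $\varepsilon_x,\varepsilon_z,e_V,e_\gamma\to0$ and, $e_\chi$ remaining in $(-2\pi,2\pi)$, $e_\chi\to0$.

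Third, I would establish $e_y\to0$; this is the step I expect to be the main obstacle, because $e_y$ appears with no negative term in $\dot{\mathcal V}$ and is damped only indirectly through the heading channel. Once $\varepsilon_x,\widetilde d_\chi\to0$, the pair $(e_y,e_\chi)$ obeys — up to a vanishing perturbation — the reduced system $\dot e_y=V_r\cos\gamma_r\sin e_\chi$, $\dot e_\chi=-K_\chi\sin(e_\chi/2)-c_\chi e_yV_r\cos\gamma_r\cos(e_\chi/2)/\sqrt{e_y^2+1}$, for which $(\sqrt{e_y^2+1}-1)+\tfrac4{c_\chi}(1-\cos(e_\chi/2))$ decreases at rate $\tfrac{2K_\chi}{c_\chi}\sin^2(e_\chi/2)$; an invariance/Matrosov-type argument — legitimate because $V_r\cos\gamma_r$ is bounded away from zero by construction of the reference \eqref{Eq: MotionPlan} — drives the limit set into $\{\sin(e_\chi/2)=0\}$ and then, since $\dot e_\chi=-c_\chi e_yV_r\cos\gamma_r/\sqrt{e_y^2+1}$ must vanish there, into $\{e_y=e_\chi=0\}$; a standard convergent-perturbation/cascade step lifts this to the full trajectory. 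Combined with $\varepsilon_x,\varepsilon_z\to0$ this yields $e_x=\varepsilon_x+\xi_x\to\xi_x$, $e_z=\varepsilon_z+\xi_z\to\xi_z$, and $e_y\to0$. Making this non-autonomous invariance argument fully rigorous for a general time-varying reference is the delicate point; the remainder is bookkeeping.

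Finally, I would bound the residuals. Solving the scalar linear equations \eqref{Eq: Chap5_Auxili_X_Z} with $\xi_x(0)=\xi_z(0)=0$ gives $|\xi_x(t)|\le\tfrac1{K_x}\Vert V_c-V_d\Vert_\infty$ and $|\xi_z(t)|\le\tfrac1{K_z}\Vert V_f\Vert_\infty\Vert\sin\gamma_d-\sin\gamma_f\Vert_\infty$. By Lemma~\ref{Lem: 2ndOrderEstProperties} applied to the command filter \eqref{Eq: Chap5_VGa_CMD}, $V_c-V_d=\mathcal O(1/\omega_V)$ (exponentially) and $\gamma_c-\gamma_d=\mathcal O(1/\omega_\gamma)$, so with $e_\gamma=\gamma_f-\gamma_c\to0$ one gets $\gamma_f-\gamma_d\to\mathcal O(1/\omega_\gamma)$ and hence $\sin\gamma_d-\sin\gamma_f\to\mathcal O(1/\omega_\gamma)$; thus the ultimate bounds on $\xi_x,\xi_z$, and therefore on $e_x,e_z$, shrink as $\omega_V,\omega_\gamma$ grow. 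The hypotheses $0<K_x<2\zeta_V\omega_V$ and $0<K_z<2\zeta_\gamma\omega_\gamma$ are precisely the bandwidth-separation conditions that keep the nested $V$- and $\gamma$-command-filter loops — themselves driven through $V_d,\gamma_d$ by the position errors — stable and guarantee the boundedness of $\dot V_d,\ddot V_d,\dot\gamma_d,\ddot\gamma_d$ that Lemma~\ref{Lem: 2ndOrderEstProperties} needs. To avoid circularity I would first run the Step-two Lyapunov argument with $V_c-V_d,\gamma_c-\gamma_d$ treated only as bounded (the input-to-state-stability part of Lemma~\ref{Lem: 2ndOrderEstProperties}), establish boundedness of all closed-loop signals, and only then sharpen the filter errors to their $\mathcal O(1/\omega)$ values.
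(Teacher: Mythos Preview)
Your overall architecture is right, and the Lyapunov function you chose is essentially the paper's (note $8c_\chi^{-1}\sin^2(e_\chi/4)=4c_\chi^{-1}(1-\cos(e_\chi/2))$). Three places deserve comment, two of which the paper handles more simply than you propose.

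\emph{Disturbance errors in the Lyapunov function.} Rather than treating $\widetilde d_V,\widetilde d_\gamma,\widetilde d_\chi$ as exogenous vanishing perturbations, the paper folds $\tfrac12 c_V^{-1}\widetilde d_V^{\,2}+\tfrac12\widetilde d_\gamma^{\,2}+\tfrac12 c_\chi^{-1}\widetilde d_\chi^{\,2}$ into $\mathbb V$. The cross terms you left dangling then combine with the $-\widetilde d_i^{\,2}/\mathcal T_i$ coming from the observer error dynamics into $2\times2$ quadratic forms, e.g.\ $\bigl[\begin{smallmatrix}K_V&-1/2\\-1/2&1/\mathcal T_V\end{smallmatrix}\bigr]$, which are positive definite for any $K_V,\mathcal T_V>0$. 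This yields $\dot{\mathbb V}\le0$ outright, so boundedness and square-integrability are immediate and no cascade/ISS bookkeeping is needed.

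\emph{The step $e_y\to0$.} You correctly flag this as the delicate point and reach for a Matrosov/invariance argument on a reduced non-autonomous system. The paper avoids this entirely by a second application of Barb\u alat: having already shown $e_\chi\to0$, it checks that $\ddot e_\chi$ is bounded (just differentiate the closed-loop $\dot e_\chi$ and use the bounds already in hand), so $\dot e_\chi$ is uniformly continuous with $\int_0^\infty\dot e_\chi\,dt=e_\chi(\infty)-e_\chi(0)$ finite; hence $\dot e_\chi\to0$. But the closed-loop heading equation can be solved for $e_y$,
\[
e_y=-\frac{\dot e_\chi+K_\chi\sin(e_\chi/2)+\widetilde d_\chi}{c_\chi V_r\cos\gamma_r\cos(e_\chi/2)}\,H,
\]
and every factor on the right tends to zero (numerator) or stays bounded and bounded away from zero (denominator and $H$). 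This is both shorter and fully rigorous for the time-varying reference.

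\emph{Where $0<K_x<2\zeta_V\omega_V$ actually enters.} Your description of it as a ``bandwidth-separation condition'' is morally right but not a proof. The paper makes it concrete: substitute the expression for $V_d$ back into the command filter \eqref{Eq: Chap5_VGa_CMD} and the auxiliary dynamics \eqref{Eq: Chap5_Auxili_X_Z} to obtain a linear(-in-the-state) third-order system in $(\xi_x,V_c,\dot V_c)$ whose characteristic polynomial is $s^3+2\zeta_V\omega_Vs^2+\omega_V^2s+K_x\omega_V^2$. Routh--Hurwitz then gives exactly $0<K_x<2\zeta_V\omega_V$ for Hurwitzness, hence input-to-state stability of $\xi_x$ with respect to the bounded forcing $\tfrac{V_r\cos\gamma_r\cos e_\chi-K_x\varepsilon_x}{\cos\widehat\gamma_f}+\delta V$. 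Your scalar bound $|\xi_x|\le K_x^{-1}\Vert V_c-V_d\Vert_\infty$ and the $\mathcal O(1/\omega_V)$ refinement via Lemma~\ref{Lem: 2ndOrderEstProperties} then follow just as you wrote; the Routh--Hurwitz step is what closes the circularity you were worried about.
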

\begin{proof}
Let $\widetilde{d}_i=\widehat{d}_i-d_i$ ($i\in\left\{V\text{, }\gamma\text{, }\chi\right\}$). If Assumption \ref{Assump: Chap5_SlowDisturb} holds, and $\alpha_f$ and $\mu_f$ are rapidly stabilized to be $\alpha_d$ and $\mu_d$, respectively, it is easily to obtain \vspace{-3mm}
\begin{equation}
\dot{\widetilde{{d}}_V}=-\frac{\widetilde{d}_V}{\mathcal{T}_{V}}\text{,}\quad\dot{\widetilde{{d}}_\gamma}=-\frac{\widetilde{d}_\gamma}{\mathcal{T}_{\gamma}}\text{,}\quad\text{and}\quad\dot{\widetilde{{d}}_V}=-\frac{\widetilde{d}_\chi}{\mathcal{T}_{\chi}} \label{Eq: Chap5_VGaChi_EstErr} \vspace{-3mm}
\end{equation}
Based on (\ref{Eq: Chap5_ErrDyn_VGammaChi}), (\ref{Eq: NewKinF2}), and (\ref{Eq: Chap5_VGaChi_EstErr}), Theorem \ref{Thm: OuterForm} is proven in two steps. The first step shows $\lim_{t\to\infty}\varepsilon_x\to0$, $\lim_{t\to\infty}e_y\to0$, and $\lim_{t\to\infty}\varepsilon_z\to0$, and thus $\lim_{t\to\infty}e_x\to\xi_x$ and $\lim_{t\to\infty}e_z\to\xi_z$. The second step demonstrates that $\xi_x$ and $\xi_z$ are uniformly ultimately bounded, which implies $e_x$ and $e_z$ are ultimately bounded. The ultimate boundaries of $e_x$ and $e_z$ are related to control gains.

To show $\lim_{t\to\infty}\varepsilon_x\to0$, $\lim_{t\to\infty}e_y\to0$, and $\lim_{t\to\infty}\varepsilon_z\to0$, we choose
\begin{equation*}
\mathbb{V}=H+\frac{\varepsilon_z^2}{2}+8c_\chi^{-1}\sin^2{\left(\frac{e_\chi}{4}\right)}+c_V^{-1}\frac{e_V^2}{2}+\frac{e_\gamma^2}{2}+c_\chi^{-1}\frac{\widetilde{d}_\chi^2}{2}+c_V^{-1}\frac{{\widetilde{d}_V}^2}{2}+\frac{\widetilde{d}_\gamma^2}{2}-1
\end{equation*}
where  $H=\sqrt{\varepsilon_x^2+e_y^2+1}$.  
Differentiating $\mathbb{V}$  yields
\begin{eqnarray}
\dot{\mathbb{V}}_1&=& \frac{\varepsilon_x\dot{\varepsilon}_x}{H}+ \frac{e_y\dot{e}_y}{H}+\varepsilon_z\dot{\varepsilon}_z+2c_\chi^{-1}\sin{\left(\frac{e_\chi}{2}\right)}
					\dot{e}_\chi+c_V^{-1}{e_V}{\dot{e}_V}+{e_\gamma}\dot{e}_\gamma+c_\chi^{-1}\widetilde{d}_\chi\dot{\widetilde{d}_\chi}+\widetilde{d}_\gamma\dot{\widetilde{d}_\gamma}+c_V^{-1}\widetilde{d}_V\dot{\widetilde{d}_V}\nonumber \\
				&=&-K_x\frac{\varepsilon_x^2}{H}+c_V^{-1}{e_V}\left({u_V}-\dot{V}_c+d_{V}+\frac{{c_V}{\varepsilon_x\cos{\widehat{\gamma}_f}}}{H}\right)-
					K_z\varepsilon_z^2+ {e_\gamma}\left({u_\gamma}-\dot{\gamma}_c+d_{\gamma}\right)
					  \nonumber \\
				& &+2c_\chi^{-1}\sin{\left(\frac{e_\chi}{2}\right)}\left({u_\chi}-\dot{\widehat{\chi}}_r+d_\chi+\frac{{c_\chi}e_y{V}_r\cos{{\gamma}_r}\cos{\frac{e_\chi}{2}}}{H}
					\right)-c_\chi^{-1}\frac{\widetilde{d}_\chi^2}{\mathcal{T}_\chi}
					-c_V^{-1}\frac{{\widetilde{d}_V}^2}{\mathcal{T}_V}-\frac{\widetilde{d}_\gamma^2}{\mathcal{T}_\gamma} \label{Eq: Lyap1_Der}
\end{eqnarray}
Substituting (\ref{Eq: Chap5_FormCntrl_Entire}) and (\ref{Eq: Chap5_NonmCntrl_VChiGa}) for corresponding terms in (\ref{Eq: Lyap1_Der}) yields
\begin{equation*}\small
\begin{array}{ll}
\dot{\mathbb{V}}_1 
				&=-c_\chi^{-1}\left[\sin{\left(\frac{e_\chi}{2}\right)}\text{, }\widetilde{d}_\chi\right]\left[\begin{array}{cc}2K_\chi& -1 \\-1&\frac{1}{\mathcal{T}_\chi}\end{array}\right]\left[\begin{array}{c}\sin{\left(\frac{e_\chi}{2}\right)} \\ \widetilde{d}_\chi\end{array}\right]-c_V^{-1}\left[e_V\text{, }\widetilde{d}_V\right]\left[\begin{array}{cc}K_V& -0.5 \\-0.5&\frac{1}{\mathcal{T}_V}\end{array}\right]\left[\begin{array}{c}e_V \\ \widetilde{d}_V\end{array}\right]\nonumber \\
				&-\left[e_\gamma\text{, }\widetilde{d}_\gamma\right]\left[\begin{array}{cc}K_\gamma& -0.5 \\-0.5&\frac{1}{\mathcal{T}_\gamma}\end{array}\right]\left[\begin{array}{c}e_\gamma \\ \widetilde{d}_\gamma\end{array}\right]-K_x\frac{\varepsilon_x^2}{H}-K_z\varepsilon_z^2
\end{array}\label{Eq: Chap5_Lyap1_Der}
\end{equation*}
If $K_V\text{, }K_\chi\text{, }K_\gamma\text{, }\mathcal{T}_V\text{, }\mathcal{T}_\gamma\text{, }\mathcal{T}_\chi>0$, one has
\begin{equation*}\small
\left[\begin{array}{cc}2K_\chi& -1 \\-1&\frac{1}{\mathcal{T}_\chi}\end{array}\right]>0 \text{,}\qquad\left[\begin{array}{cc}K_V& -0.5 \\-0.5&\frac{1}{\mathcal{T}_V}\end{array}\right]>0\text{,}\qquad\text{and}\qquad\left[\begin{array}{cc}K_\gamma& -0.5 \\-0.5&\frac{1}{\mathcal{T}_\gamma}\end{array}\right]>0
\end{equation*}
Hence, there exists $\dot{\mathbb{V}}_1\leq0$ by choosing $K_x>0$, $K_z>0$, $c_V>0$, and  $c_\chi>0$.
Therefore, $\mathbb{V}$ is a non-increasing positive definite function, which implies $\mathbb{V}(\infty)\leq\mathbb{V}(0)$ is a finite constant and $\lim_{t\to\infty}\int^{t}_{0}\dot{\mathbb{V}}_1(\tau)d\tau=\mathbb{V}(\infty)-\mathbb{V}(0)$ exists and is finite.  Thus, $\mathbb{V}$ is bounded and has a finite limit as $t\to \infty$,  and $\varepsilon_x$, $e_y$,  $\varepsilon_z$, $e_\chi$, $e_V$, $e_\gamma$, $\widetilde{d}_V$, $\widetilde{d}_\chi$, and $\widetilde{d}_\gamma$ are all bounded as well.  Furthermore, $\ddot{\mathbb{V}}_1$ is also bounded, as it is a function of $\varepsilon_x$, $e_y$,  $\varepsilon_z$, $e_\chi$, $e_V$, $e_\gamma$, $\widetilde{d}_V$, $\widetilde{d}_\chi$, and $\widetilde{d}_\gamma$. Hence,  $\dot{\mathbb{V}}_1$ is uniformly continuous.  According to Barb\v{a}lat's lemma \cite{Ioannou1995Book}, $\lim_{t\to\infty}\dot{\mathbb{V}}_1\to0$,  so $\lim_{t\to\infty}\varepsilon_x\to0\text{, }\varepsilon_z\to0\text{, }\lim_{t\to\infty}e_\chi\to0\text{, } \lim_{t\to\infty}e_V\to0\text{, } \lim_{t\to\infty}e_\gamma\to0$,  $\lim_{t\to\infty}\widetilde{d}_V\to0$,  $\lim_{t\to\infty}\widetilde{d}_\chi\to0$,  $\lim_{t\to\infty}\widetilde{d}_\gamma\to0$. It is obviously that  $\lim_{t\to\infty}e_x\to\xi_x$ and  $\lim_{t\to\infty}e_z\to\xi_z$. One could apply Barb\v{a}lat's lemma to $\dot{e}_{\chi}$ to show $\lim_{t\to\infty}e_y\to0$.  Differentiating $\dot{e}_{\chi}$ by time will yield
\begin{eqnarray}
\ddot{e}_{\chi} &=& -K_{\psi}\frac{\dot{e}_\chi}{2}\cos{\frac{e_\chi}{2}}-\frac{c_\chi\dot{H}e_yV_r\cos{\gamma_r}\cos{\frac{e_\chi}{2}}}{H^2}+\frac{c_\chi}{H}\dot{e}_yV_r\cos{\gamma_r}\cos{\frac{e_\chi}{2}}+\frac{c_\chi}{H}e_y\dot{V}_r\cos{\gamma_r}\cos{\frac{e_\chi}{2}} \nonumber \\
			&&-\frac{c_\chi}{H}\dot{\gamma}_r{e}_yV_r\sin{\gamma_r}\cos{\frac{e_\chi}{2}}-\frac{c_\chi}{H}\frac{\dot{e}_\chi}{2}{e}_yV_r\cos{\gamma_r}\sin{\frac{e_\chi}{2}}-\frac{\widetilde{d}_\chi}{\mathcal{T}_\chi} \nonumber
\end{eqnarray}
Obviously, $\ddot{\chi}_{e}$ is bounded according to Lemma \ref{Lem: Chap5_DOB_General} and Assumptions \ref{Assump: Chap5_BoundedSignal} and \ref{Assump: Chap5_Uncer+Distur_Bound}. Considering $\int \dot{e}_\chi dt=e_\chi$ and $\lim_{t\to\infty}e_\chi\to0$, we have $\dot{\psi}_e\to0$ as $t\to\infty$ in accordance with Barb\v{a}lat's lemma.  According to (\ref{Eq: Chap5_ErrDyn_VGammaChi}), (\ref{Eq: Chap5_FormCntrl_Entire}), and (\ref{Eq: Chap5_NonmCntrl_VChiGa}),  $e_y=-\frac{\dot{e}_\chi+K_{\chi}e_\chi+\widetilde{d}_\chi}{c_\chi V_r\cos{\gamma_r}\cos{\frac{e_\chi}{2}}}H$. Since $\lim_{t\to\infty}\dot{e}_\chi\to0$, $\lim_{t\to\infty}e_\chi\to0$, $\lim_{t\to\infty}\widetilde{d}_\chi\to0$, $c_\chi V_r\cos{\gamma_r}\cos{\frac{e_\chi}{2}}>0$, and $H>0$, it is readily to conclude that  $\lim_{t\to\infty}e_y\to0$.

The second step shows that $\xi_x$ and $\xi_z$ are uniformly ultimately bounded. From (\ref{Eq: Chap5_Desired_V_Gamma}), one has
\begin{equation}
V_d 		  = {\left(-K_xe_x+V_r\cos\gamma_r\ \right)}/{\cos{\widehat{\gamma}_f}}-\delta V= {\left(-K_x\varepsilon_x-K_x\xi_x+V_r\cos\gamma_r\cos{e_\chi}\right)}/{\cos{\widehat{\gamma}_f}}-\delta V \label{Eq: Chap5_Vd_Convert}
\end{equation}
Substituting the equation above for $V_d$ in (\ref{Eq: Chap5_VGa_CMD}) and (\ref{Eq: Chap5_Auxili_X_Z}) will generate
\begin{equation}
\left[\begin{array}{c}
\dot{\xi}_x  \\
\dot{V}_c  \\
\ddot{V}_c
\end{array}
\right]=\underbrace{\left[\begin{array}{ccc}
0 & \cos{\widehat{\gamma}_f}  & 0\\
0 &  			0  			  & 1\\
-\frac{K_x\omega_V^2}{\cos{\widehat{\gamma}_f}} & -\omega_V^2  & -2\zeta_V\omega_V
\end{array}
\right]}_{\mathbf{A}_v}\left[\begin{array}{c}
{\xi}_x  \\
{V}_c  \\
\dot{V}_c
\end{array}
\right]-\left[\begin{array}{c}
\cos{\widehat{\gamma}_f} \\
0  \\
\omega_V^2
\end{array}
\right]\left(\frac{V_r\cos{\gamma_r}\cos{e_\chi}-K_x\varepsilon_x }{\cos{\widehat{\gamma}_f}}+\delta V\right)\label{Eq: Chap5_Dyn_XixVcVcdot}
\end{equation} 
The characteristic equation of (\ref{Eq: Chap5_Dyn_XixVcVcdot}) is $s^3+2\zeta_V\omega_Vs^2+\omega_V^2s+K_x\omega_V^2=0$. According to the Routh--Hurwitz stability criterion, $\mathbf{A}_v$ is ensured to be Hurwitz all the time, if $0< K_x<2\zeta_V\omega_V$. Therefore, the system (\ref{Eq: Chap5_Dyn_XixVcVcdot}) is input-to-state stable with respect to $\frac{V_r\cos{\gamma_r}\cos{e_\chi}-K_x\varepsilon_x }{\cos{\widehat{\gamma}_f}}+\delta V$. According to the previous analysis, $\varepsilon_x$ will asymptotically converge to $0$. By Assumptions \ref{Assump: Chap5_BoundedSignal} and \ref{Assump: Chap5_BoundW}, both $\delta V$ and ${V_r\cos{\gamma_r}\cos{\xi_e}}$ are uniformly bounded. Since $\widehat{\gamma}_f\in\left(-\frac{\pi}{2}\text{, }\frac{\pi}{2}\right)$, $\frac{V_r\cos{\gamma_r}\cos{e_\chi}-K_x\varepsilon_x }{\cos{\widehat{\gamma}_f}}+\delta V$ is uniformly bounded, which implies $\xi_x$ is uniformly bounded. 

The final boundary of $\xi_x$ is related to $\omega_V$ and $K_x$. If $\xi_x\left(0\right)=0$, there exists ${\xi}_x \leq \frac{1}{K_x}\left(1-e^{-K_xt}\right)\Vert V_c-V_d\Vert_\infty$, so $\lim_{t\to\infty}{\xi}_x \leq\frac{1}{K_x}\Vert V_c-V_d\Vert_\infty$. Therefore, by tuning $K_x$, the ultimate boundary of ${\xi}_x$ could be changed. In addition,  $\xi_x$ is bounded-input-bounded-output with respect to $V_c-V_d$, and Lemma \ref{Lem: 2ndOrderEstProperties} indicates that $V_c-V_d$ will exponentially converge to $\mathcal{O}\left(\frac{1}{\omega_V}\right)$.  The ultimate boundary of $\xi_x$ can be, therefore, altered by changing $\omega_V$ as well. Since $\lim_{t\to\infty}e_x\to\xi_x$, the ultimate boundary of $e_x$ could thus be regulated by changing $K_x$ and $\omega_V$.

According to (\ref{Eq: Chap5_Auxili_X_Z}), the dynamics of $\xi_z$ are input-to-state stable with respect to $V_f\left(\sin\gamma_d-\sin\gamma_f\right)$. Since $V_f\left(\sin\gamma_d-\sin\gamma_f\right)$ is bounded, $\xi_z$ is ultimately bounded. According to our previous analysis, one knows that $\lim_{t\to\infty}\gamma_f\to\gamma_c$, while $\gamma_c$ is from a 2nd-order command filter (\ref{Eq: Chap5_VGa_CMD}). As $\gamma_c-\gamma_d=\mathcal{O}\left(\frac{1}{\omega_\gamma}\right)$, $\lim_{t\to\infty}\gamma_f-\gamma_d\to\mathcal{O}\left(\frac{1}{\omega_\gamma}\right)$, and thus $\sin\gamma_d-\sin\gamma_f$ will eventually be limited by certain small boundaries related to $\omega_\gamma$. Therefore, $\xi_z$ is uniformly ultimately bounded. 
Choosing $\xi_z\left(0\right)=0$ implies $ {\xi}_z \leq \frac{1}{K_z}\left(1-e^{-K_zt}\right)\Vert V_f\Vert_\infty\Vert \sin{\gamma_d}-\sin{\gamma_f}\Vert_\infty$, so $\lim_{t\to\infty} {\xi}_z \leq \frac{1}{K_z}\Vert V_f\Vert_\infty\Vert \sin{\gamma_d}-\sin{\gamma_f}\Vert_\infty$.
As $\lim_{t\to\infty}e_x\to\xi_x$, $\lim_{t\to\infty}e_z\to\xi_z$, and both $\xi_x$ and $\xi_z$ are uniformly ultimately bounded,  $e_x$ and $e_z$  will be ultimately bounded, respectively.
\end{proof}

\subsection{Inner-loop attitude control}
The inner-loop dynamics of a follower aircraft in close formation is 
\begin{equation}
\left\{\begin{array}{lcl}
\dot{\mu}_f	  &=&p\frac{\cos{\alpha_f}}{\cos{\beta_f}}+r\frac{\sin{\alpha_f}}{\cos{\beta_f}}+\dot{\gamma}_f\cos{\mu_f}\tan{\beta_f}+\dot{\chi}_f\left(\sin{\gamma_f}+\sin{\mu_f}\cos{\gamma_f}\tan{\beta_f}\right)\\
\dot{\alpha}_f   &=&q-p\tan{\beta_f}\cos{\alpha_f}-r\sin{\alpha_f}\tan{\beta_f}-\dot{\gamma}_f\frac{\cos{\mu_f}}{\cos{\beta_f}}-\dot{\chi}_f\frac{\sin{\mu_f}\cos{\gamma_f}}{\cos{\beta_f}}\\
\dot{\beta}_f    &=&p\sin{\alpha_f}-r\cos{\alpha_f}-\dot{\gamma}_f\sin{\mu_f}+\dot{\chi}_f\cos{\mu_f}\cos{\gamma_f} \\
\dot{p} &=& \frac{\left(I_y-I_z\right)I_z-I_{xz}^2}{I_xI_z-I_{xz}^2}pq+\frac{\left(I_x-I_y+I_z\right)I_{xz}}{I_xI_z-I_{xz}^2}pq+\frac{I_z}{I_xI_z-I_{xz}^2}\left(\mathcal{L}+\Delta \mathcal{L}\right)+\frac{I_{xz}}{I_xI_z-I_{xz}^2}\left(\mathcal{N}+\Delta \mathcal{N}\right)\\
\dot{q} &=& \frac{I_z-I_x}{\left(I_xI_z-I_{xz}^2\right)I_y}pr-\frac{I_{xz}}{\left(I_xI_z-I_{xz}^2\right)I_y}\left(p^2-r^2\right)+\frac{1}{\left(I_xI_z-I_{xz}^2\right)I_y}\left(\mathcal{M}+\Delta \mathcal{M}\right)\\
\dot{r}  &=& \frac{\left(I_x-I_y\right)I_x+I_{xz}^2}{I_xI_z-I_{xz}^2}pq-\frac{\left(I_x-I_y+I_z\right)I_{xz}}{I_xI_z-I_{xz}^2}rq+\frac{I_{xz}}{I_xI_z-I_{xz}^2}\left(\mathcal{L}+\Delta \mathcal{L}\right)+\frac{I_{x}}{I_xI_z-I_{xz}^2}\left(\mathcal{N}+\Delta \mathcal{N}\right)
\end{array}
\right. \label{Eq: Innerloop-Dyn}
\end{equation} 
where $\mu_f$ is the bank angle, $\alpha_f$ is the angle of attack, $\beta_f$ is the sideslip angle, $p$, $q$, and $r$ are angular rates in the body frame, $\mathcal{L}$, $\mathcal{M}$, and $\mathcal{N}$ are moments, while $\Delta \mathcal{L}$, $\Delta \mathcal{M}$, and $\Delta \mathcal{N}$ are the moment disturbances induced by trailing vortices. The presented attitude controller will rapidly stabilize $\alpha_f$ and $\mu_f$ to their desired values $\alpha_d$ and $\mu_d$, respectively. Meanwhile, the sideslip angle $\beta_f$ will be kept to be $0$. Regarding the first objective, a command filter is employed again to estimate the derivatives of $\alpha_d$ and $\mu_d$, respectively.
\begin{equation}
\left[
\begin{array}{c}
\dot{\mathscr{S}}_c \\
\ddot{\mathscr{S}}_c
\end{array}\right]=
\left[\begin{array}{cc}
0\text{,}& 1 \\
-\omega_\mathscr{S}^2\text{,} &-2\zeta_\mathscr{S}\omega_\mathscr{S}
\end{array}\right]\left[
\begin{array}{c}
\mathscr{S}_c \\
\dot{\mathscr{S}}_c
\end{array}\right]+\left[
\begin{array}{c}
0 \\
\omega_\mathscr{S}^2
\end{array}\right]\mathscr{S}_d \text{,  }\quad \mathscr{S}\in\left\{\alpha\text{, } \mu\right\} \label{Eq: Chap5_AoAMu_CMD}
\end{equation} 
Define $e_\alpha=\alpha_f-\alpha_d$ and $e_\mu=\mu_f-\mu_d$. Let $\mathbf{e}_\Theta=\left[e_\mu\text{, }e_\alpha\text{, }\beta_f\right]^T$, $\boldsymbol{\Theta}_d=\left[\mu_d\text{, }\alpha_d\text{, }0\right]^T$, $\boldsymbol{\Theta}_c=\left[\mu_c\text{, }\alpha_c\text{, }0\right]^T$, $\boldsymbol{\Psi}=\left[\gamma_f\text{, }\chi_f\right]^T$, and $\boldsymbol{\Omega}=\left[p\text{, }q\text{, }r\right]^T$. According to (\ref{Eq: Innerloop-Dyn}), we have 
\begin{equation}
\mathbf{\dot{e}}_\Theta = \mathbf{G}\boldsymbol{\Omega}+\mathbf{H}\boldsymbol{\dot{\Psi}}-\boldsymbol{\dot{\Theta}}_d \label{Eq: Chap5_Errdyn_Theta}
\end{equation}
where  $\boldsymbol{\dot{\Psi}}$ will be estimated by $\boldsymbol{\widehat{\dot{\Psi}}}=\left[u_\gamma+\widehat{d}_\gamma\text{, }u_\chi+\widehat{d}_\chi \right]^T$, and
\begin{equation*}\footnotesize
\mathbf{G}=\left[\begin{array}{ccc}
\cos{\alpha_f}\sec{\beta_f}  & 0 & \sin{\alpha_f}\sec{\beta_f}\\
-\cos{\alpha_f}\tan{\beta_f} & 1 & -\sin{\alpha_f}\tan{\beta_f} \\
\sin{\alpha_f}			  & 0 & -\cos{\alpha_f}
\end{array}\right]\text{,}\quad
\mathbf{H}=\left[\begin{array}{cc}
 \cos{\mu_f}\tan{\beta_f}   &  \sin{\gamma_f}+\sin{\mu_f}\cos{\gamma_f}\tan{\beta_f} \\
-\cos{\mu_f}\sec{\beta_f}  & -\sin{\mu_f}\cos{\gamma_f}\sec{\beta_f}\\
-\sin{\mu_f}			&    \cos{\mu_f}\cos{\gamma_f}
\end{array}\right] 
\end{equation*}
When Assumptions \ref{Assump: Chap5_Uncer+Distur_Bound} and \ref{Assump: Chap5_SlowDisturb} hold, $\lim_{t\to\infty}\left(u_\gamma+\widehat{d}_\gamma\right)\to\dot{\gamma}$,
$\lim_{t\to\infty}\left(u_\chi+\widehat{d}_\chi \right)\to\dot{\chi} $ as $\lim_{t\to\infty}\widehat{d}_\gamma\to{d}_\gamma$ and $\lim_{t\to\infty}\widehat{d}_\chi\to{d}_\chi$. 

Since Assumption \ref{Assump: Chap5_SlowDisturb} is barely met, $\mathbf{H}\left(\boldsymbol{{\dot{\Psi}}}-\boldsymbol{\widehat{\dot{\Psi}}}\right)$ is treated as model uncertainties. Additionally, $\boldsymbol{\dot{\Theta}}_d$ is replaced by $\boldsymbol{\dot{\Theta}}_c$ in terms of (\ref{Eq: Chap5_AoAMu_CMD}), as it is unavailable. Eventually, the model uncertainties of (\ref{Eq: Chap5_Errdyn_Theta}) are $\mathbf{d}_\Theta = \mathbf{H}\left(\boldsymbol{{\dot{\Psi}}}-\boldsymbol{\widehat{\dot{\Psi}}}\right)+\left(\boldsymbol{\dot{\Theta}}_c-\boldsymbol{\dot{\Theta}}_d\right)$. Define  $\mathbf{u}_\Theta= \mathbf{G}\boldsymbol{\Omega}+\mathbf{H}-\boldsymbol{\widehat{\dot{\Psi}}}$, so\vspace{-3mm}
\begin{equation}
\mathbf{\dot{e}}_\Theta =\mathbf{u}_\Theta+\mathbf{d}_\Theta-\boldsymbol{\dot{\Theta}}_c \label{Eq: Chap5_Errdyn_Theta2}\vspace{-4mm}
\end{equation}
The desired intermediate virtual inputs to stabilize (\ref{Eq: Chap5_Errdyn_Theta2}) are proposed in (\ref{Eq: Chap5_Desired_u_Theta}).\vspace{-3mm}
\begin{equation}
\mathbf{u}_\Theta^d=-\mathbf{K}_\Theta\mathbf{e}_\Theta+\boldsymbol{\dot{\Theta}}_c-\mathbf{\widehat{d}}_\Theta  \label{Eq: Chap5_Desired_u_Theta}\vspace{-4mm}
\end{equation}
where $\mathbf{K}_\Theta=diag\left\{K_\mu\text{, }K_\alpha\text{, }K_\beta\right\}>0$ is a gain matrix and $\mathbf{\widehat{d}}_\Theta$ is the estimation of $\mathbf{d}_\Theta$, which is from \vspace{-10mm}
\begin{equation}
\mathbf{\widehat{d}}_\Theta		    =   \boldsymbol{\lambda}_\Theta+\boldsymbol{\mathcal{T}}_\Theta^{-1}\mathbf{{e}}_\Theta \text{, }\qquad
\boldsymbol{\dot{\lambda}}_\Theta    =  -\boldsymbol{\mathcal{T}}_\Theta^{-1} \boldsymbol{\lambda}_\Theta-\boldsymbol{\mathcal{T}}_\Theta^{-1}\left(\boldsymbol{\mathcal{T}}_\Theta^{-1}\mathbf{{e}}_\Theta+\mathbf{u}_\Theta-\boldsymbol{\dot{\Theta}}_c\right)
\label{Eq: Chap5_D-Theta_Est} \vspace{-3mm}
\end{equation}
where$\boldsymbol{\mathcal{T}}_\Theta=diag\left\{\mathcal{T}_\mu\text{, }\mathcal{T}_\alpha\text{, }\mathcal{T}_\beta\right\}>0$ is a diagonal time constant matrix, and $\boldsymbol{\lambda}_\Theta=\left[\lambda_\mu\text{, }\lambda_\alpha\text{, }\lambda_\beta\right]^T$. 
Since $\mathbf{u}_\Theta= \mathbf{G}\boldsymbol{\Omega}+\mathbf{H}-\boldsymbol{\widehat{\dot{\Psi}}}$, the desired angular rates are given by \vspace{-3mm}
\begin{equation}
\boldsymbol{\Omega}_d= \mathbf{G}^{-1}\left(\mathbf{u}_\Theta^d-\mathbf{H}-\boldsymbol{\widehat{\dot{\Psi}}}\right) \label{Eq: Chap5_CntrlTransform} \vspace{-3mm}
\end{equation}
where $\boldsymbol{\Omega}_d=\left[p_d\text{, }q_d\text{, }r_d\right]^T$. The commanded angular rates $p_c$, $q_c$, and $r_c$ are obtained by (\ref{Eq: Chap5_Cmd_PQR}).
\begin{equation}
\left[
\begin{array}{c}
\dot{\mathscr{S}}_c \\
\ddot{\mathscr{S}}_c
\end{array}\right]=
\left[\begin{array}{cc}
0\text{,}& 1 \\
-\omega_\mathscr{S}^2\text{,} &-2\zeta_\mathscr{S}\omega_\mathscr{S}
\end{array}\right]\left[
\begin{array}{c}
\mathscr{S}_c \\
\dot{\mathscr{S}}_c
\end{array}\right]+\left[
\begin{array}{c}
0 \\
\omega_\mathscr{S}^2
\end{array}\right]\mathscr{S}_d \text{,  } \qquad \mathscr{S}\in\left\{p\text{, } q\text{, } r\right\} \label{Eq: Chap5_Cmd_PQR}
\end{equation} 
where $\mathscr{S}_c\left(0\right)=\mathscr{S}_d\left(0\right)$ for $\mathscr{S}\in\left\{p\text{, } q\text{, } r\right\}$.  Define $\mathbf{e}_\Omega=\boldsymbol{\Omega}-\boldsymbol{\Omega}_c$. According to (\ref{Eq: Innerloop-Dyn}), one has\vspace{-3mm}
\begin{equation}
\mathbf{\dot{e}}_\Omega=-\mathbf{I}^{-1}\boldsymbol{\Omega}\times\mathbf{I}\boldsymbol{\Omega}+\mathbf{I}^{-1}\left(\boldsymbol{\tau}+\Delta{\boldsymbol{\tau}}\right)-\boldsymbol{\dot{\Omega}}_c \label{Eq: Chap5_Omega_ErrDyn}\vspace{-3mm}
\end{equation}
where $\boldsymbol{{\tau}}=\left[\mathcal{L}\text{, }\mathcal{M}\text{, }\mathcal{N}\right]^T$, $\boldsymbol{\Delta\tau}=\left[\Delta\mathcal{L}\text{, }\Delta\mathcal{M}\text{, }\Delta\mathcal{N}\right]^T$, and $\mathbf{I}$ is the inertia matrix of the aircraft.
The control inputs are control surface deflections, including aileron deflection $\delta_a$, elevator deflection $\delta_e$, and rudder deflection $\delta_r$. Let $\boldsymbol{\delta_u}=\left[\delta_a\text{, }\delta_e\text{, }\delta_r\right]^T$, and $\boldsymbol{\tau}=\boldsymbol{\tau}_0+\mathbf{M_\tau}\boldsymbol{\delta_u}$, where $\mathbf{M_\tau}$ is the control derivative matrix and $\boldsymbol{\tau}_0$ is the torque vector at $\boldsymbol{\delta_u}=0$. Both $\boldsymbol{\tau}_0$ and $\mathbf{M_\tau}$ cannot be accurately obtained, so they are approximated using available aerodynamic data from wind tunnel tests. Let  $\boldsymbol{\overline{\tau}}_0$ and $\mathbf{\overline{M}_\tau}$ be the approximate results of $\boldsymbol{\tau}_0$ and $\mathbf{M_\tau}$, respectively, so
\vspace{-4mm}
\begin{equation}
\boldsymbol{\overline{\tau}}=\boldsymbol{\overline{\tau}}_0+\mathbf{\overline{M}_\tau}\boldsymbol{\delta_u} \vspace{-4mm}
\end{equation}
Let $\mathbf{u}_\tau=-\mathbf{I}^{-1}\boldsymbol{\Omega}\times\mathbf{I}\boldsymbol{\Omega}+\mathbf{I}^{-1}\left(\boldsymbol{\overline{\tau}}_0+\mathbf{\overline{M}_\tau}\boldsymbol{\delta_u}\right)=\left[u_p\text{, }u_q\text{, }u_r\right]^T$. The error model (\ref{Eq: Chap5_Omega_ErrDyn}) is rewritten as 
\vspace{-3mm}
\begin{equation}
\mathbf{\dot{e}}_\Omega=\mathbf{u}_\tau+\mathbf{d_\tau}-\boldsymbol{\dot{\Omega}}_c\label{Eq: Chap5_Omega_ErrDyn2}\vspace{-3mm}
\end{equation}
where $\mathbf{d}_{\tau}=\mathbf{I}^{-1}\left(\boldsymbol{\tau}-\boldsymbol{\overline{\tau}}+\Delta\boldsymbol{\tau}\right)$ is the sum of model uncertainties and formation aerodynamic disturbances.  Eventually, the control law for $\mathbf{u}_\tau$ is proposed to be \vspace{-3mm}
\begin{equation}
\mathbf{u}_\tau^d=-\mathbf{K}_{\Omega}{\mathbf{e}}_\Omega-\mathbf{C}_\Omega\mathbf{G}^T\boldsymbol{\varepsilon}_\Theta-\mathbf{\widehat{d}_\tau}+\boldsymbol{\dot{\Omega}}_c  \label{Eq: Chap5_DesiredTorque} \vspace{-3mm}
\end{equation}
where  $\mathbf{K}_{\Omega}=diag\left\{K_p\text{, }K_q\text{, }K_r\right\}>0$ is a diagonal constant gain matrix, $\mathbf{C}_\Omega=diag\left\{c_p\text{, }c_q\text{, }c_r\right\}>0$ is a constant matrix, $\mathbf{\widehat{d}_\tau}$ is the estimation of $\mathbf{d_\tau}$, and $\boldsymbol{\varepsilon}_\Theta=\left[\varepsilon_\mu\text{, }\varepsilon_\alpha\text{, }\varepsilon_\beta\right]^T=\mathbf{e}_\Theta-\boldsymbol{\xi}_\Theta$ where  $\boldsymbol{\xi}_\Theta$ is from\vspace{-10mm}
\begin{equation}
\boldsymbol{\dot{\xi}}_\Theta=-\mathbf{K}_{\Theta}\boldsymbol{{\xi}}_{{\Theta}}+\mathbf{G}\left(\boldsymbol{\Omega}_c-\boldsymbol{\Omega}_d\right) \label{Eq: Chap5_Auxil_Xi_Theta}\vspace{-3mm}
\end{equation}
\begin{figure}[tbp]
\centering
\includegraphics[width=0.9\textwidth]{./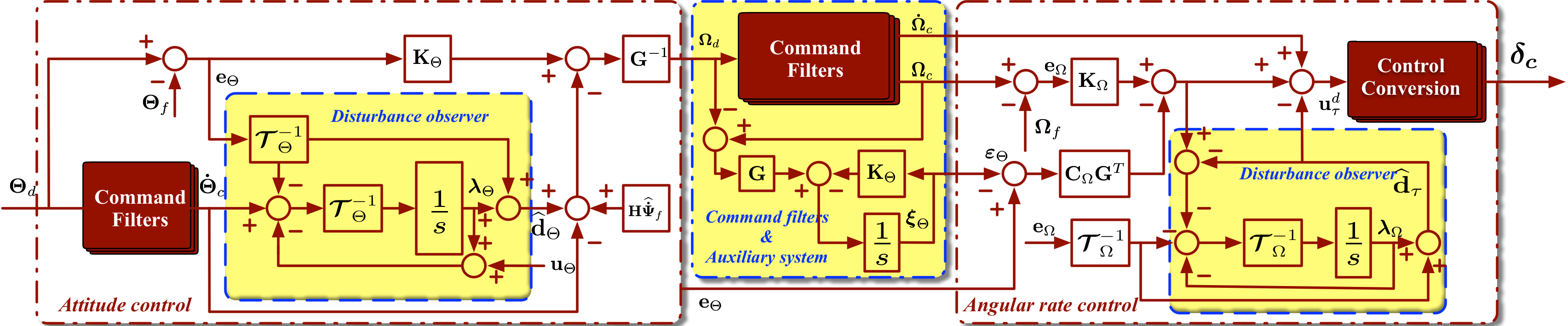}\\ \vspace{-5mm}
\caption{The inner-loop attitude control structure}\vspace{-3mm}
\label{Fig: Chap5_Innerloop}
\end{figure}
The uncertainty and disturbance estimator for  $\mathbf{d_\tau}$ is given by
\begin{equation}
\mathbf{\widehat{d}_\tau}= \boldsymbol{\lambda}_\Omega+\boldsymbol{\mathcal{T}}_\Omega^{-1}\mathbf{{e}}_\Omega \text{, }\quad
\boldsymbol{\dot{\lambda}}_\Omega=-\boldsymbol{\mathcal{T}}_\Omega^{-1}\boldsymbol{\lambda}_\Omega-\boldsymbol{\mathcal{T}}_\Omega^{-1}\left(\boldsymbol{\mathcal{T}}_\Omega^{-1}\mathbf{{e}}_\Omega-\mathbf{K}_{\Omega}{\mathbf{e}}_\Omega-\mathbf{C}_\Omega\mathbf{G}^T\boldsymbol{\varepsilon}_\Theta-\mathbf{\widehat{d}_\tau}\right) \label{Eq: Chap5_Omega_DistEst}
\end{equation}
where $\boldsymbol{\mathcal{T}}_\Omega=diag\left\{\mathcal{T}_p\text{, }\mathcal{T}_q\text{, }\mathcal{T}_r\right\}>0$ is the time constant matrix. Let $\boldsymbol{\delta_{c}}=\left[\delta_{ac}\text{, }\delta_{ec}\text{, }\delta_{rc}\right]^T$ be the commanded control surface deflection vector. Eventually,  we have \vspace{-3mm}
\begin{equation}
\boldsymbol{\delta_{c}}=\mathbf{\overline{M}_\tau}^{-1}\left(\mathbf{I}\mathbf{u}_\tau^d+\boldsymbol{\Omega}\times\mathbf{I}\boldsymbol{\Omega}-\boldsymbol{\overline{\tau}}_0\right) \vspace{-3mm}
\end{equation}
The inner-loop controller is shown in Figure \ref{Fig: Chap5_Innerloop}. The following assumption is introduced for the sake of stability analysis.
\begin{assumption} \label{Assump: Chap5_D_Theta-Tau}
Both $\mathbf{d}_\Theta$ and $\boldsymbol{d}_\tau$ are bounded with slow dynamics, namely $\mathbf{\dot{d}}_\Theta \simeq 0$ and $\mathbf{\dot{d}}_\tau \simeq 0$.
\end{assumption}
Define $\mathbf{\widetilde{d}}_\Theta=\mathbf{\widehat{d}}_\Theta-\mathbf{d}_\Theta$ and $\mathbf{\widetilde{d}}_\tau=\mathbf{\widehat{d}}_\tau-\mathbf{d}_\tau$. Under Assumption \ref{Assump: Chap5_D_Theta-Tau}, we have 
\begin{equation}
\mathbf{\dot{\widetilde{d}}}_\Theta=-\boldsymbol{\mathcal{T}}_\Theta^{-1}\mathbf{\widetilde{d}}_\Theta  \text{, }\qquad
\mathbf{\dot{\widetilde{d}}}_\tau=-\boldsymbol{\mathcal{T}}_\Omega^{-1}\mathbf{\widetilde{d}_\tau}\label{Eq: Chap5_D-Theta-Tau_EstErr} 
\end{equation}
\begin{lemma} \label{Lem: Chap5_InnerStab}
If Assumption \ref{Assump: Chap5_D_Theta-Tau} holds, and $K_\mu$, $K_\alpha$, $K_\beta$, $K_p$, $K_q$, $K_r$, $\mathcal{T}_\mu$, $\mathcal{T}_\alpha$, $\mathcal{T}_\beta$, $\mathcal{T}_p$, $\mathcal{T}_q$, $\mathcal{T}_r$, $c_p$, $c_q$, and $c_r>0$, $\boldsymbol{\varepsilon}_\Theta$ and ${\mathbf{e}}_\Omega$ will exponentially converge to zero, namely $\lim_{t\to\infty}{\mathbf{e}}_\Theta\to\boldsymbol{\xi}_\Theta$, and  $\sigma_c-\sigma_d=\mathcal{O}\left(\frac{1}{\omega_\sigma}\right)$ and ${\xi}_\sigma=\mathcal{O}\left(\frac{1}{\omega_\sigma}\right)$ with $\sigma\in\left\{p\text{, } q\text{, } r\right\}$, so ${\mathbf{e}}_\Theta$ is ultimately bounded.
\end{lemma}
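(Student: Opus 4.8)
The plan is to mirror the two-step structure of the proof of Theorem~\ref{Thm: OuterForm}, now for the cascade made of $\boldsymbol{\varepsilon}_\Theta$, $\mathbf{e}_\Omega$, $\mathbf{\widetilde{d}}_\Theta$, and $\mathbf{\widetilde{d}}_\tau$. First I would insert $\mathbf{u}_\Theta=\mathbf{u}_\Theta^d$ from (\ref{Eq: Chap5_Desired_u_Theta}) and $\mathbf{u}_\tau=\mathbf{u}_\tau^d$ from (\ref{Eq: Chap5_DesiredTorque}) into the error models (\ref{Eq: Chap5_Errdyn_Theta2}) and (\ref{Eq: Chap5_Omega_ErrDyn2}). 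Using $\boldsymbol{\Omega}=\boldsymbol{\Omega}_c+\mathbf{e}_\Omega$, the definition of $\boldsymbol{\Omega}_d$ in (\ref{Eq: Chap5_CntrlTransform}), and the auxiliary dynamics (\ref{Eq: Chap5_Auxil_Xi_Theta}), the intermediate command term $\mathbf{G}(\boldsymbol{\Omega}_c-\boldsymbol{\Omega}_d)$ is absorbed into $\boldsymbol{\xi}_\Theta$ and one is left with the closed loop
\begin{equation*}
\dot{\boldsymbol{\varepsilon}}_\Theta=-\mathbf{K}_\Theta\boldsymbol{\varepsilon}_\Theta+\mathbf{G}\mathbf{e}_\Omega-\mathbf{\widetilde{d}}_\Theta\text{,}\qquad \mathbf{\dot{e}}_\Omega=-\mathbf{K}_\Omega\mathbf{e}_\Omega-\mathbf{C}_\Omega\mathbf{G}^T\boldsymbol{\varepsilon}_\Theta-\mathbf{\widetilde{d}}_\tau\text{,}
\end{equation*}
together with the decoupled, exponentially stable estimation-error dynamics (\ref{Eq: Chap5_D-Theta-Tau_EstErr}).

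For the first step I would take the Lyapunov candidate $\mathbb{V}=\frac{1}{2}\boldsymbol{\varepsilon}_\Theta^T\boldsymbol{\varepsilon}_\Theta+\frac{1}{2}\mathbf{e}_\Omega^T\mathbf{C}_\Omega^{-1}\mathbf{e}_\Omega+\frac{1}{2}\mathbf{\widetilde{d}}_\Theta^T\mathbf{\widetilde{d}}_\Theta+\frac{1}{2}\mathbf{\widetilde{d}}_\tau^T\mathbf{C}_\Omega^{-1}\mathbf{\widetilde{d}}_\tau$. The weighting by $\mathbf{C}_\Omega^{-1}$ is chosen precisely so that the indefinite cross terms cancel, since $\boldsymbol{\varepsilon}_\Theta^T\mathbf{G}\mathbf{e}_\Omega-\mathbf{e}_\Omega^T\mathbf{C}_\Omega^{-1}\mathbf{C}_\Omega\mathbf{G}^T\boldsymbol{\varepsilon}_\Theta=0$. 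Differentiating $\mathbb{V}$ along the closed loop and using (\ref{Eq: Chap5_D-Theta-Tau_EstErr}) leaves $\dot{\mathbb{V}}$ equal to minus the sum of two quadratic forms, one in $(\boldsymbol{\varepsilon}_\Theta,\mathbf{\widetilde{d}}_\Theta)$ with coefficient matrix $\left[\begin{smallmatrix}\mathbf{K}_\Theta & \frac{1}{2}\mathbf{I}\\ \frac{1}{2}\mathbf{I} & \boldsymbol{\mathcal{T}}_\Theta^{-1}\end{smallmatrix}\right]$ and one in $(\mathbf{e}_\Omega,\mathbf{\widetilde{d}}_\tau)$ with coefficient matrix $\mathbf{C}_\Omega^{-1}\!\left[\begin{smallmatrix}\mathbf{K}_\Omega & \frac{1}{2}\mathbf{I}\\ \frac{1}{2}\mathbf{I} & \boldsymbol{\mathcal{T}}_\Omega^{-1}\end{smallmatrix}\right]$; since all these matrices are diagonal, their positive definiteness is checked entrywise exactly as in the proof of Theorem~\ref{Thm: OuterForm}. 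Hence $\dot{\mathbb{V}}\le-\kappa\mathbb{V}$ for some $\kappa>0$, so $\boldsymbol{\varepsilon}_\Theta$, $\mathbf{e}_\Omega$, $\mathbf{\widetilde{d}}_\Theta$, and $\mathbf{\widetilde{d}}_\tau$ all decay exponentially, which gives $\lim_{t\to\infty}\mathbf{e}_\Theta\to\boldsymbol{\xi}_\Theta$.

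For the second step I would run a routine ``all signals bounded'' argument: the bounds from step one, Assumptions \ref{Assump: Chap5_BoundedSignal}--\ref{Assump: Chap5_D_Theta-Tau}, Lemmas \ref{Lem: Chap5_DOB_General} and \ref{Lem: 2ndOrderEstProperties}, and the boundedness and invertibility of $\mathbf{G}$ (guaranteed because $\beta_f$ is regulated near zero while $\alpha_f$ and $\mu_f$ stay in a region where $\mathbf{G}$ is well conditioned) imply that $\boldsymbol{\Omega}_d$ in (\ref{Eq: Chap5_CntrlTransform}) and its first two derivatives are bounded. Applying Lemma \ref{Lem: 2ndOrderEstProperties} to the command filter (\ref{Eq: Chap5_Cmd_PQR}) then gives $\sigma_c-\sigma_d=\mathcal{O}(1/\omega_\sigma)$ for $\sigma\in\{p,q,r\}$, hence $\boldsymbol{\Omega}_c-\boldsymbol{\Omega}_d=\mathcal{O}(1/\omega_\sigma)$. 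Finally $\boldsymbol{\xi}_\Theta$ is the state of the exponentially stable linear filter (\ref{Eq: Chap5_Auxil_Xi_Theta}), which is input-to-state stable with respect to $\mathbf{G}(\boldsymbol{\Omega}_c-\boldsymbol{\Omega}_d)$; since this input is $\mathcal{O}(1/\omega_\sigma)$ and $\mathbf{G}$ is bounded, $\boldsymbol{\xi}_\Theta=\mathcal{O}(1/\omega_\sigma)$, and combining this with $\lim_{t\to\infty}\mathbf{e}_\Theta\to\boldsymbol{\xi}_\Theta$ shows that $\mathbf{e}_\Theta$ is ultimately bounded by a quantity that can be made arbitrarily small by enlarging $\omega_\sigma$. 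The main obstacle is this ``all signals bounded'' step --- in particular establishing boundedness of $\dot{\boldsymbol{\Omega}}_d$ and $\ddot{\boldsymbol{\Omega}}_d$, which requires carefully propagating derivatives through $\mathbf{G}^{-1}$ and the nested command filters and confirming that $\mathbf{G}$ never becomes singular along the trajectories; the Lyapunov step itself is routine once one notices the $\mathbf{C}_\Omega^{-1}$ weighting that cancels the cross terms.
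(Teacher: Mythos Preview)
Your first step is exactly what the paper does: the same closed-loop error system, the same Lyapunov candidate with the $\mathbf{C}_\Omega^{-1}$ weighting to cancel the $\mathbf{G}$ cross terms, and the same componentwise positive-definiteness check leading to exponential convergence of $\boldsymbol{\varepsilon}_\Theta$, $\mathbf{e}_\Omega$, $\mathbf{\widetilde d}_\Theta$, $\mathbf{\widetilde d}_\tau$.

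Your second step reaches the same conclusions but by a different mechanism. The paper does not run an ``all signals bounded'' argument to justify applying Lemma~\ref{Lem: 2ndOrderEstProperties} to $\boldsymbol{\Omega}_d$ and then feed the result through ISS of (\ref{Eq: Chap5_Auxil_Xi_Theta}). Instead it writes the coupled system $(\boldsymbol{\xi}_\Theta,\boldsymbol{\Omega}_c,\dot{\boldsymbol{\Omega}}_c/\omega_\Omega)$ from (\ref{Eq: Chap5_Auxil_Xi_Theta}) and (\ref{Eq: Chap5_Cmd_PQR}) as a singularly perturbed model (after setting $\omega_p=\omega_q=\omega_r=\omega_\Omega$ for simplicity), identifies the reduced system as $\dot{\bar{\boldsymbol\xi}}_\Theta=-\mathbf{K}_\Theta\bar{\boldsymbol\xi}_\Theta$, and invokes Tikhonov's theorem to get $\Vert\boldsymbol{\xi}_\Theta\Vert=\mathcal{O}(1/\omega_\Omega)$ directly; the claim $\sigma_c-\sigma_d=\mathcal{O}(1/\omega_\sigma)$ is then read off from Lemma~\ref{Lem: 2ndOrderEstProperties}. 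The trade-off is the one you anticipated: your route is more elementary but you have to chase boundedness of $\dot{\boldsymbol{\Omega}}_d$ and $\ddot{\boldsymbol{\Omega}}_d$ through $\mathbf{G}^{-1}$ and the nested filters, whereas the paper's singular-perturbation argument avoids that bookkeeping at the cost of invoking Tikhonov and making the equal-frequency simplification.
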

In real implementations, it is only required that $\mathbf{d}_\Theta$, $\boldsymbol{d}_\tau$, and their derivatives are bounded. If $\mathbf{\dot{d}}_\Theta \simeq 0$ and $\boldsymbol{\dot{d}}_\tau \simeq 0$ fail to exist,  ${\xi}_\sigma=\mathcal{O}\left(\frac{1}{\omega_\sigma}\right)$ still holds uniformly. However, instead of achieving $\lim_{t\to\infty}\boldsymbol{\varepsilon}_\Omega\to0$, we can only ensure $\Vert{\boldsymbol{\varepsilon}}_\Theta-\boldsymbol{\bar{e}}_\Theta\Vert= \mathcal{O}\left(\epsilon_1\right)$ and $\Vert\boldsymbol{e}_\Omega-\boldsymbol{\bar{e}}_\Omega\Vert= \mathcal{O}\left(\epsilon_1\right)$ where $\epsilon_1$ is a certain positive small value related to the time constants of the disturbance observers (\ref{Eq: Chap5_D-Theta_Est}) and (\ref{Eq: Chap5_Omega_DistEst}), and $\boldsymbol{\bar{e}}_\Theta$ and $\boldsymbol{\bar{e}}_\Omega$ are the attitude tracking errors from the standard backstepping design. The dynamics of  $\boldsymbol{\bar{e}}_\Theta$ and $\boldsymbol{\bar{e}}_\Omega$  are shown in (\ref{Eq: Chap5_ClosedInnerLoopStandard}). Obviously, there exist $\lim_{t\to\infty}\boldsymbol{\bar{e}}_\Theta\to0$ and $\lim_{t\to\infty}\boldsymbol{\bar{e}}_\Omega\to0$. Since $\boldsymbol{\varepsilon}_\Theta=\boldsymbol{e}_\Theta-\boldsymbol{\xi}_\Theta$, $\boldsymbol{{e}}_\Theta$ will be ultimately bounded. This conclusion is summarized in Proposition \ref{Prop: Chap5_InnerStab_NoZeroDer}.
\begin{equation}
\boldsymbol{\dot{\bar{e}}}_\Theta = -\mathbf{K}_{\Theta}\boldsymbol{\bar{e}}_\Theta +\mathbf{G}\mathbf{\bar{e}}_\Omega\text{, }\qquad
\mathbf{\dot{\bar{e}}}_\Omega = -\mathbf{K}_{\Omega}{\mathbf{\bar{e}}}_\Omega-\mathbf{C}_\Omega\mathbf{G}^T\boldsymbol{\bar{e}}_\Theta\label{Eq: Chap5_ClosedInnerLoopStandard}
\end{equation}

\begin{proposition} \label{Prop: Chap5_InnerStab_NoZeroDer}
Assume $\mathbf{d}_\Theta$, $\boldsymbol{d}_\tau$, and their derivatives are bounded. The proposed inner-loop attitude control law composed of (\ref{Eq: Chap5_Desired_u_Theta}), (\ref{Eq: Chap5_D-Theta_Est}), (\ref{Eq: Chap5_CntrlTransform}), (\ref{Eq: Chap5_Cmd_PQR}), (\ref{Eq: Chap5_DesiredTorque}), and (\ref{Eq: Chap5_Omega_DistEst}) will make $\Vert{\boldsymbol{\varepsilon}}_\Theta-\boldsymbol{\bar{e}}_\Theta\Vert= \mathcal{O}\left(\epsilon_1\right)$ and $\Vert\boldsymbol{e}_\Omega-\boldsymbol{\bar{e}}_\Omega\Vert= \mathcal{O}\left(\epsilon_1\right)$ uniformly hold, where $\epsilon_1=\max\left\{\mathcal{T}_\mu\text{, }\mathcal{T}_\alpha\text{, }\mathcal{T}_\beta\text{, }\mathcal{T}_p\text{, }\mathcal{T}_q\text{, }\mathcal{T}_r\right\}$. Furthermore, $\boldsymbol{e}_\Theta$ will be uniformly ultimately bounded, and $\Vert{\boldsymbol{e}}_\Theta-\boldsymbol{\bar{e}}_\Theta\Vert= \mathcal{O}\left(\epsilon_2\right)$ where $\epsilon_2=\max\left\{\epsilon_1\text{, }\frac{1}{\omega_p}\text{, }\frac{1}{\omega_q}\text{, }\frac{1}{\omega_r}\right\}$.
\end{proposition}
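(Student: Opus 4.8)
The plan is to treat this proposition as a robustness statement about the inner loop: the actual closed loop differs from the ideal backstepping closed loop (\ref{Eq: Chap5_ClosedInnerLoopStandard}) only through the disturbance-observer errors $\mathbf{\widetilde{d}}_\Theta$ and $\mathbf{\widetilde{d}}_\tau$, which by Lemma \ref{Lem: Chap5_DOB_General} are $\mathcal{O}(\epsilon_1)$ once the requirement $\mathbf{\dot d}_\Theta\simeq 0$, $\mathbf{\dot d}_\tau\simeq 0$ of Assumption \ref{Assump: Chap5_D_Theta-Tau} is relaxed to merely bounded derivatives, and through the command-filter error $\boldsymbol{\Omega}_c-\boldsymbol{\Omega}_d$, which by Lemma \ref{Lem: 2ndOrderEstProperties} is $\mathcal{O}(1/\omega_\sigma)$. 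So the proof mirrors that of Lemma \ref{Lem: Chap5_InnerStab} but carries these two small perturbations through the analysis.

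First I would substitute the control laws (\ref{Eq: Chap5_Desired_u_Theta}), (\ref{Eq: Chap5_CntrlTransform}), (\ref{Eq: Chap5_DesiredTorque}) into the error models (\ref{Eq: Chap5_Errdyn_Theta2}) and (\ref{Eq: Chap5_Omega_ErrDyn2}). Using $\mathbf{u}_\Theta-\mathbf{u}_\Theta^d=\mathbf{G}(\mathbf{e}_\Omega+\boldsymbol{\Omega}_c-\boldsymbol{\Omega}_d)$, the identity $\mathbf{u}_\tau=\mathbf{u}_\tau^d$ (exact, since $\overline{\mathbf{M}}_\tau$ is invertible and $\boldsymbol{\delta}_c$ inverts it), the auxiliary system (\ref{Eq: Chap5_Auxil_Xi_Theta}) with $\boldsymbol{\varepsilon}_\Theta=\mathbf{e}_\Theta-\boldsymbol{\xi}_\Theta$ to cancel the $\mathbf{G}(\boldsymbol{\Omega}_c-\boldsymbol{\Omega}_d)$ term, and (\ref{Eq: Chap5_D-Theta-Tau_EstErr}), I obtain
\begin{equation*}
\dot{\boldsymbol{\varepsilon}}_\Theta = -\mathbf{K}_\Theta\boldsymbol{\varepsilon}_\Theta + \mathbf{G}\mathbf{e}_\Omega - \mathbf{\widetilde{d}}_\Theta, \qquad \dot{\mathbf{e}}_\Omega = -\mathbf{K}_\Omega\mathbf{e}_\Omega - \mathbf{C}_\Omega\mathbf{G}^T\boldsymbol{\varepsilon}_\Theta - \mathbf{\widetilde{d}}_\tau.
\end{equation*}
Subtracting (\ref{Eq: Chap5_ClosedInnerLoopStandard}) and setting $\mathbf{z}_\Theta=\boldsymbol{\varepsilon}_\Theta-\boldsymbol{\bar{e}}_\Theta$, $\mathbf{z}_\Omega=\mathbf{e}_\Omega-\boldsymbol{\bar{e}}_\Omega$ gives the same homogeneous interconnection now forced by $-\mathbf{\widetilde{d}}_\Theta$ and $-\mathbf{\widetilde{d}}_\tau$.

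Next I would take the composite Lyapunov function $\mathbb{V}=\tfrac12\mathbf{z}_\Theta^T\mathbf{z}_\Theta+\tfrac12\mathbf{z}_\Omega^T\mathbf{C}_\Omega^{-1}\mathbf{z}_\Omega$; the cross terms $\mathbf{z}_\Theta^T\mathbf{G}\mathbf{z}_\Omega$ and $-\mathbf{z}_\Omega^T\mathbf{G}^T\mathbf{z}_\Theta$ cancel (transpose of a scalar), so $\dot{\mathbb{V}}\le-\lambda\mathbb{V}+c\bigl(\Vert\mathbf{\widetilde{d}}_\Theta\Vert+\Vert\mathbf{\widetilde{d}}_\tau\Vert\bigr)\sqrt{\mathbb{V}}$ with $\lambda,c>0$ depending on $\mathbf{K}_\Theta$, $\mathbf{K}_\Omega$, $\mathbf{C}_\Omega$ and on uniform bounds for $\mathbf{G}$ (valid because $\beta_f$ is kept near zero, so $\mathbf{G}$, $\mathbf{G}^{-1}$ stay bounded and $\det\mathbf{G}=-\sec\beta_f$ is bounded away from zero). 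Lemma \ref{Lem: Chap5_DOB_General} applied to (\ref{Eq: Chap5_D-Theta-Tau_EstErr}) with bounded $\mathbf{\dot d}_\Theta$, $\mathbf{\dot d}_\tau$ yields $\Vert\mathbf{\widetilde{d}}_\Theta\Vert,\Vert\mathbf{\widetilde{d}}_\tau\Vert=\mathcal{O}(\epsilon_1)$ uniformly; a comparison lemma then gives $\Vert\mathbf{z}_\Theta\Vert,\Vert\mathbf{z}_\Omega\Vert=\mathcal{O}(\epsilon_1)$ uniformly, which is the first claim. Since (\ref{Eq: Chap5_ClosedInnerLoopStandard}) is exponentially stable, $\boldsymbol{\bar{e}}_\Theta,\boldsymbol{\bar{e}}_\Omega\to 0$, so $\boldsymbol{\varepsilon}_\Theta$ and $\mathbf{e}_\Omega$ are uniformly ultimately bounded. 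For the second claim I would write $\mathbf{e}_\Theta=\boldsymbol{\varepsilon}_\Theta+\boldsymbol{\xi}_\Theta=\boldsymbol{\bar{e}}_\Theta+\boldsymbol{\xi}_\Theta+\mathcal{O}(\epsilon_1)$ and bound $\boldsymbol{\xi}_\Theta$ from (\ref{Eq: Chap5_Auxil_Xi_Theta}), which is ISS with respect to $\mathbf{G}(\boldsymbol{\Omega}_c-\boldsymbol{\Omega}_d)$; because $\boldsymbol{\Omega}_c$ is $\boldsymbol{\Omega}_d$ passed through the matched-initial-condition second-order filter (\ref{Eq: Chap5_Cmd_PQR}), Lemma \ref{Lem: 2ndOrderEstProperties} gives $\boldsymbol{\Omega}_c-\boldsymbol{\Omega}_d=\mathcal{O}(1/\omega_\sigma)$ and hence $\boldsymbol{\xi}_\Theta=\mathcal{O}(\max\{1/\omega_p,1/\omega_q,1/\omega_r\})$, so $\Vert\mathbf{e}_\Theta-\boldsymbol{\bar{e}}_\Theta\Vert=\mathcal{O}(\epsilon_2)$ and $\mathbf{e}_\Theta$ is uniformly ultimately bounded.

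The main obstacle will be the uniform-in-time bookkeeping rather than any single computation: Lemmas \ref{Lem: Chap5_DOB_General} and \ref{Lem: 2ndOrderEstProperties} deliver their $\mathcal{O}(\cdot)$ bounds only after their own transients decay, so I must fuse them with the exponential decay of the nominal interconnection so that the $\mathcal{O}(\epsilon_1)$ and $\mathcal{O}(\epsilon_2)$ conclusions genuinely hold uniformly for all $t\ge 0$. Alongside this, I must establish once and for all the uniform boundedness of $\mathbf{G}$, $\mathbf{G}^{-1}$, $\mathbf{H}$ and of $\boldsymbol{\Omega}_d$ together with its first two derivatives (needed to invoke Lemma \ref{Lem: 2ndOrderEstProperties} for the $p,q,r$ filter) by chaining the regularity of the outer-loop signals through (\ref{Eq: Chap5_CntrlTransform}) — the only step that is not purely mechanical, since it rests on $\beta_f$ never approaching $\pm\pi/2$ and on well-posedness of the full outer loop.
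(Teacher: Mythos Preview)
Your proposal is correct and reaches the same conclusions, but the route differs from the paper's. The paper casts the closed inner loop directly as a singularly perturbed system: it writes
\[
\left\{\begin{array}{l}
\boldsymbol{\dot{\varepsilon}}_\Theta = -\mathbf{K}_{\Theta}\boldsymbol{\varepsilon}_\Theta +\mathbf{G}\mathbf{e}_\Omega-\mathbf{\widetilde{d}}_\Theta \\
\mathbf{\dot{e}}_\Omega = -\mathbf{K}_{\Omega}{\mathbf{e}}_\Omega-\mathbf{C}_\Omega\mathbf{G}^T\boldsymbol{\varepsilon}_\Theta-\mathbf{\widetilde{d}}_\tau
\end{array}\right.
\qquad
\left\{\begin{array}{l}
\boldsymbol{\mathcal{T}}_\Theta\mathbf{\dot{\widetilde{d}}}_\Theta=-\mathbf{\widetilde{d}}_\Theta -\boldsymbol{\mathcal{T}}_\Theta\mathbf{\dot{d}}_\Theta\\
\boldsymbol{\mathcal{T}}_\Omega\mathbf{\dot{\widetilde{d}}}_\tau=-\mathbf{\widetilde{d}}_\tau-\boldsymbol{\mathcal{T}}_\Omega\mathbf{\dot{d}}_\tau
\end{array}\right.
\]
treats the observer errors as the fast variables with perturbation parameter $\epsilon_1$, observes that the reduced (slow) system is exactly (\ref{Eq: Chap5_ClosedInnerLoopStandard}), and invokes Tikhonov's theorem to obtain $\Vert\boldsymbol{\varepsilon}_\Theta-\boldsymbol{\bar e}_\Theta\Vert=\mathcal{O}(\epsilon_1)$, $\Vert\mathbf{e}_\Omega-\boldsymbol{\bar e}_\Omega\Vert=\mathcal{O}(\epsilon_1)$. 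For the second claim it enlarges the fast subsystem to include the command-filter states (\ref{Eq: Chap5_Cmd_PQR}) as well, again with (\ref{Eq: Chap5_ClosedInnerLoopStandard}) as the reduced system, and reapplies Tikhonov with $\epsilon_2$. You instead subtract (\ref{Eq: Chap5_ClosedInnerLoopStandard}) to get a forced difference system, bound the forcing $\mathbf{\widetilde d}_\Theta,\mathbf{\widetilde d}_\tau$ by Lemma~\ref{Lem: Chap5_DOB_General}, and run a Lyapunov/comparison argument; then handle $\boldsymbol{\xi}_\Theta$ by ISS plus Lemma~\ref{Lem: 2ndOrderEstProperties}. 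Your method is more elementary and self-contained, but the ``uniform-in-time bookkeeping'' obstacle you flag is precisely what Tikhonov's theorem packages: it absorbs the boundary-layer transient (where $\mathbf{\widetilde d}$ is $\mathcal{O}(1)$, not $\mathcal{O}(\epsilon_1)$) into the $\mathcal{O}(\epsilon_1)$ estimate on the slow variables. If you want to avoid singular-perturbation machinery, you will need to argue explicitly that the exponentially fast decay of $\mathbf{\widetilde d}$ on the $\epsilon_1$ time scale contributes only $\mathcal{O}(\epsilon_1)$ to $\mathbf{z}_\Theta,\mathbf{z}_\Omega$ via the variation-of-constants formula; otherwise, simply citing Tikhonov as the paper does is the cleaner path.
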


\section{Simulation verification} \label{Sec: NumSim}
The proposed robust nonlinear close formation controller is validated based on two F-16 aircraft.  A high-fidelity model presented in \cite{Russel2003Tech} will be employed to simulate the nonlinear dynamics of a F-16 aircraft. The aerodynamic effects by close formation flight are characterized using the aerodynamic model developed in \cite{Zhang2017JA}. Note that the aerodynamic data used to build up the nonlinear aircraft model are assumed to be unavailable for the control design. Instead, a global nonlinear parameter modeling technique in \cite{Morelli1998ACC} is employed to calculate the necessary aerodynamic coefficients and $\overline{D}$, $\overline{L}_{0}$, $\overline{L}_\alpha$, $\boldsymbol{\overline{\tau}}_0$, and $\mathbf{\overline{M}_\tau}$.   The aircraft geometry and mass parameters are listed in Table \ref{Tab: AircraftGeoMassPara}, while the necessary aerodynamic parameters are given in Table \ref{Tab: AircraftAeroCoeff}. The numerical simulations are carried out at two scenarios. In the first scenario, the robustness of the disturbance observer-based controller is verified by being compared with the control without disturbance observers (DO). In the second scenario, close formation flight is conducted at different velocities with the same group of control parameters to further confirm the efficacy of the proposed design. 
\begin{table}[tbph]
  \centering
  \caption{\footnotesize Aircraft geometry and mass parameters}\footnotesize\label{Tab: AircraftGeoMassPara}
  \begin{tabular}{llll|llll}
    \toprule \toprule 
    Parameter & Symbol &Value  & Unit &  Parameter & Symbol &Value  & Unit  \\  \toprule
    Wing area &  $S$ &$27.87$  & $m^2$  & Vertical tail height &$h_t$& $3.05$& $m$ \\  
    Vertical tail area & $S_v$ & $5.09$& $m^2$ & Quarter-chord sweep angle &$\Lambda_s$& $0.57$& $rad$\\ 
    Horizontal tail area & $S_{h}$ & $10.034$& $m^2$& Dihedral angle &$\Lambda_d$&$0$& $rad$ \\ 
    Wing span  & $b$ & $9.14$ & $m$ &Gross mass & $m$ & $9295.44$ & $kg$\\ 
    Tail wing span &$b_t$& $5.49$ & $m$ & Roll moment of inertia &$I_x$&$12874.8$&$kg\cdot m^2$\\ 
    Mean aerodynamic chord &$\bar{c}$& $3.45$& $m$ & Pitch moment of inertia &$I_y$&$75673.6$&$kg\cdot m^2$ \\ 
    Root chord &$c_r$& $5.02$ & $m$ & Yaw moment of inertia &$I_z$&$85552.1$&$kg\cdot m^2$ \\ 
    Tip chord &$c_t$& $1.07$ & $m$ & Product moment of inertia &$I_{xz}$&$1331.4$&$kg\cdot m^2$ \\ \toprule 
    \end{tabular}\vspace{-5mm}%
\end{table}
\begin{table}[tbph]
  \centering
  \caption{\footnotesize Aircraft aerodynamic coefficients}\footnotesize\label{Tab: AircraftAeroCoeff}
  \begin{tabular}{lll|lll}
    \toprule 
    Coefficient	 		           & 	     Symbol            &     Value   	& Coefficient	 		           & 	      	Symbol        		&     Value \\   \hline 
    Zero-lift drag coefficient         &  	   $C_{D_0}$         &    $0.02$ 	 	& Rolling moment coefficient   &    $C_{\mathcal{L}_{\delta_{r}}}$   &  $0.02636$\\  
    Oswald efficiency number 	   &	    $e_{o}$	     &   $0.663$ 	 & Pitching moment coefficient    &    $C_{\mathcal{M}_{0}}$   &  $-0.02029$   \\ 
    Section lift curve slope      	   & 	$C_{l_{\alpha}}$   &     $5.3$ 		&  Pitch stiffness  				&    $C_{\mathcal{M}_{\alpha}}$   &  $0.0466$ \\ 
    Lift coefficient 			& 	$C_{L_{0}}$ 	    &    $0.05$ 		&  Pitch damping			&    $C_{\mathcal{M}_{q}}$   &  $-5.159$  \\ 
    Wing lift curve slope        	   & $C_{L_{\alpha}}$    &     $5.3$ 		 & Pitching moment coefficient  &    $C_{\mathcal{M}_{\delta_{e}}}$   &  $-0.60123$	\\  
    Vertical tail efficiency factor   & 		$c_{\eta}$    	&     $0.95$ 	  & Weathercock stability derivative     		& 	$C_{\mathcal{N}_{\beta}}$  	&    $0.2993$ 	\\  
    Dihedral effect       	& $C_{\mathcal{L}_{\beta}}$ &  $-0.1059$ 	&  Cross coupling derivative     			&   	$C_{\mathcal{N}_{p}}$		&    $0.02678$    \\
    Roll damping coefficient     &   $C_{\mathcal{L}_{p}}$&    $-0.4127$   & Yaw damping coefficient       			&   	$C_{\mathcal{N}_{r}}$		&     $-0.36988$ \\ 
    Vertical tail effect coefficient   &  $C_{\mathcal{L}_{r}}$&   $0.0625$ &  Yawing moment coefficient   &   $C_{\mathcal{N}_{\delta_{a}}}$	&     $-0.03349$ \\ 
    Rolling moment coefficient   &  $C_{\mathcal{L}_{\delta_{a}}}$&  $-0.1463$ &      Yawing moment coefficient   &    $C_{\mathcal{N}_{\delta_{r}}}$   &  $-0.081159$ \\ \toprule 
    \end{tabular}%
\end{table}

\begin{table}[tbph]
  \centering
  \caption{\footnotesize Command filter parameters for $\mathbf{L}_c$ and $\dot{\chi}_r$} \footnotesize\label{Tab: ComGen}
  \begin{tabular}{cc|cc|cc|cc}
    \toprule 
    Natural frequency & Value   & Damping ratio &Value &  Natural frequency & Value   & Damping ratio &Value\\  \hline
    $\omega_{l_{x}}$  & 	5 	& $\zeta_{l_{x}}$ & 1	&$\omega_{l_{y}}$  & 	5 	& $\zeta_{l_{y}}$ & 1	\\  
    $\omega_{l_{z}}$  & 	5 	& $\zeta_{l_{z}}$ & 1	&$\omega_{\chi_{r}}$  & 	5 	& $\zeta_{\chi_{r}}$ & 1	\\
    \toprule
    \end{tabular}%
\end{table}

\begin{table}[tbph]
  \centering
  \caption{\footnotesize Outerloop control  parameters} \footnotesize\label{Tab: OuterPara}
  \begin{tabular}{ccc|ccc|ccc}
    \toprule 
    Parameter 	& 	Symbol   	 &Value		  &	Parameter  	& 	Symbol 	& 	Value   	&	   Parameter 		& 		Symbol 			& 	Value\\  \hline
Time constant  & $\mathcal{T}_{Wx}$ & $0.8$&	Control gain  	&   $K_x$ 	& 	$0.3	$	& Control gain		&  		$C_y$ 			&  $10^{-4}$ \\
Time constant  & $\mathcal{T}_{Wy}$ & $0.5$ &	Control gain  	&   $K_z$ 	& 	$0.2	$ 	& Control gain		&  		$C_z$ 			&  $5\times10^{-7}$ \\
Time constant  & $\mathcal{T}_{Wz}$ & $0.4$ &   Control gain  	&   $K_V$ 	& 	$1.75$	 & Natural frequency	&  	  $\omega_V$ 			& 	$8$\\
Time constant  & $\mathcal{T}_{V}$ & $0.25$ &   Control gain  	&   $K_\gamma$ & 	$0.75$  	& Natural frequency	&   $\omega_\gamma$ 		& 	$8$\\
Time constant  & $\mathcal{T}_{\gamma}$&$0.2$ &Control gain  &  $K_\chi$ 	& 	$1.75$	 &Damping ratio	&  	  $\zeta_V$ 			& 	$1$\\
Time constant & $\mathcal{T}_{\chi}$  & $0.2$&Control gain  	&  $C_x$ 		&  $10^{-5}$	 & Damping ratio	&   $\zeta_\gamma$ 		& 	$1$\\  \toprule
    \end{tabular}%
\end{table}

\subsection{Scenario 1: With/without disturbance observers}
The initial conditions of the leader aircraft are $x_l\left(0\right)=45$ $m$, $y_l=-15$ $m$, $z_l=-5015$ $m$, $V_l=200$ $m/s$, $\gamma_l=\chi_l=0$ $deg$.
According to the aerodynamic analysis in \cite{Zhang2017JA}, the optimal relative position vector is selected to be $\left[-36\text{, } 9\text{, } 0\right]^T$ $m$. The parameters of the command filters for generating $\mathbf{L}_c$ and estimating are given in Table \ref{Tab: ComGen}. The initial conditions for the follower aircraft are  $x_f\left(0\right)=45$ $m$, $y_f\left(0\right)=-15$ $m$, $z_f\left(0\right)=-5015$ $m$, $V_f\left(0\right)=200$ $m/s$, $\gamma_f\left(0\right)=\chi_f\left(0\right)=\mu_f\left(0\right)=\beta_f\left(0\right)=0$ $deg$,  $\alpha_f\left(0\right)=2.774$ $deg$, $p=q=r=0$ $rad/s$.  The outerloop and innerloop control parameters are presented in Table \ref{Tab: OuterPara} and \ref{Tab: InnerPara}, respectively. The formation trajectory in the inertial frame is illustrated in Figures \ref{Fig: Traj3D}. From $0$ to $35$ seconds, the formation trajectory is at level and straight flight. From $35$ to $145$ seconds, aircraft in close formation are required to make a turn, and meanwhile reduce their altitudes. After $145$ seconds, level and straight trajectory is recovered. Shown in Figure \ref{Fig: RelPosPos_T} are the postures and relative positions of the leader and follower aircraft at four different time instants under the proposed robust nonlinear control. The follower aircraft is initially far away from its optimal position relative to the leader aircraft. Under the proposed robust nonlinear controller, the follower aircraft is able to quickly catch up the optimal relative position. Shown in Figure \ref{Fig: TopFront_T180} are highlights of the top and front views of the relative positions between the leader and follower aircraft at $180$ seconds. 
\begin{table}[tbph]
  \centering
  \caption{Innerloop control  parameters} \label{Tab: InnerPara}
  \begin{tabular}{ccc|ccc|ccc}
    \toprule 
    Parameter 	& 	Symbol   	  		& Value    &Parameter & 	Symbol 	& Value   			&	  Parameter 	& 	Symbol 		& 	Value \\  \hline
Time constant &$\mathcal{T}_{\mu}$ 	& $0.8$    &Control gain& $K_p$ 	&  $12 $			&Natural frequency	&  	  $\omega_q$ 			& 	$5	$	 	\\
Time constant &$\mathcal{T}_{\alpha}$ 	& $0.8$  & Control gain	& $K_q$ 	&  $7.5 $   		&Natural frequency	&  	  $\omega_r$ 			& 	$5 	$	 	\\
Time constant &$\mathcal{T}_{\beta}$ 	& $0.8$  & Control gain	& $K_r$ 	&   $7.5$   		&Damping ratio		&  	  $\zeta_\mu$ 			& 	$1 $		\\
Time constant &$\mathcal{T}_{p}$ 		&$0.02$	&  Control gain	&  $c_p$ 	&  	$10^{-5}$	  	&   Damping ratio	&         $\zeta_\alpha$ 		& 	$1$\\
Time constant &$\mathcal{T}_{q}$ 		&$0.02$	& Natural frequency&  $\omega_\alpha$& $8$&Damping ratio		&  	  $\zeta_p$ 			& 	$1$ 		\\
Time constant &$\mathcal{T}_{r}$ 		&$0.02$	&Natural frequency	&  $\omega_\mu$ 	& $8$&Damping ratio		&  	  $\zeta_q$ 			& 	$1$ 	\\
Control gain  	&$K_\mu$ 			& $ 5$	&	Control gain	&  $c_r$ 		&  	$10^{-5}$	&Damping ratio		&  	   $\zeta_r$ 			& 	$1 $\\
Control gain  	& $K_\alpha$ 			& $5	$&	 	 Control gain	&  $c_q$ 		&  	$10^{-5}$	 &&&\\
Control gain  	& $K_\beta$			& $5	$ &	  Natural frequency	&  $\omega_p$ 	& 	$25$		&&&\\ \toprule 
    \end{tabular}%
\end{table}
\begin{figure}[htbp]
\centering
\includegraphics[width=0.8\textwidth]{./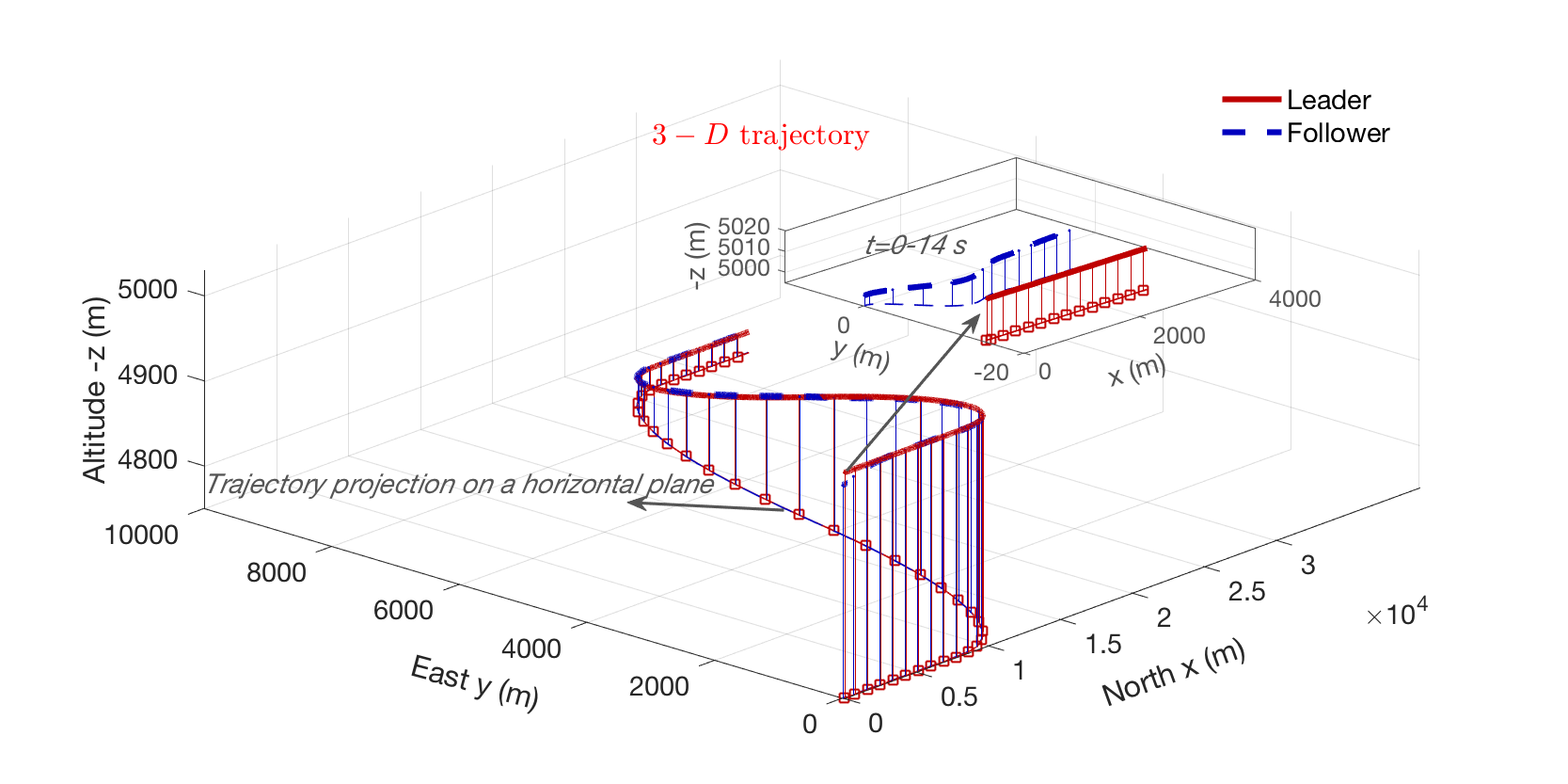}\\ \vspace{-4mm}
\caption{Formation trajectory}
\label{Fig: Traj3D}
\end{figure}
\begin{figure}[h]
\centering
\begin{tabular}{llll}
\includegraphics[width=0.465\textwidth]{./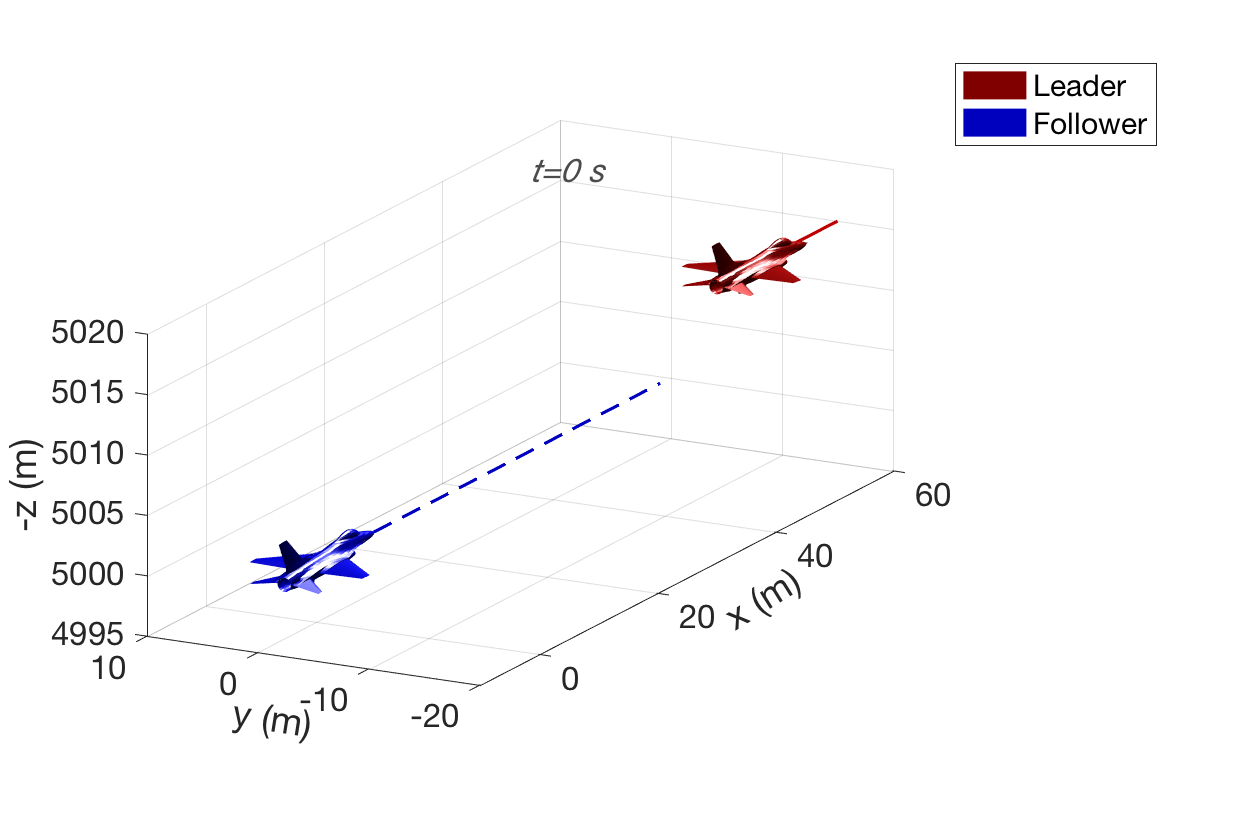}
& &
\includegraphics[width=0.465\textwidth]{./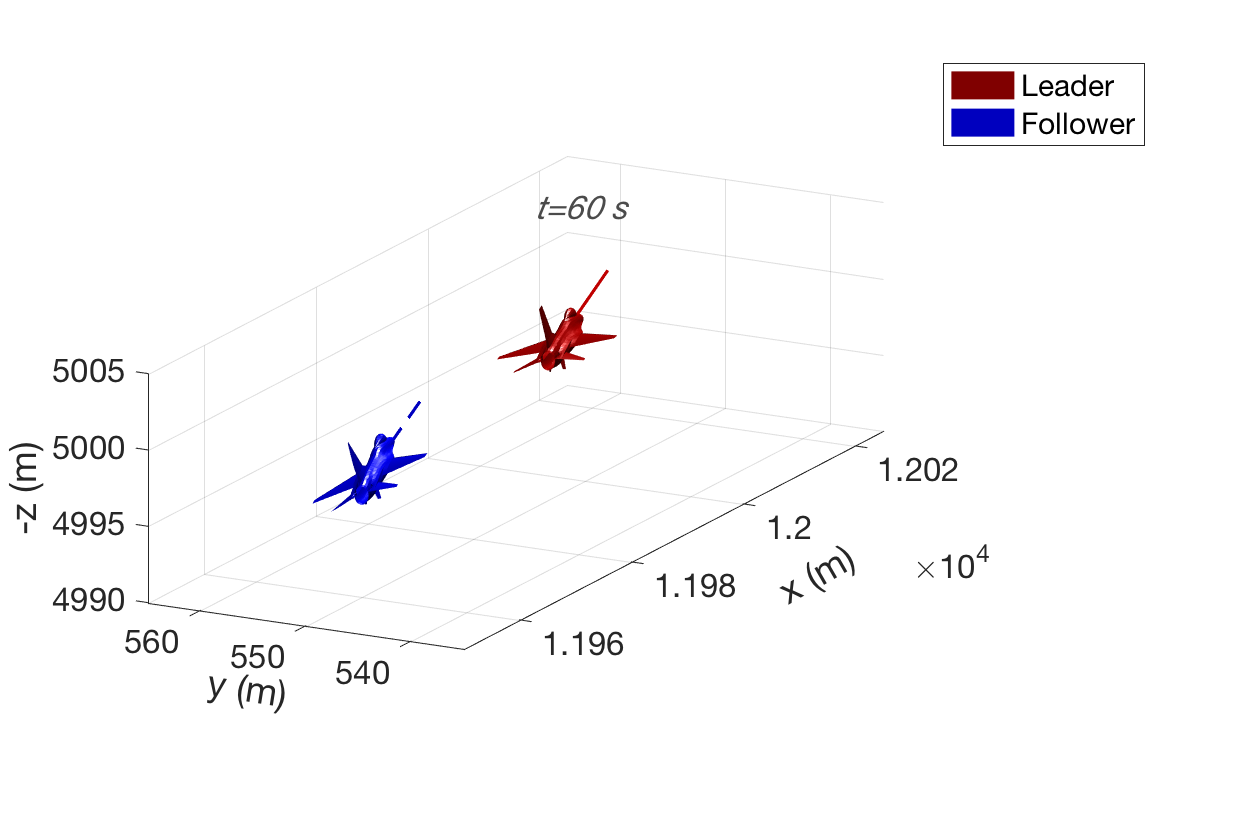} \vspace{-5mm} \\ 
\includegraphics[width=0.465\textwidth]{./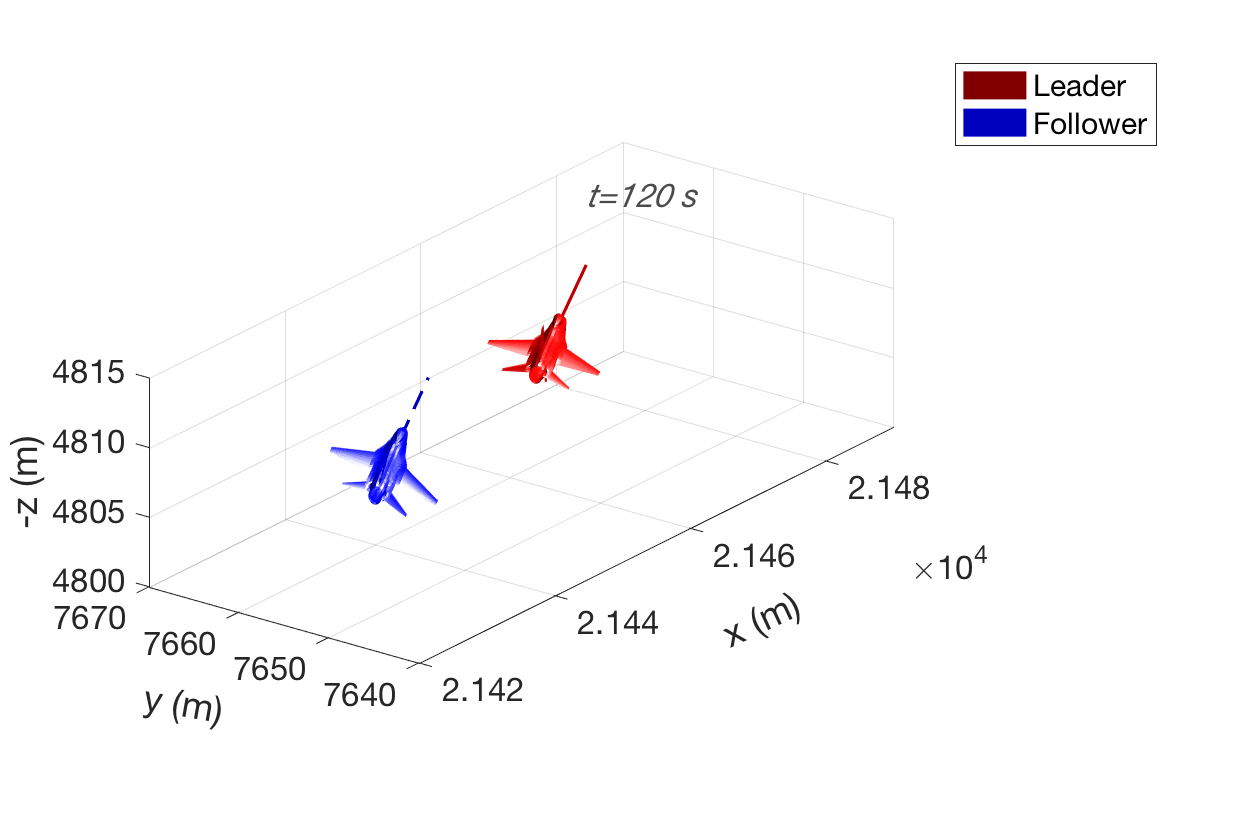}
& &
\includegraphics[width=0.465\textwidth]{./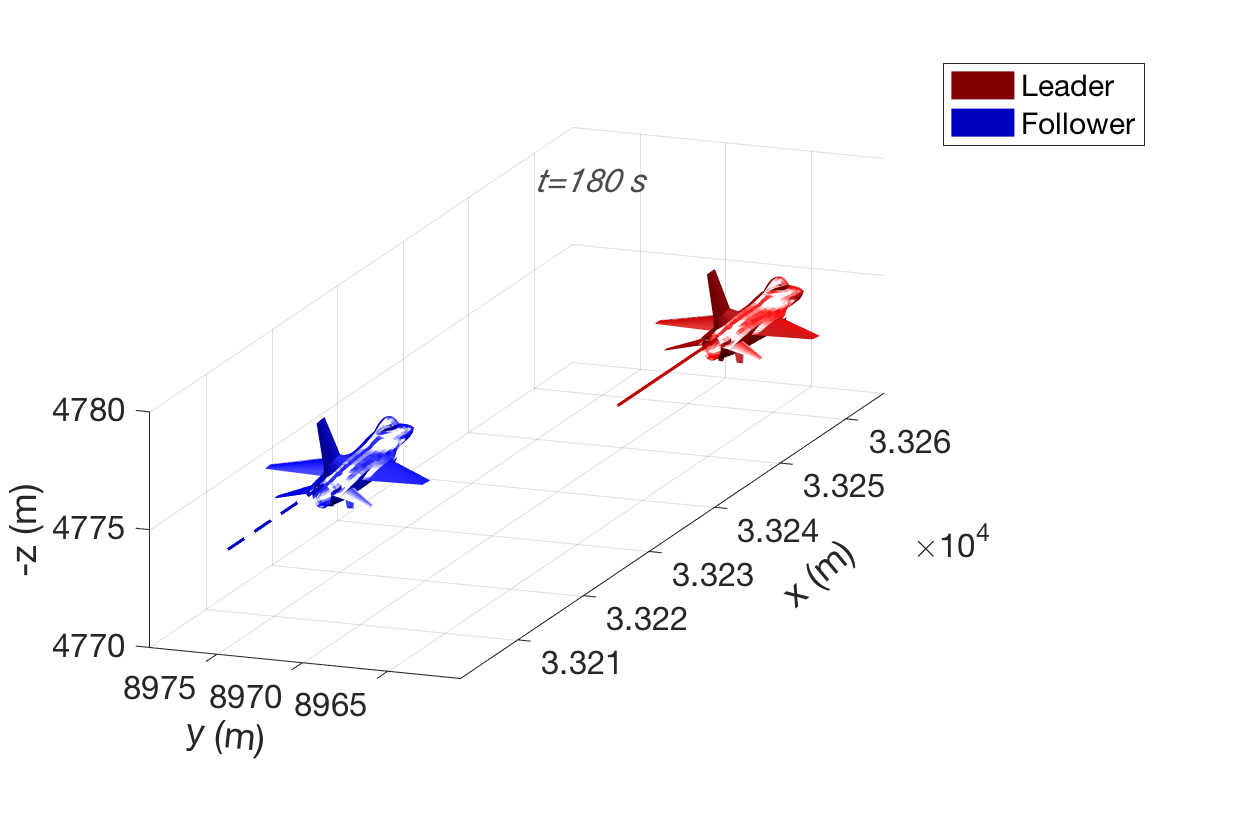}
\end{tabular}\vspace{-6mm}
\caption{Aircraft relative positions and postures at different time instants}
\label{Fig: RelPosPos_T}
\end{figure}
\begin{figure}[h]
\centering
\begin{tabular}{lll}
\includegraphics[width=0.51\textwidth]{./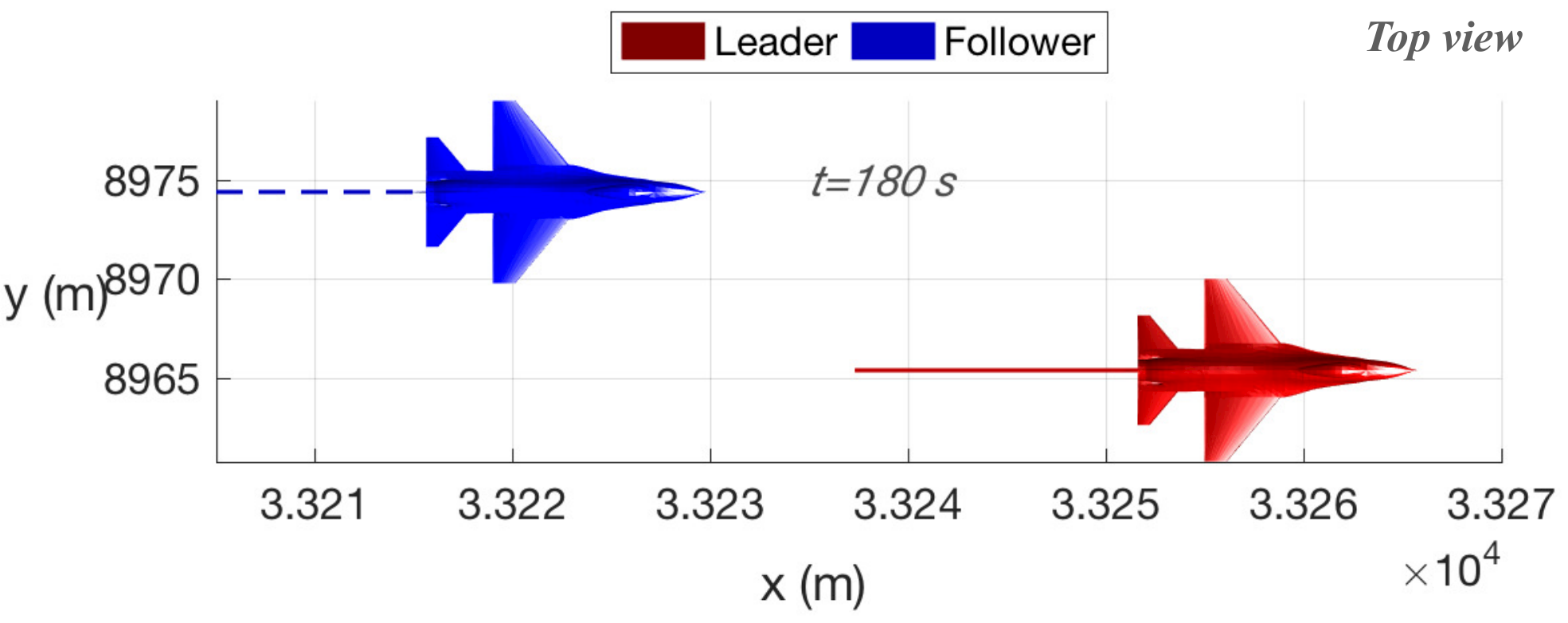}
& &
\includegraphics[width=0.46\textwidth]{./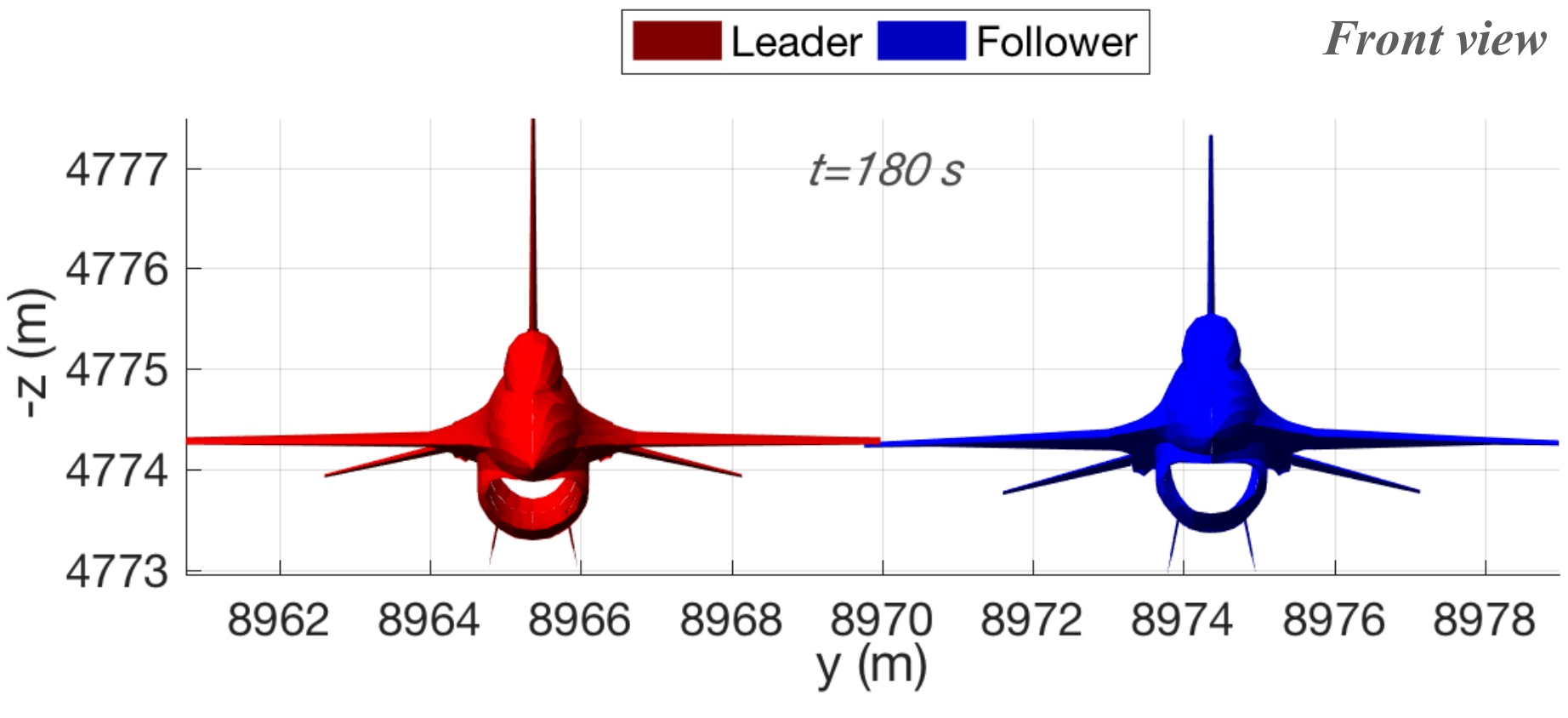}
\end{tabular}\vspace{-5mm}
\caption{Top and front view of close formation flight at $t=180$ $s$}\vspace{-3mm}
\label{Fig: TopFront_T180}
\end{figure}

Two different nonlinear controllers are implemented in the first scenario. One is the proposed robust nonlinear formation controller, the other one is purely the baseline nonlinear formation controller without including disturbance observers (DO). The position tracking errors under the two controllers are shown in Figures  \ref{Fig: ErrX}, \ref{Fig: ErrY}, and \ref{Fig: ErrZ}. According to \cite{Zhang2017JA},  close formation flight will fail if the optimal lateral and vertical relative positions cannot be tracked with in $10\%$ wing span, while $90\%$ of the maximum drag reduction could be retained if the position tracking error is kept under $5\%$ wing span. For efficient close formation flight, it is required that  both the lateral and vertical tracking errors are smaller than at least $5\%$ wing span.  To show the validness of the proposed design,  the regions covered by $5\%$ and $10\%$ wing span are highlighted in  Figures \ref{Fig: ErrY} and \ref{Fig: ErrZ}. Obviously, the nonlinear control without including disturbance observers failed to achieve reasonable close formation flight, as the vertical position tracking errors are much far away from the region of interest. After incorporating disturbance observers (DO), both the lateral and vertical position tracking errors are confined to be smaller than $5\%$ wing span, so accurate close formation flight is fulfilled. Furthermore,  position tracking errors under the proposed robust nonlinear controller would converge to zero when close formation is at level and straight flight. When the leader aircraft is taking maneuvers ($35$ to $145$ $s$), steady tracking errors are observed for lateral position tracking. The steady tracking errors are under $5\%$ wing span, which implies efficient close formation flight is still guaranteed. Shown in Figures \ref{Fig: ErrV}, \ref{Fig: ErrGamma}, and \ref{Fig: ErrChi} are tracking errors of speed, flight path angle, and heading angle, respectively.  The non-zero lateral steady tracking errors from $35$ to $145$ $s$ result from the non-zero tracking errors in heading angle control as shown in Figure \ref{Fig: ErrChi}. When leader aircraft is taking maneuvers, the heading angle will have non-zero tracking errors due to the difference between $\dot{\chi}_r$ and $\dot{\hat{\chi}}_r$. In real implementations, the derivative of the reference heading angle $\chi_r$ is always unavailable, so a second order filter is introduced to approximate $\dot{\chi}_r$. When the leader aircraft is under level and straight flight, $\chi_r$ is constant, which implies $\dot{\chi}_r\equiv0$. In this case, $\dot{\hat{\chi}}_r$ could be ensured to be equal to $\dot{\chi}_r$, so asymptotical stability is able to be obtained as shown in Figures \ref{Fig: ErrX}, \ref{Fig: ErrY},  \ref{Fig: ErrZ}, and  \ref{Fig: ErrChi}. However, when the leader aircraft is taking maneuvers, $\chi_r$ is not constant, and $\dot{\hat{\chi}}_r$ can only be guaranteed to converge to a certain value close to $\dot{\chi}_r$. The difference between $\dot{\hat{\chi}}_r$ and $\dot{\chi}_r$ might lead to the steady heading tracking errors  from $35$ to $145$ $s$ as shown in Figure \ref{Fig: ErrChi}, which is reflected in lateral position tracking control as given in Figure \ref{Fig: ErrY}. 
\begin{figure}[htbp]
\centering
\includegraphics[width=0.8\textwidth]{./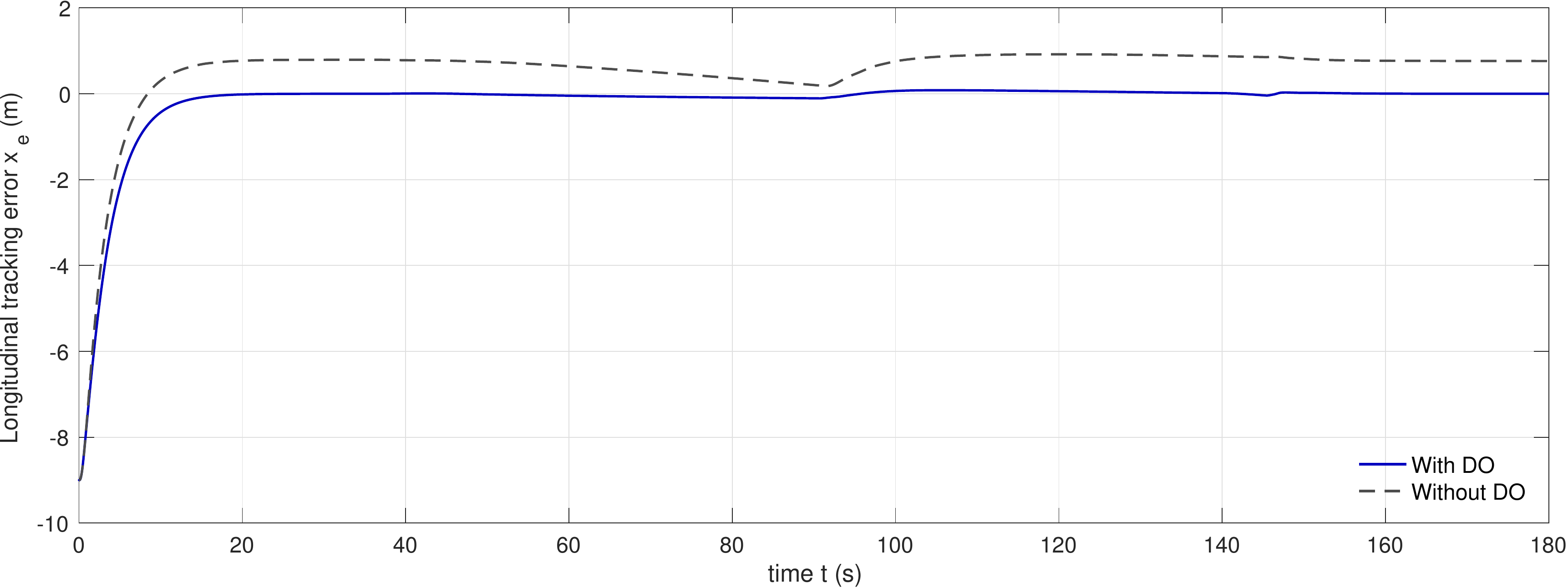} \\ \vspace{-5mm}
\caption{Longitudinal tracking errors, $x_e=x_f-x_r$ (Scenario 1)}\vspace{-0mm}
\label{Fig: ErrX}
\end{figure}
\begin{figure}[htbp]
\centering
\includegraphics[width=0.8\textwidth]{./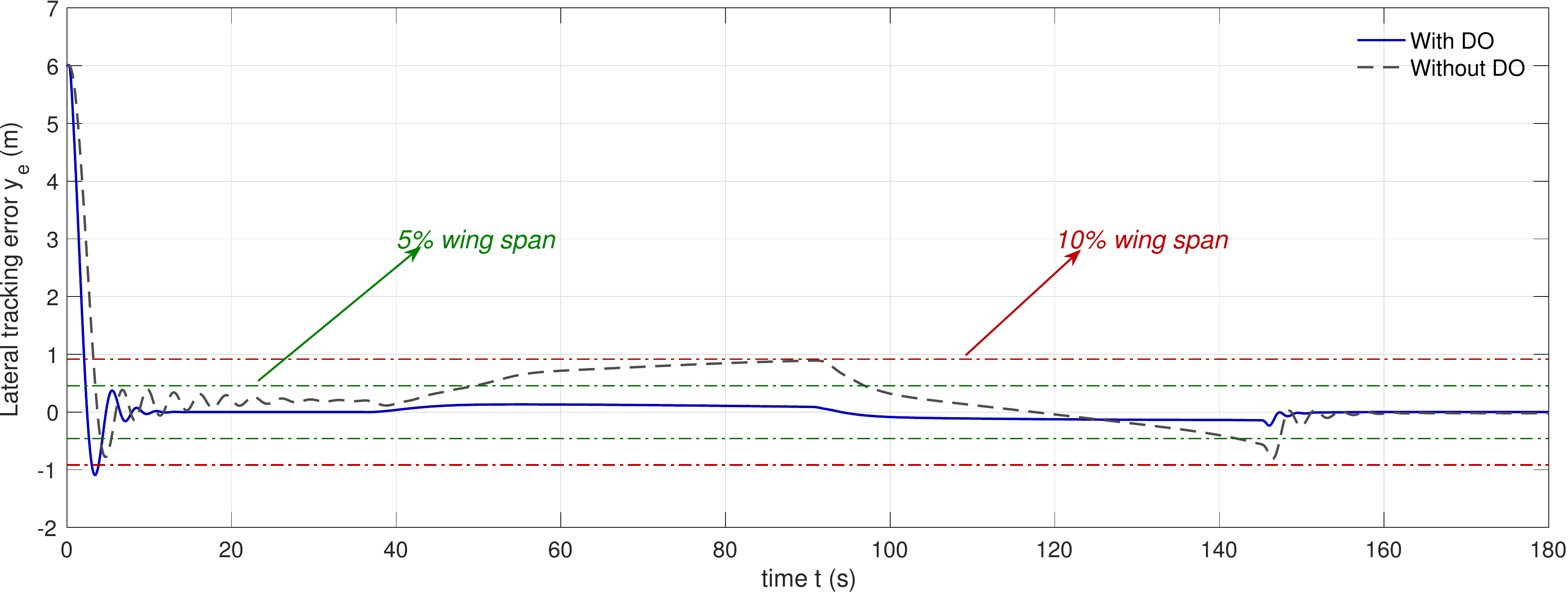}\\ \vspace{-5mm}
\caption{Lateral tracking errors, $y_e=y_f-y_r$ (Scenario 1)}\vspace{-0mm}
\label{Fig: ErrY}
\end{figure}
\begin{figure}[htbp]
\centering
\includegraphics[width=0.8\textwidth]{./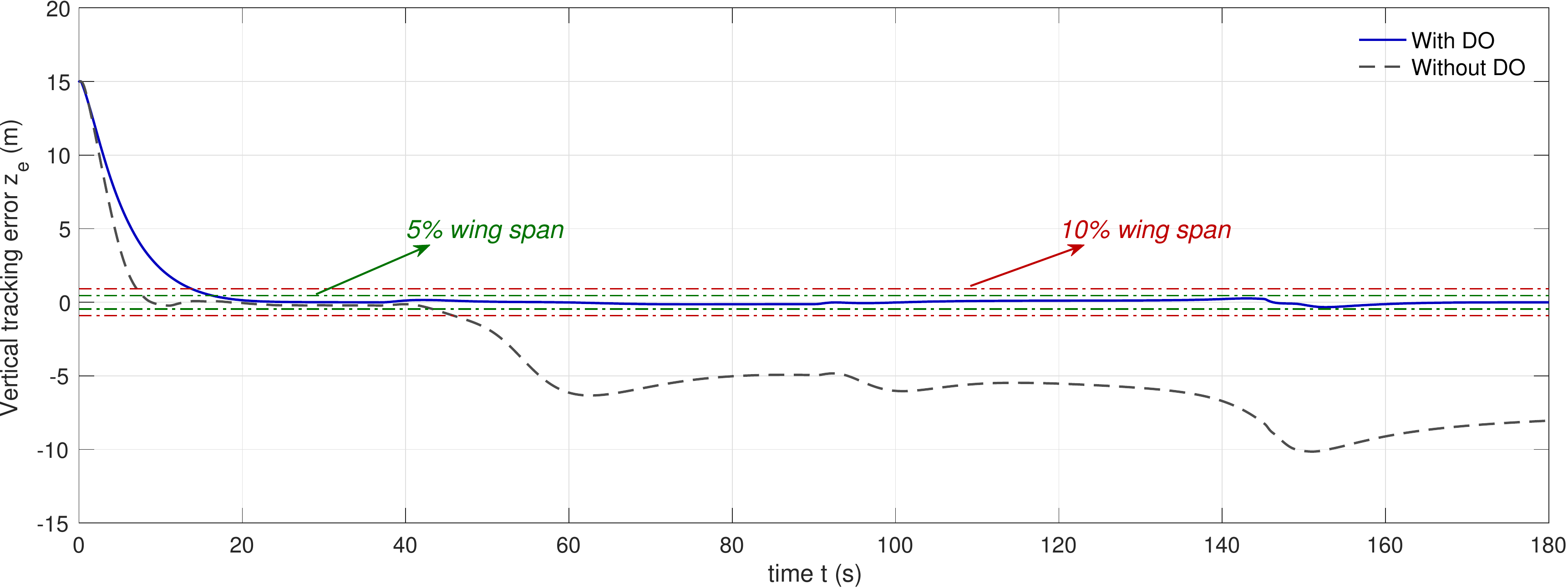}\\ \vspace{-5mm}
\caption{Vertical tracking errors, $z_e=z_f-z_r$ (Scenario 1)}\vspace{-3mm}
\label{Fig: ErrZ}
\end{figure}

The inner-loop state responses are summarized in Figure \ref{Fig: InnerLoopStateResp}. The sideslip angle of the follower aircraft is always kept to be zero by the proposed robust nonlinear formation controller.  Shown in Figure \ref{Fig: ActuatorResp} are responses of control inputs. As mentioned before, the baseline formation controller without disturbance observers cannot achieve successful close formation flight. The follower aircraft under the baseline controller is eventually one span away from its optimal relative position to the leader aircraft, in which case the influence of the trailing vortices is quite small. Therefore, the steady performance of the follower aircraft by the baseline formation controller is similar to that of an aircraft at solo flight. Compared with the baseline controller, the robust nonlinear controller will eventually have $13.876\%$ decrease in throttle inputs, which implies that around $13.876\%$ energy saving could be obtained by close formation flight.
\begin{figure}[htbp]
\centering
\includegraphics[width=0.8\textwidth]{./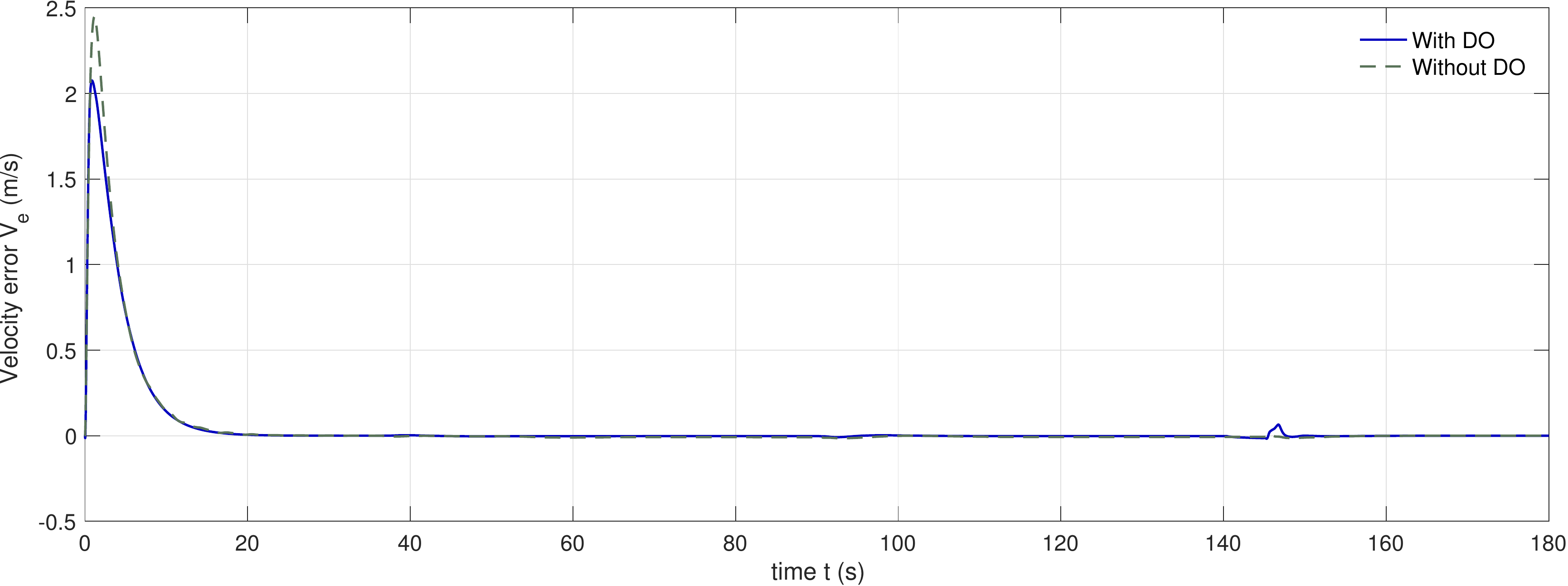}\\ \vspace{-5mm}
\caption{Speed tracking errors, $V_e=V_f-V_r$  (Scenario 1)}\vspace{-3mm}
\label{Fig: ErrV}
\end{figure}
\begin{figure}[htbp]
\centering
\includegraphics[width=0.8\textwidth]{./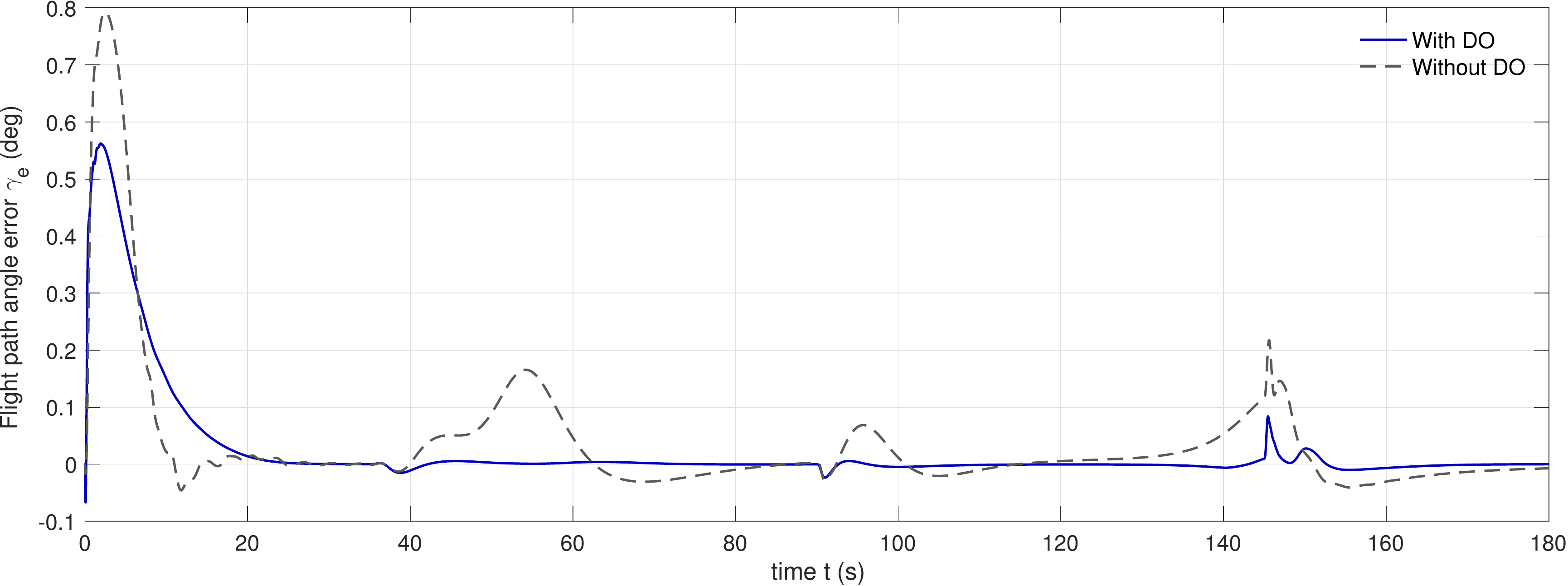}\\ \vspace{-5mm}
\caption{Flight path angle tracking errors, $\gamma_e=\gamma_f-\gamma_r$  (Scenario 1)}
\label{Fig: ErrGamma}\vspace{-1mm}
\end{figure}
\begin{figure}[htbp]
\centering
\includegraphics[width=0.8\textwidth]{./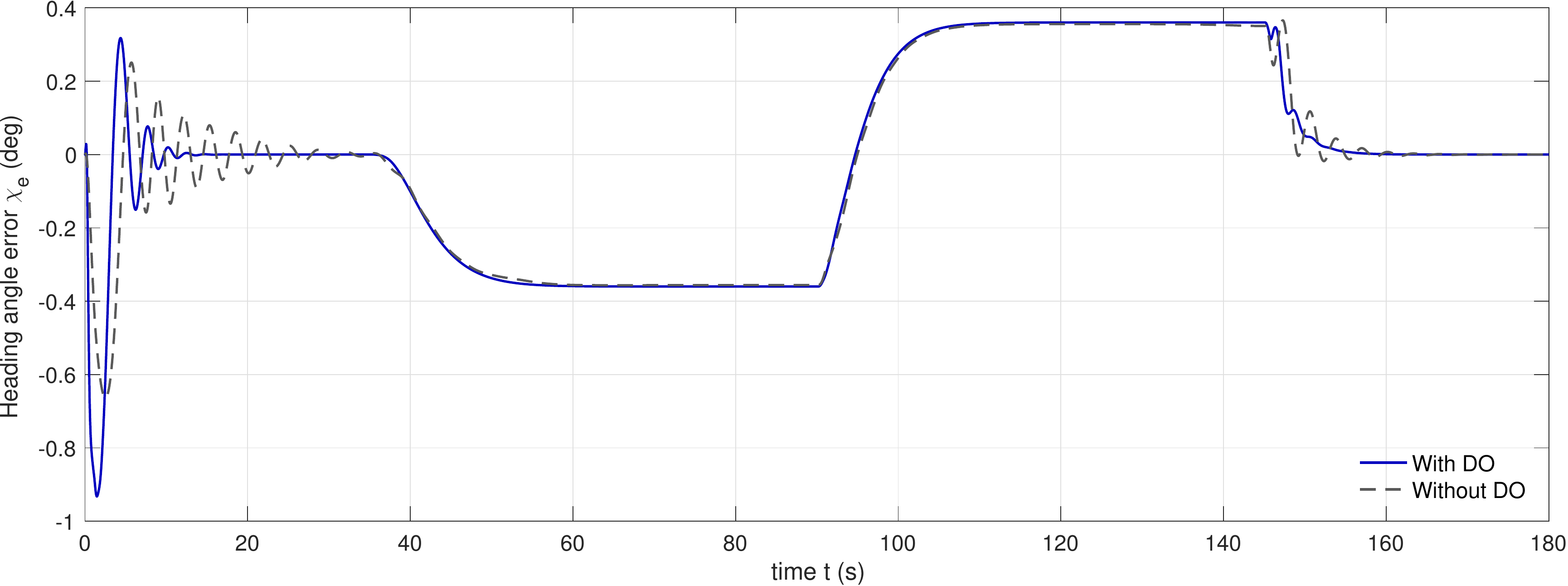}\\ \vspace{-5mm}
\caption{Heading angle tracking errors, $\chi_e=\chi_f-\chi_r$  (Scenario 1)}
\label{Fig: ErrChi}\vspace{-1mm}
\end{figure}

\begin{figure}[h]
\centering
\begin{tabular}{lll}
\includegraphics[width=0.45\textwidth]{./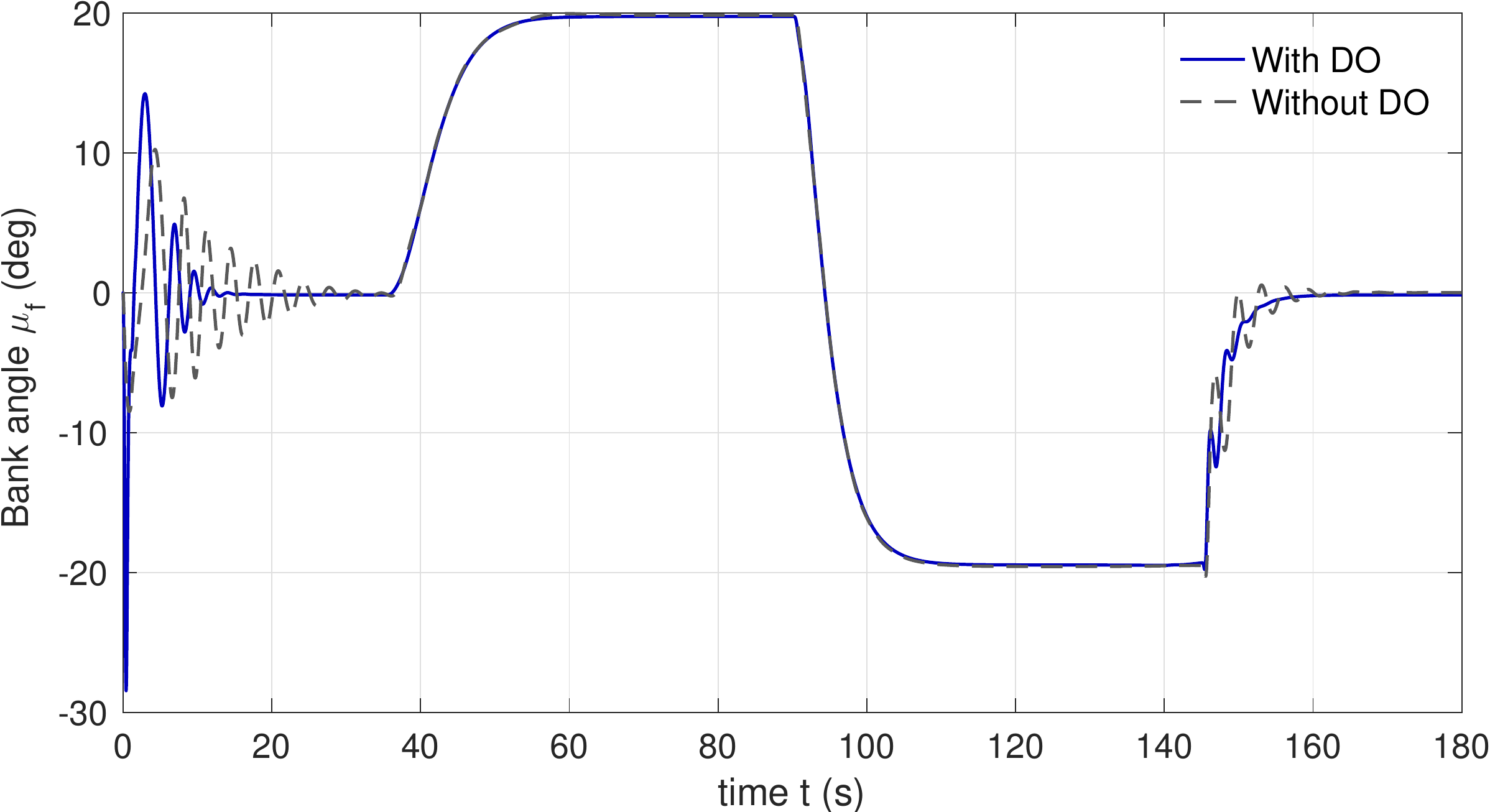}
& &
\includegraphics[width=0.45\textwidth]{./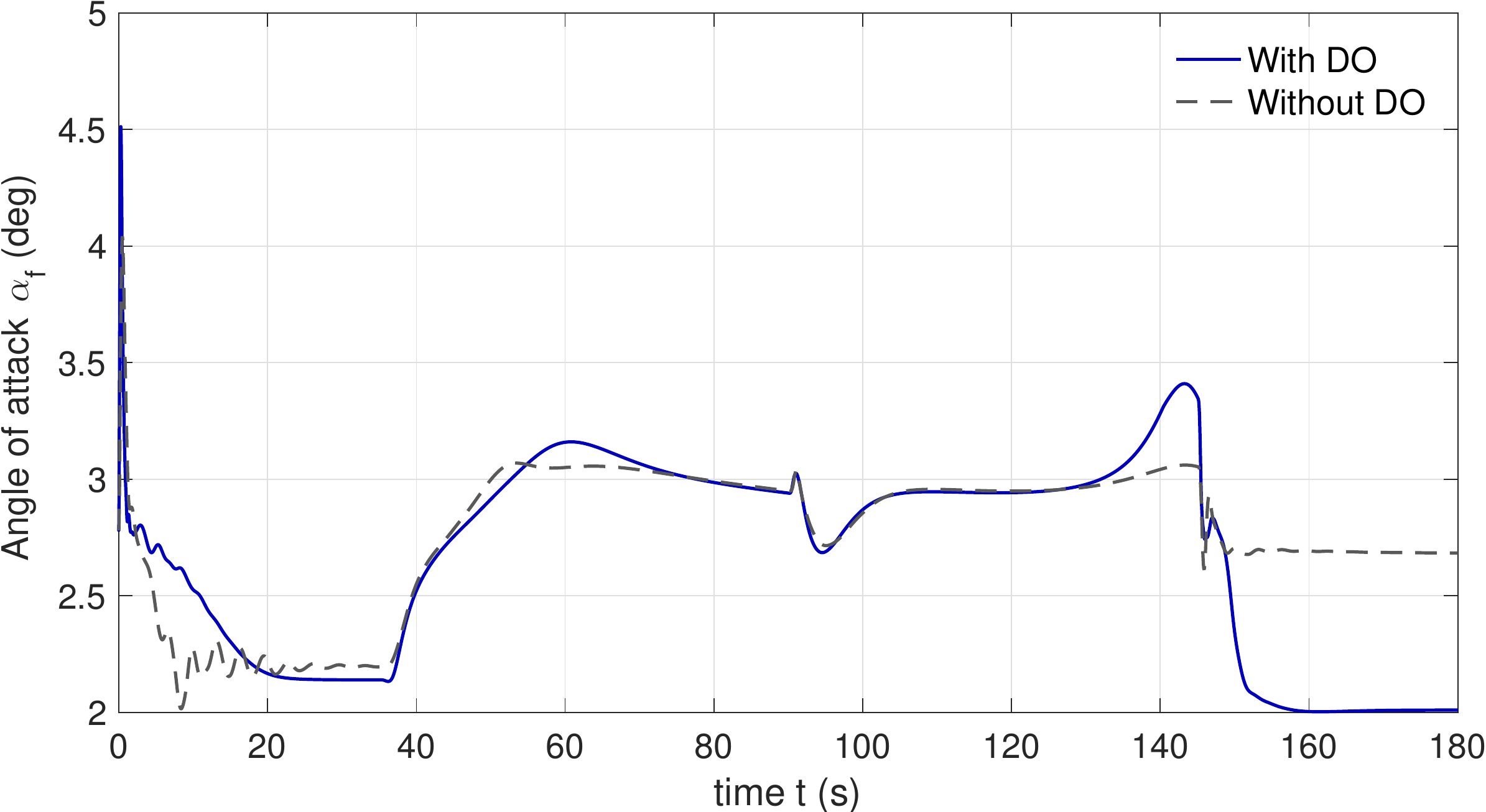} \\
\includegraphics[width=0.45\textwidth]{./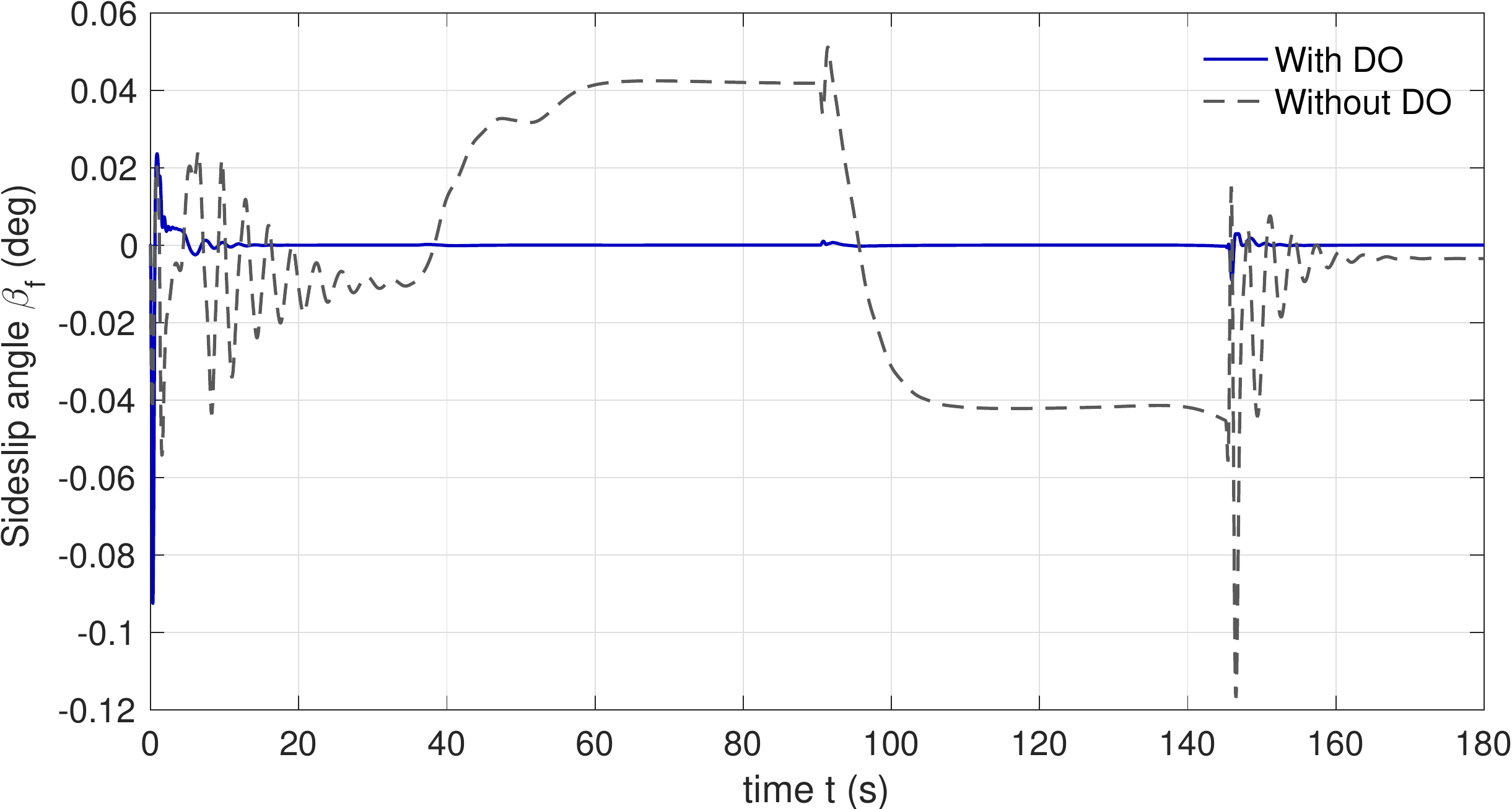}
& &
\includegraphics[width=0.45\textwidth]{./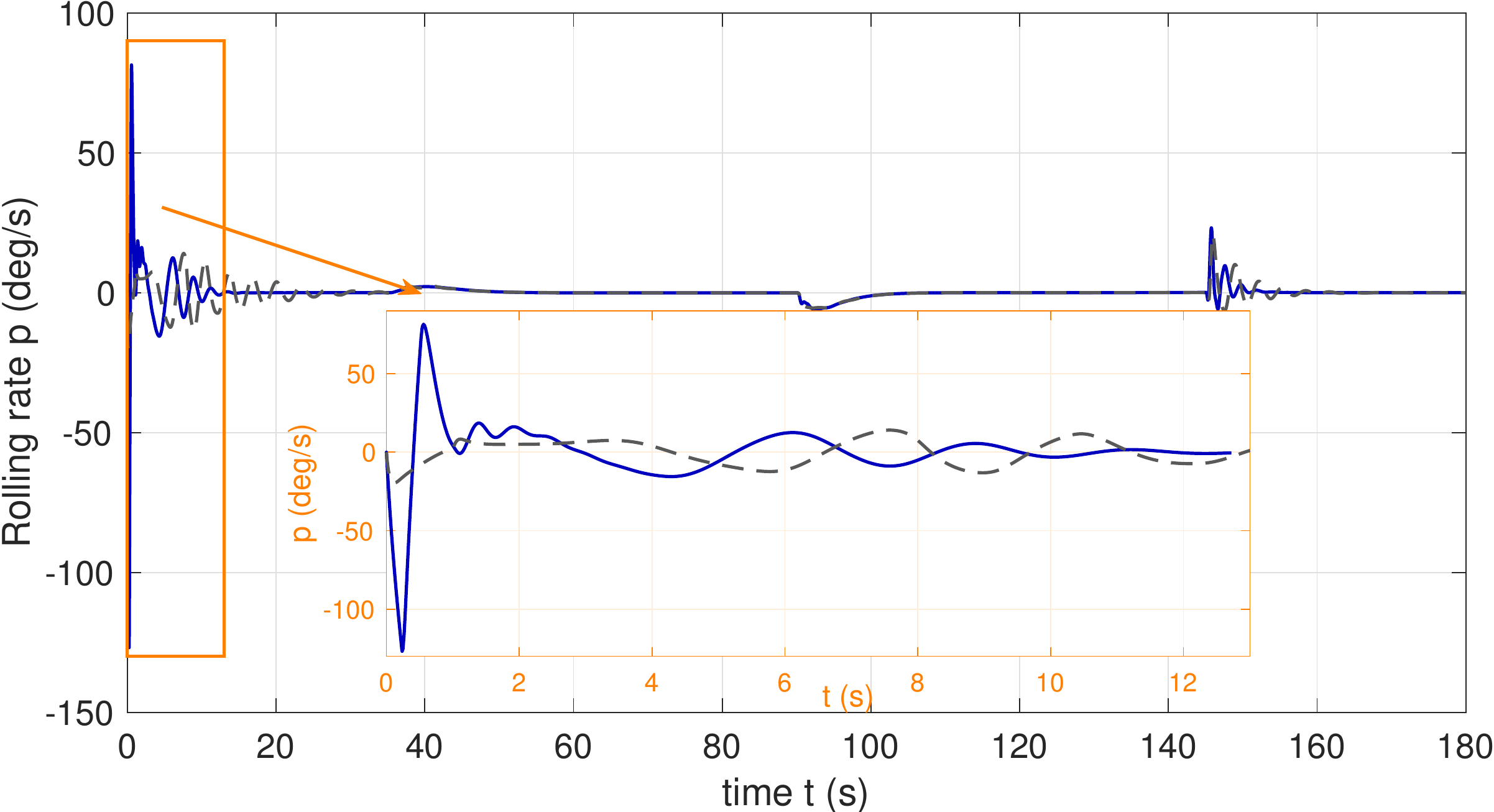} \\
\includegraphics[width=0.45\textwidth]{./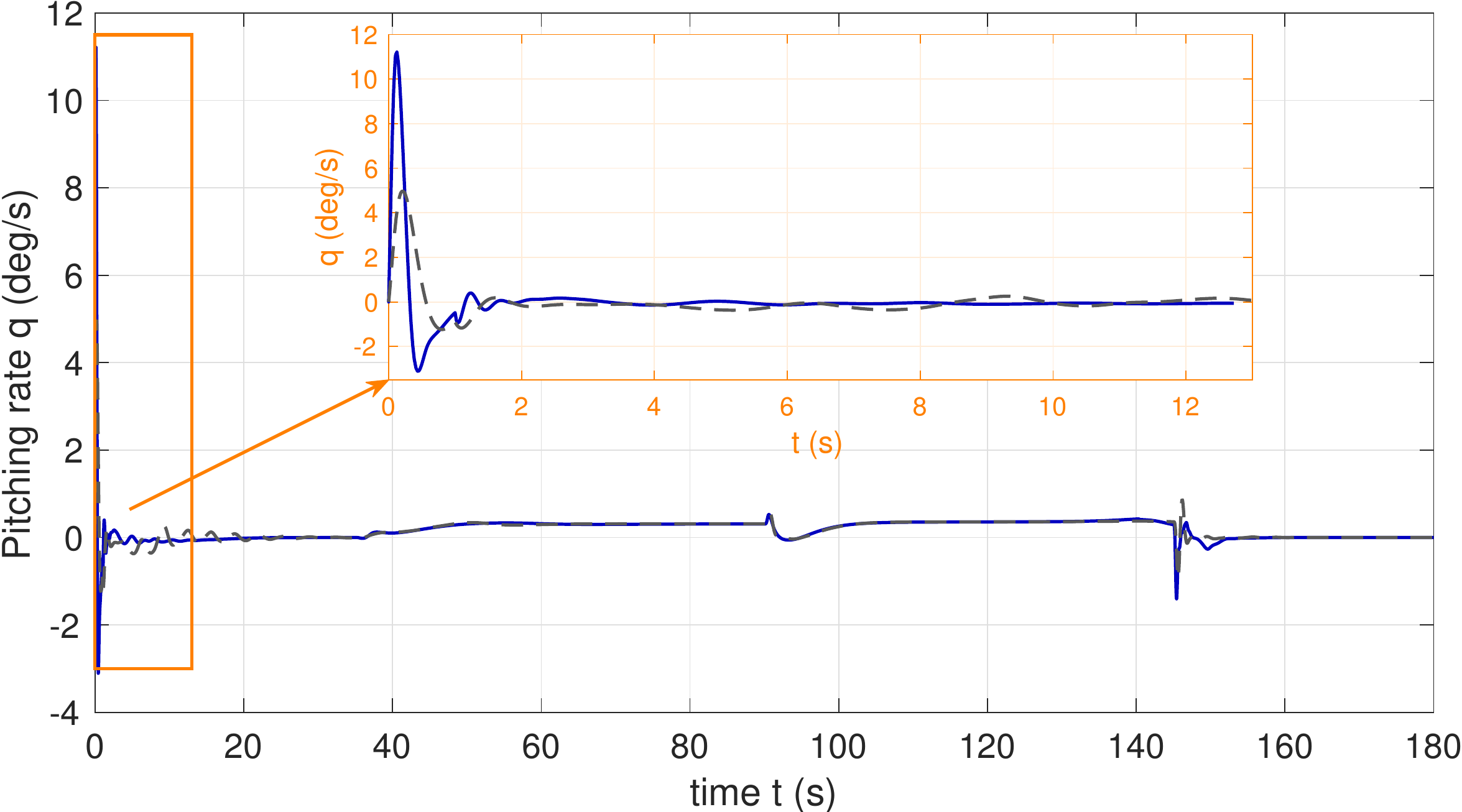}
& &
\includegraphics[width=0.45\textwidth]{./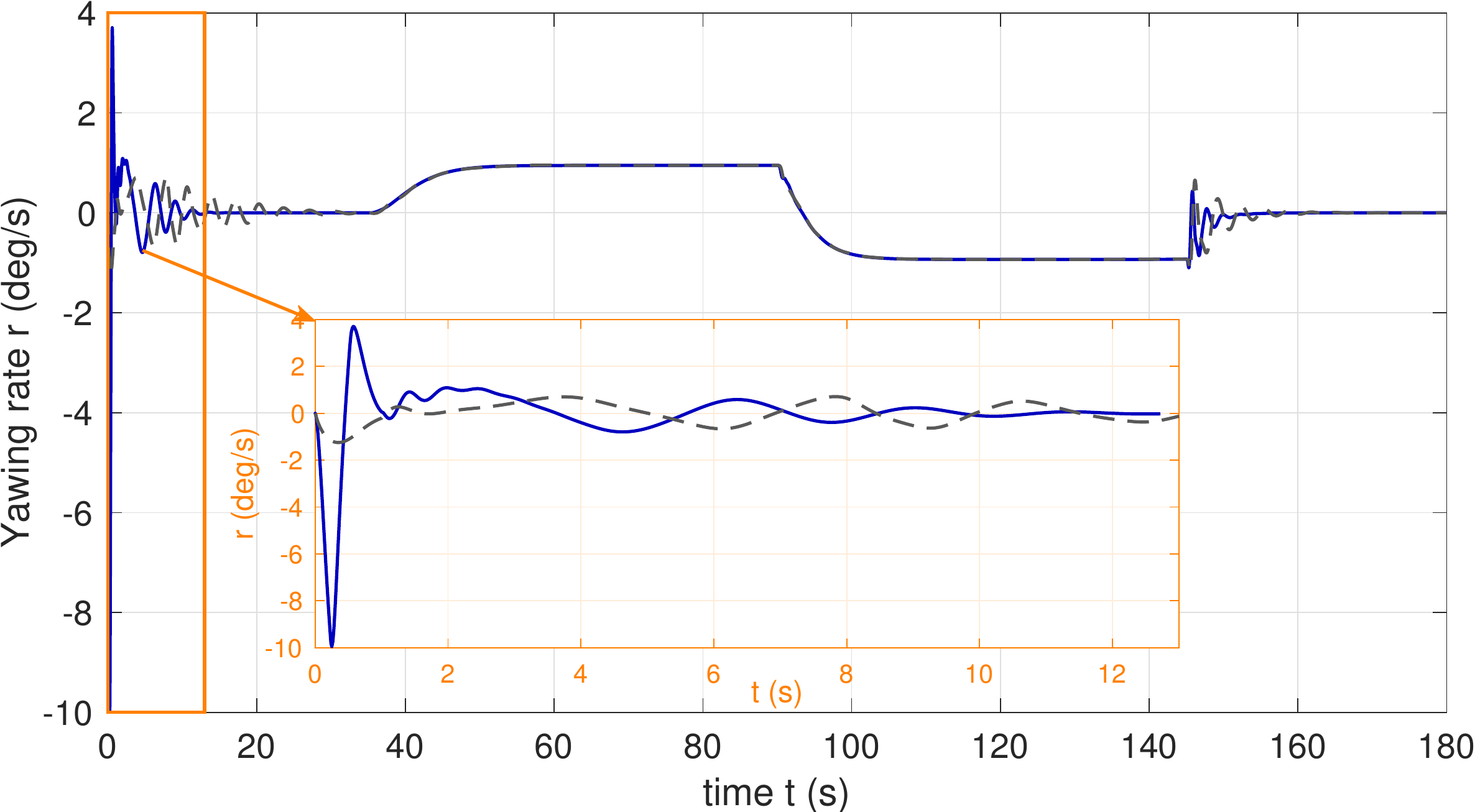}
\end{tabular}\vspace{-5mm}
\caption{Inner-loop state responses  (Scenario 1)}\vspace{-1mm}
\label{Fig: InnerLoopStateResp}
\end{figure}
\begin{figure}[h]
\centering
\begin{tabular}{lll}
\includegraphics[width=0.45\textwidth]{./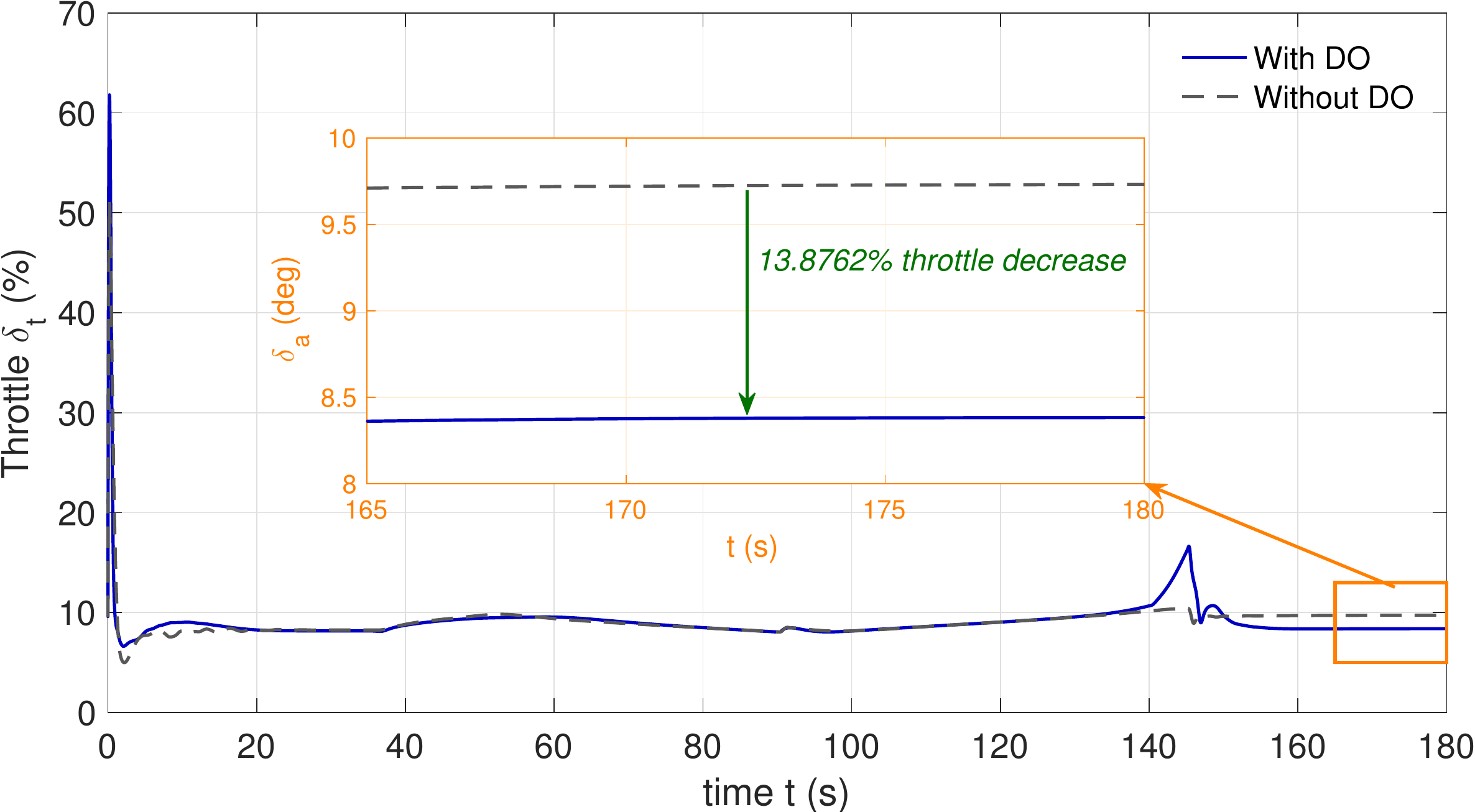}
& &
\includegraphics[width=0.45\textwidth]{./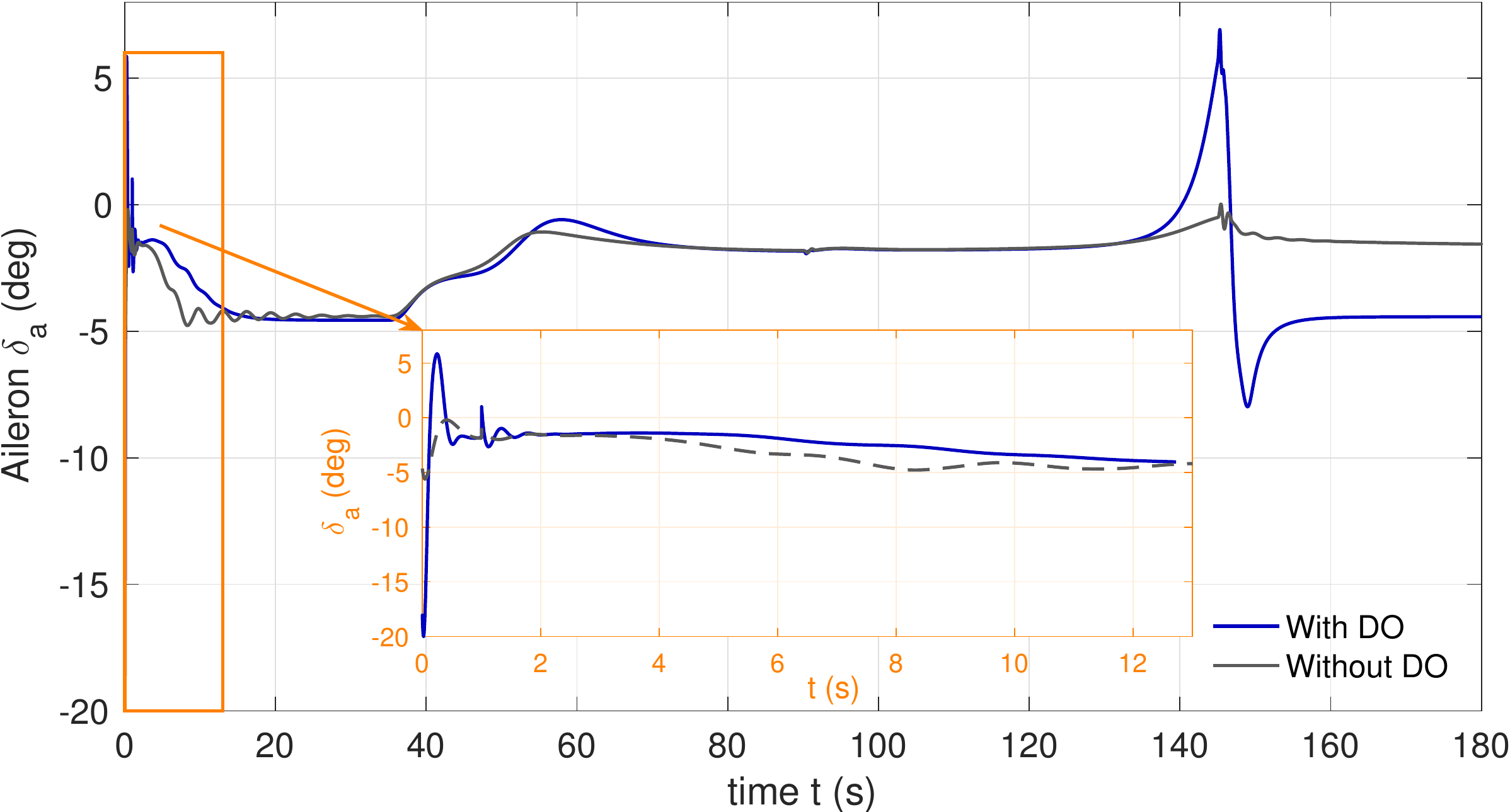} \\
\includegraphics[width=0.45\textwidth]{./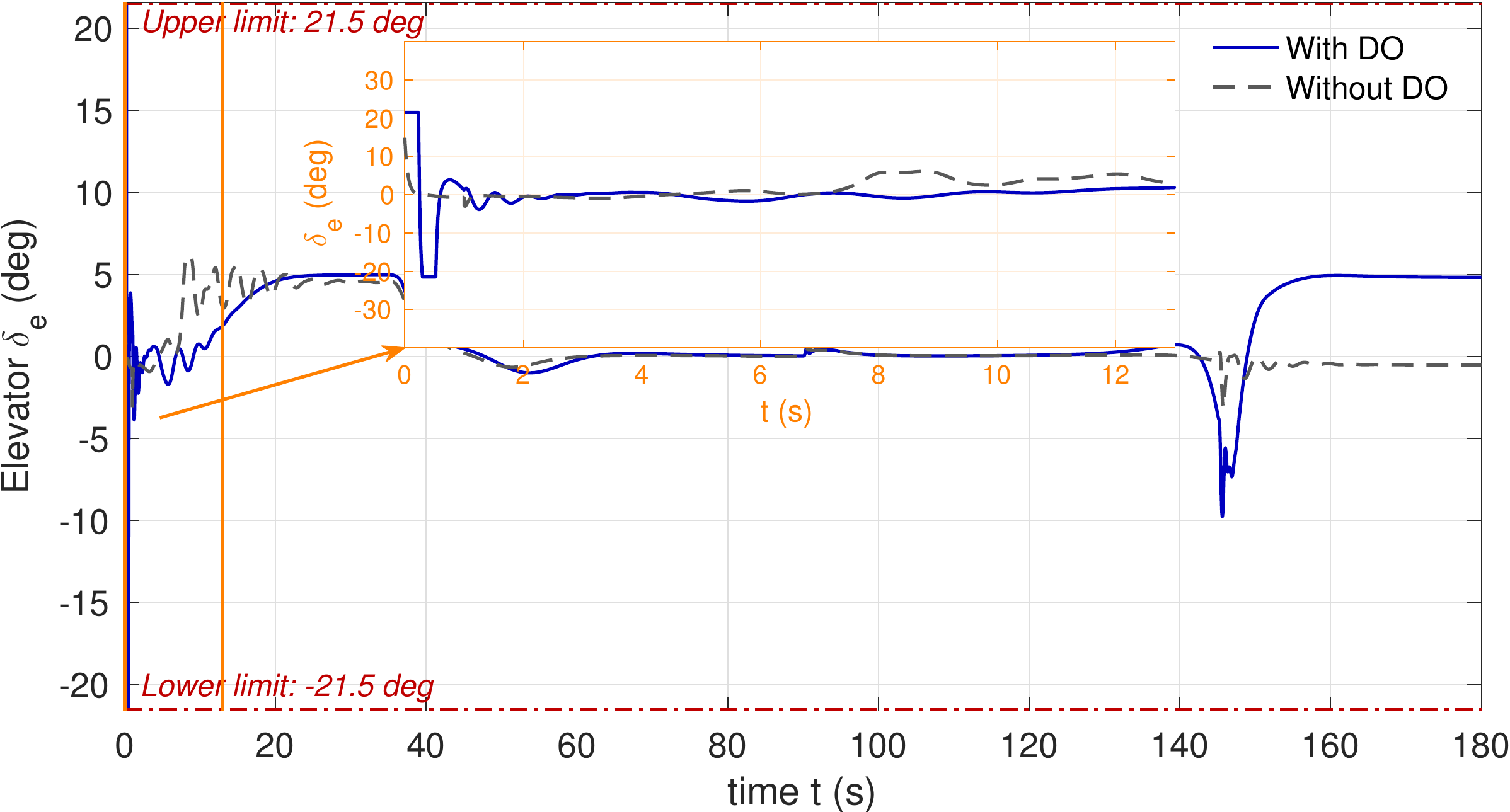}
&&
\includegraphics[width=0.45\textwidth]{./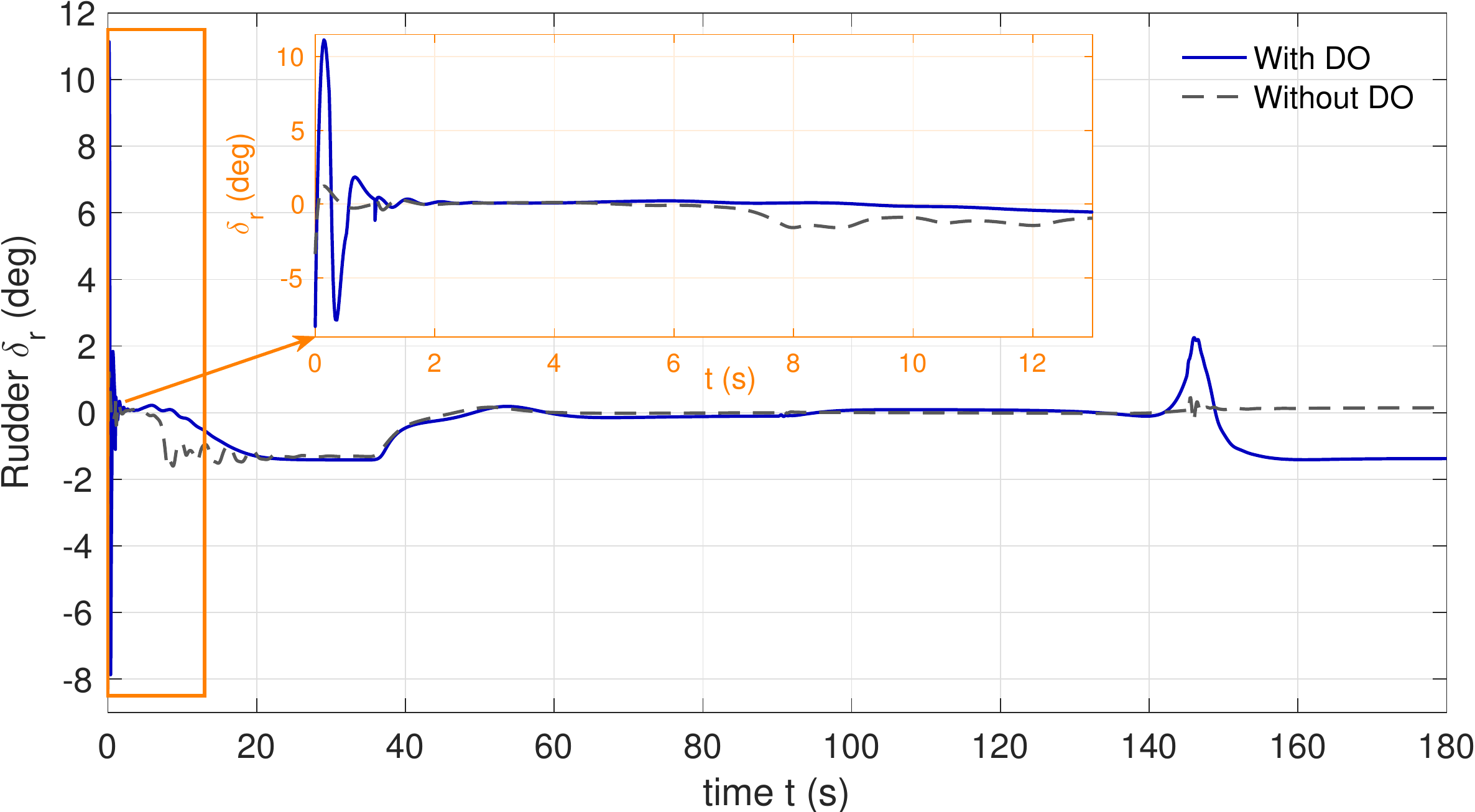} 
\end{tabular}\vspace{-5mm}
\caption{Actuator responses (Scenario 1)}\vspace{-3mm}
\label{Fig: ActuatorResp}
\end{figure}

\vspace{-6mm}

\subsection{Scenario 2: Different flight speeds}
The efficacy of the proposed robust nonlinear controller is further verified by running the close formation flight under different velocities. All the control parameters, the reference trajectories, and the initial conditions will be same as those at the first scenario. Without loss of generality, only position tracking errors and control input responses are given. As compared with a linear controller, nonlinear formation controller can be applied to much wider flight scenarios with stability and performance guaranty.  This advantage of the nonlinear controller is demonstrated in Figure \ref{Fig: ErrX_DiffV}, \ref{Fig: ErrY_DiffV}, and \ref{Fig: ErrZ_DiffV}. It is observed that the proposed robust nonlinear controller is able to ensure almost the same control performance under different velocities. The control input responses under different velocities are illustrated in Figure \ref{Fig: ActuatorResp_DiffV}.  Another interesting observation is that increasing speed will result in better performance in lateral position tracking as shown in Figure \ref{Fig: ErrY_DiffV}. Explaining this observation is difficult from the nonlinear perspective, but we could analyze it from a linear method at a special case. Assume $V_f=V_r=Const$, $\gamma_f=\gamma_r=0$, and $\chi_r=0$.  The nonlinear closed outer-loop lateral dynamics are linearized about its equilibrium, which results in
\begin{equation}
\dot{y}_e=V_fe_\chi\text{, }\quad \dot{e}_\chi =-\frac{K_\chi}{2}{e}_\chi-c_yV_f y_e+d_\chi
\end{equation}
where $d_\chi$ denotes any  uncertainties, disturbances, or inputs.  The transfer function from $d_\chi
$ to ${y}_e$ is
\begin{equation}
G\left(s\right)=\frac{1}{s^2+\frac{K_\chi}{2}s+c_yV_f^2}
\end{equation}
According to the final value theorem, the increase of $V_f$ leads to smaller steady values in ${y}_e$ as illustrated in Figure \ref{Fig: ErrY_DiffV}.
\begin{figure}[htbp]
\centering
\includegraphics[width=0.8\textwidth]{./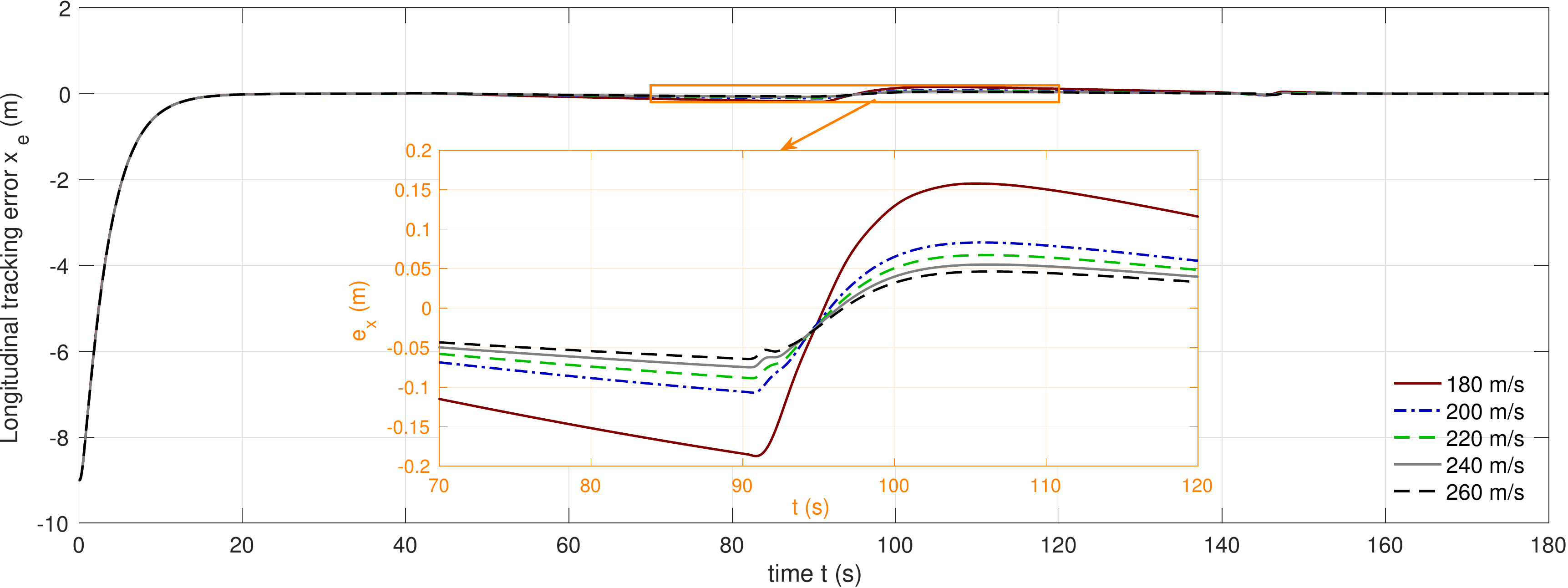}\\ \vspace{-5mm}
\caption{Longitudinal tracking errors, $x_e=x_f-x_r$ (Scenario 2)}\vspace{-3mm}
\label{Fig: ErrX_DiffV}
\end{figure}
\begin{figure}[htbp]
\centering
\includegraphics[width=0.8\textwidth]{./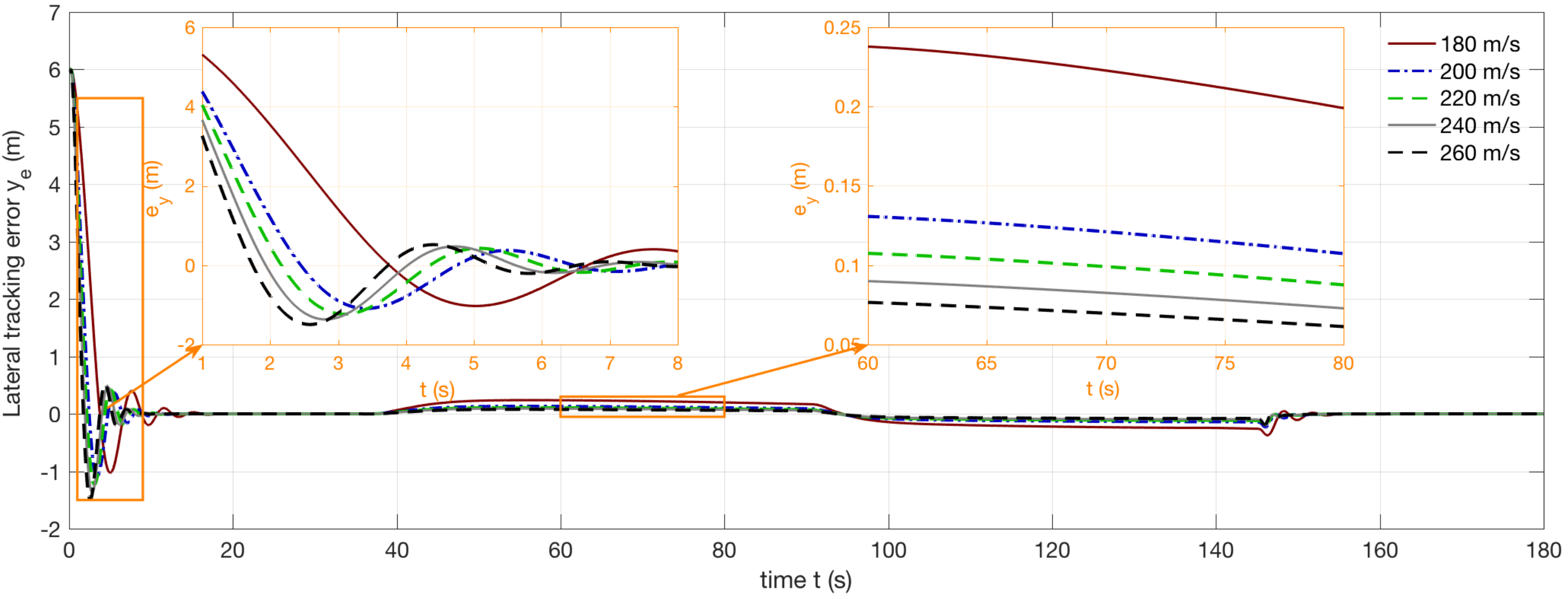}\\ \vspace{-5mm}
\caption{Lateral tracking errors, $y_e=y_f-y_r$ (Scenario 2)}\vspace{-3mm}
\label{Fig: ErrY_DiffV}
\end{figure}
\begin{figure}[htbp]
\centering
\includegraphics[width=0.8\textwidth]{./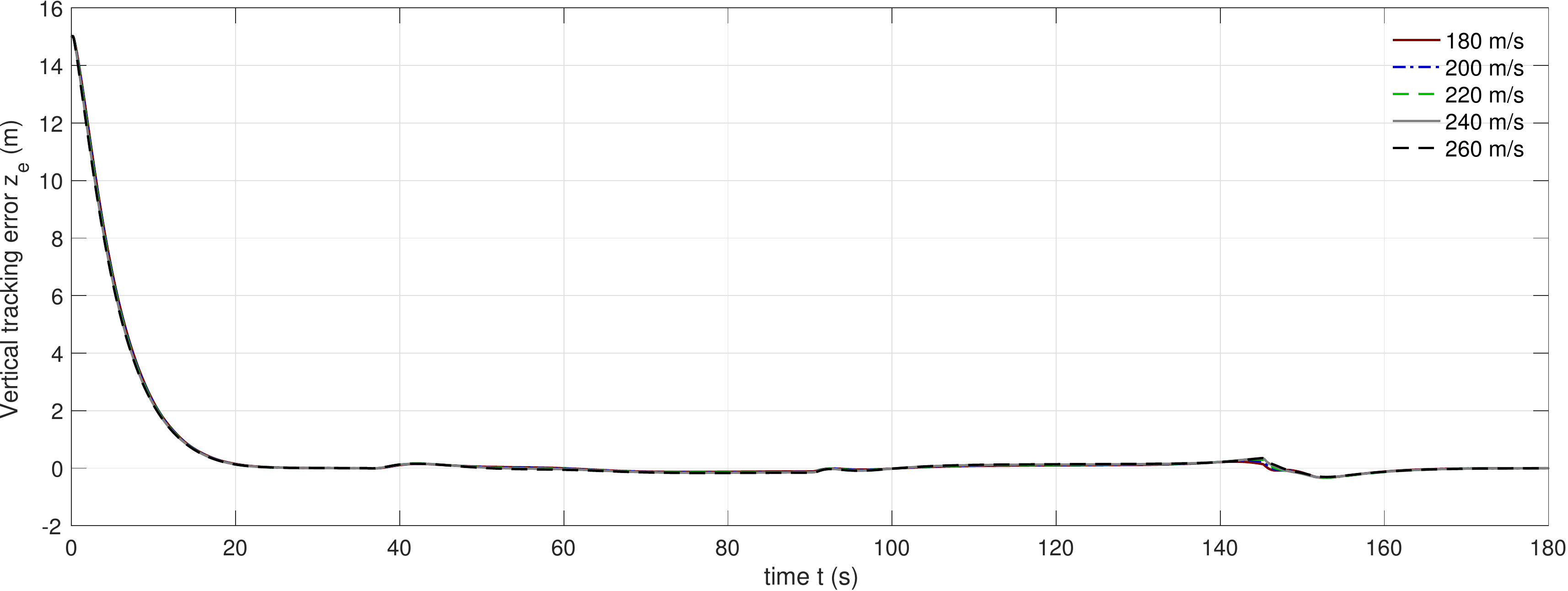}\\ \vspace{-5mm}
\caption{Vertical tracking errors, $z_e=z_f-z_r$ (Scenario 2)}\vspace{-2mm}
\label{Fig: ErrZ_DiffV}
\end{figure}

\begin{figure}[h]
\centering
\begin{tabular}{lll}
\includegraphics[width=0.45\textwidth]{./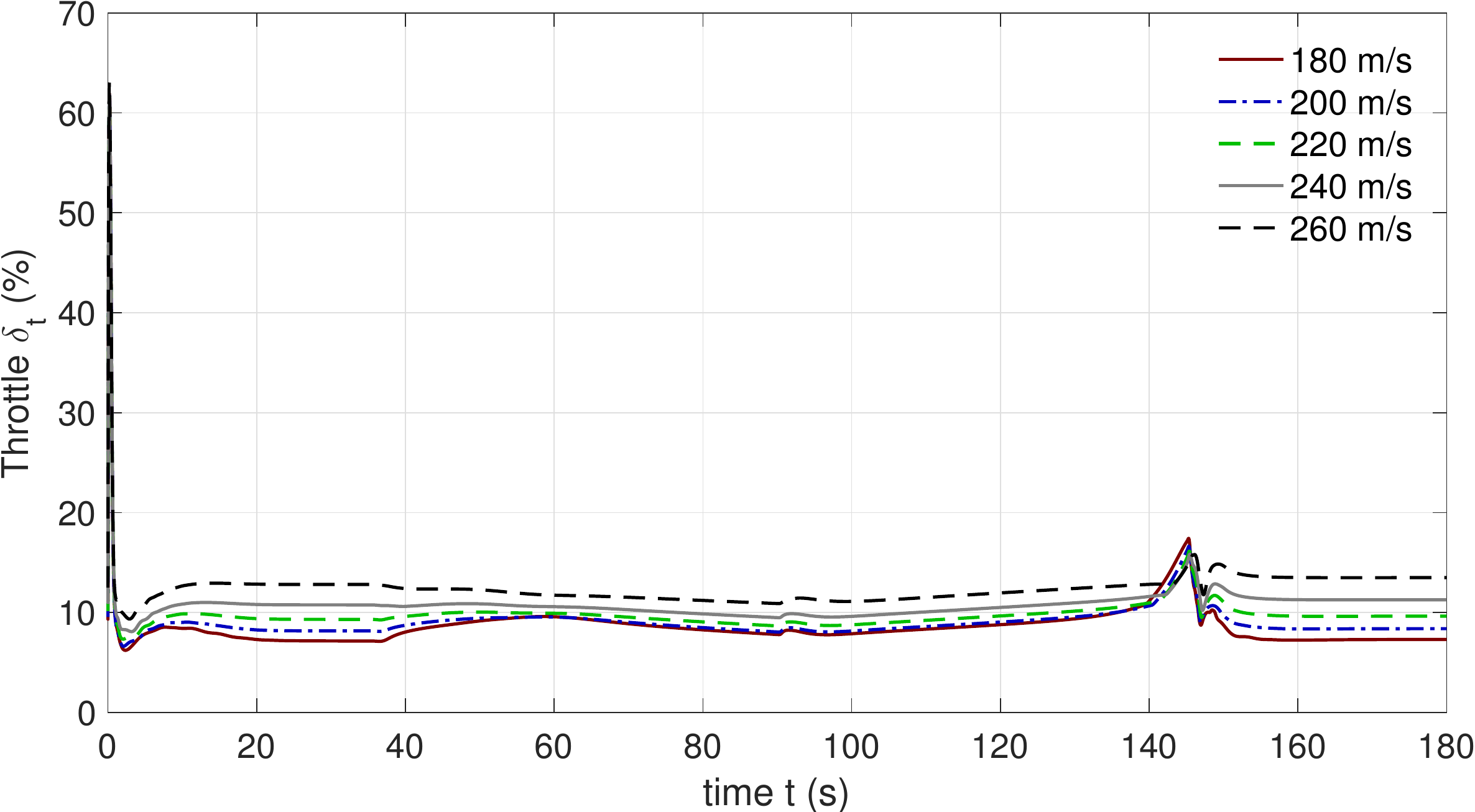}
& &
\includegraphics[width=0.45\textwidth]{./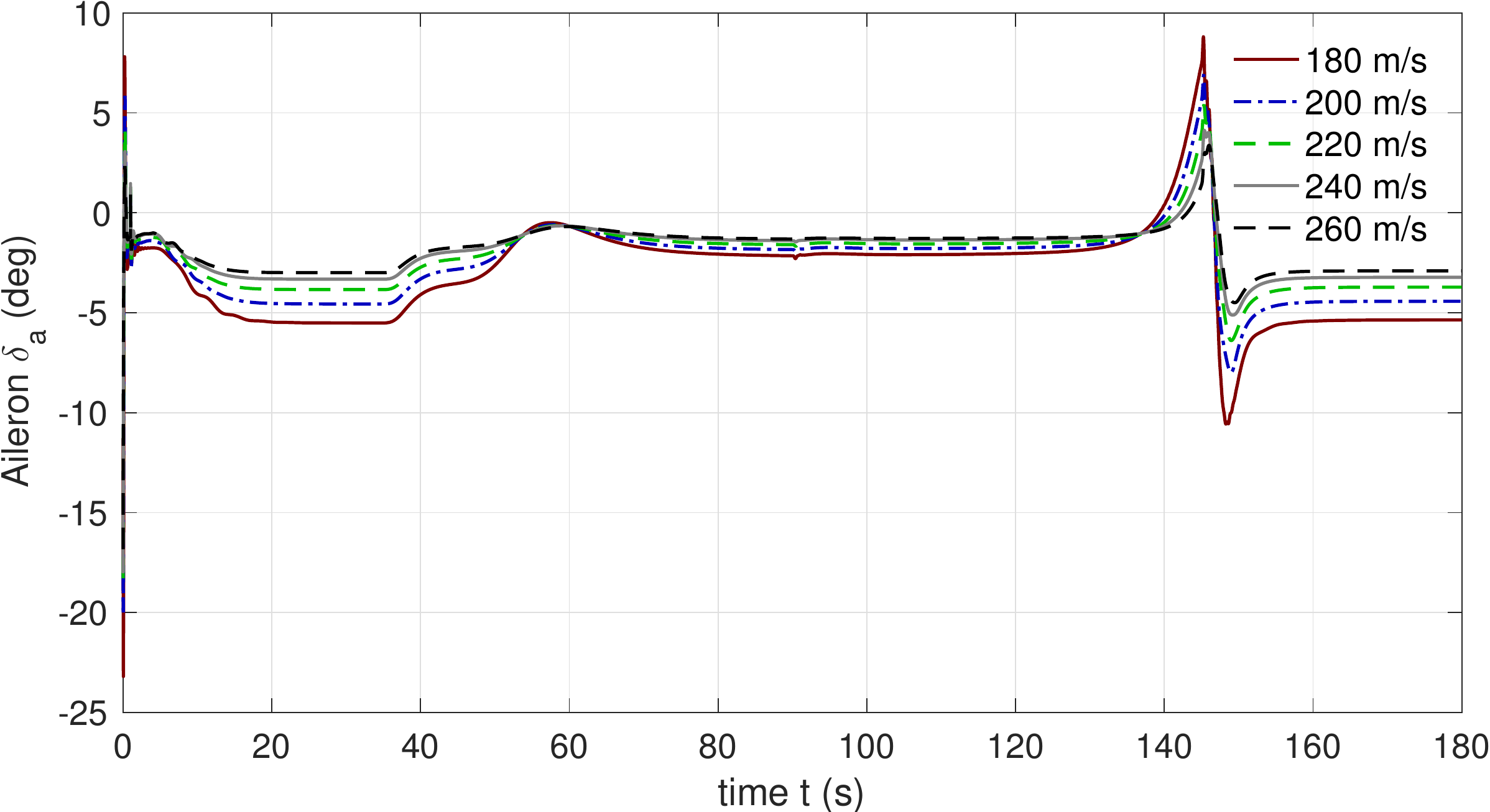} \\
\includegraphics[width=0.45\textwidth]{./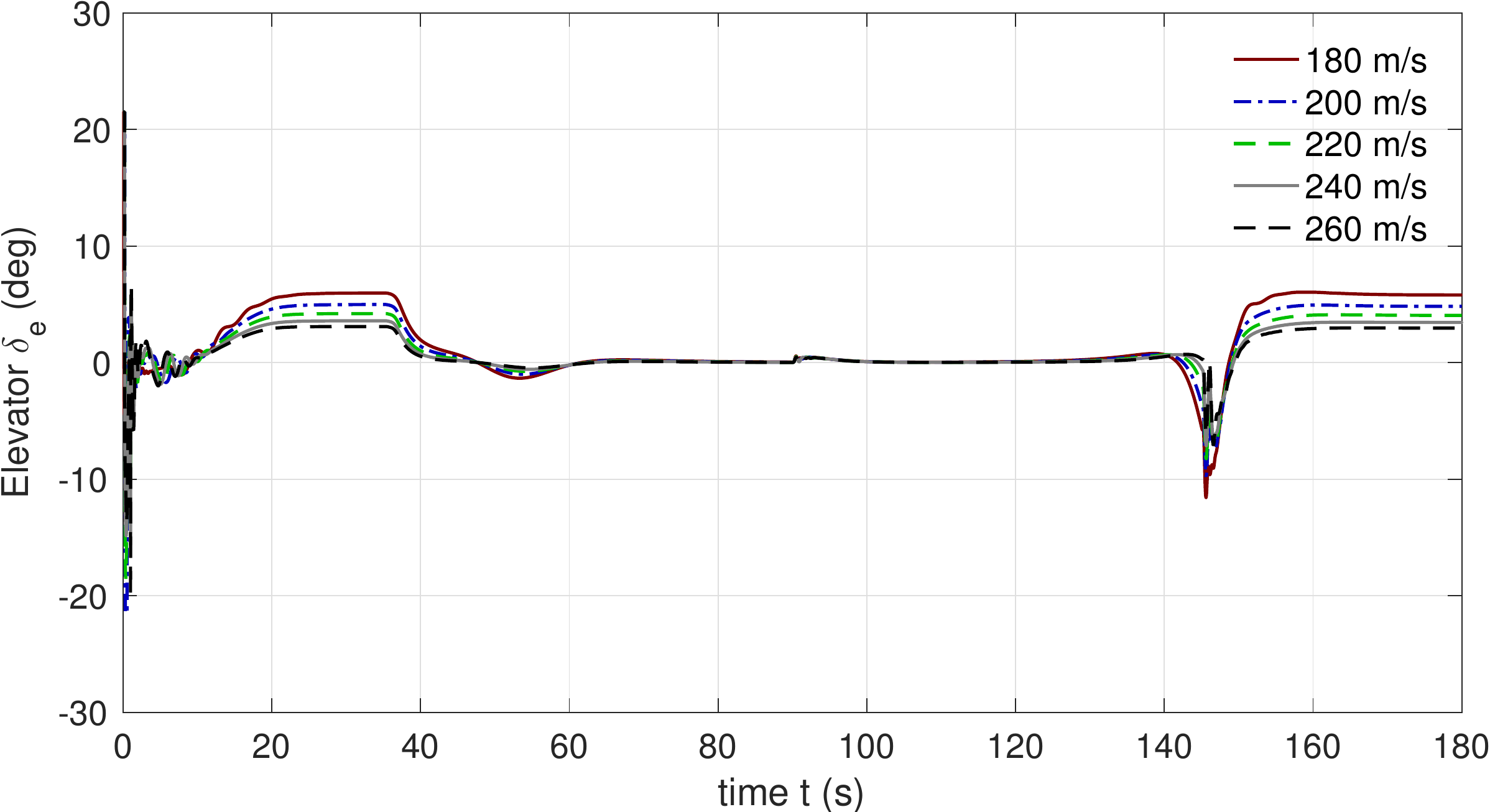}
&&
\includegraphics[width=0.45\textwidth]{./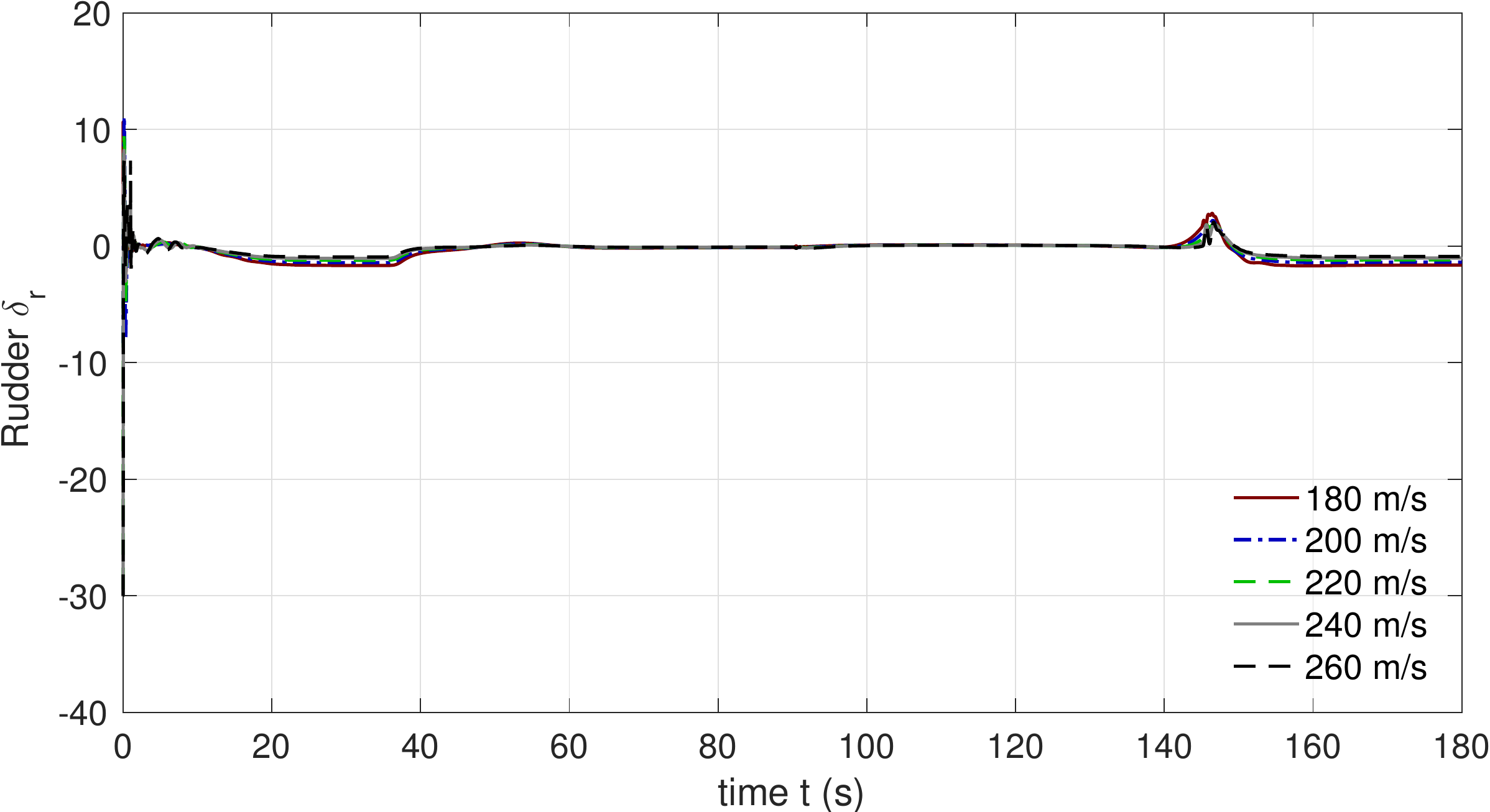} 
\end{tabular}\vspace{-5mm}
\caption{Actuator responses (Scenario 2)}\vspace{-2mm}
\label{Fig: ActuatorResp_DiffV}
\end{figure}

\section{Conclusions} \label{Sec: Conclusion}
The paper presented a robust nonlinear controller for autonomous close formation flight under different flight maneuvers. The proposed controller was developed by combining the command filtered backstepping method and the disturbance observation technique. Both inner-loop and outer-loop controllers were designed in this paper. Based on the proposed design, a follower aircraft is able to track its optimal relative position to a leader aircraft under different flight maneuvers. The proposed design was able to be extended to close formation flight of more than three aircraft, though it was described in the scenario of two-aircraft close formation. Enough robustness and high accuracy could be achieved by the presented design. Different numerical simulations were conducted to demonstrate the efficacy of the presented robust nonlinear close formation controller. %

\section*{Appendix A: Proof of Lemma \ref{Lem: Chap5_DOB_General}}
For (\ref{Eq: Chap5_1st-OrderFilter}),  the error dynamic equation is
\begin{equation}
\dot{\widetilde{d}}=-\frac{1}{\mathcal{T}_d}\widetilde{d}+\dot{d} \label{Eq: Chap5_EstErrDyn_1stOrderFilter}
\end{equation}
By solving the first-order differential equation (\ref{Eq: Chap5_EstErrDyn_1stOrderFilter}), one has $\widetilde{d}\left(t\right) =  e^{-\frac{t}{\mathcal{T}_d}}\widetilde{d}\left(0\right)+\int_{0}^{t}e^{-\frac{t-\tau}{\mathcal{T}_d}}\dot{d}\left(\tau\right)d\tau$. If choosing $\widehat{d}\left(0\right)=0$, we have $\widetilde{d}\left(0\right)=-{d}\left(0\right)$. If $\dot{d}\left(t\right)\in\mathscr{L}_\infty$,
\begin{equation}
\vert\widetilde{d}\left(t\right)\vert  \leq e^{-\frac{t}{\mathcal{T}_d}}\vert{d}\left(0\right)\vert+\int_{0}^{t}e^{-\frac{t-\tau}{\mathcal{T}_d}}d\tau\Vert\dot{d}\Vert_\infty
				                   \leq e^{-\frac{t}{\mathcal{T}_d}}\vert{d}\left(0\right)\vert+\mathcal{T}_d\left(1-e^{-\frac{t}{\mathcal{T}_d}}\right)\Vert\dot{d}\Vert_\infty \label{Eq: EstErrBound}
\end{equation}
Obviously,  $\vert\widetilde{d}\vert\leq \max\left\{\vert{d}\left(0\right)\vert\text{, }\mathcal{T}_d\Vert\dot{d}\Vert_\infty\right\} $ globally exists. 

For any positive constant $\epsilon$, if $\mathcal{T}_d\Vert\dot{d}\Vert_\infty+\epsilon>\vert{d}\left(0\right)\vert$,  there exists $\vert\widetilde{d}\vert<\mathcal{T}_d\Vert\dot{d}\Vert_\infty+\epsilon$ $\forall t_\epsilon>0$ according to the global boundedness of $\vert\widetilde{d}\vert$. If $\mathcal{T}_d\Vert\dot{d}\Vert_\infty+\epsilon<\vert{d}\left(0\right)\vert$, the first term on the right-hand side of (\ref{Eq: EstErrBound}) will decrease, while the second term will increase. At certain time $t_\epsilon$, there exists $e^{-\frac{t}{\mathcal{T}_d}}\vert{d}\left(0\right)\vert+\mathcal{T}_d\left(1-e^{-\frac{t}{\mathcal{T}_d}}\right)\Vert\dot{d}\Vert_\infty=\mathcal{T}_d\Vert\dot{d}\Vert_\infty+\epsilon$. After $t_\epsilon$, the right hand side of (\ref{Eq: EstErrBound}) will be smaller than $\mathcal{T}_d\Vert\dot{d}\Vert_\infty+\epsilon$. By simple mathematical calculation, one has $t_\epsilon=\mathcal{T}_d\ln\left(\frac{\vert\vert{d}\left(0\right)\vert-\mathcal{T}_d\Vert\dot{d}\Vert_\infty\vert}{\epsilon}\right)$, so the second conclusion of Lemma \ref{Lem: Chap5_DOB_General} is obtained. In addition, the estimator error is input-to-state stable with respect to $\dot{d}$ according to (\ref{Eq: Chap5_EstErrDyn_1stOrderFilter}). Hence, it is obvious that $\lim_{t\to\infty} \widetilde{d} =0 $, if $\lim_{t\to\infty} \dot{d} =0 $.

\section*{Appendix B: Proof of Lemma \ref{Lem: 2ndOrderEstProperties}}
 The following error dynamics are easy to obtain. \vspace{-3mm}
\begin{equation}
\dot{\mathbf{e}}_{\mathscr{S}}=\mathbf{A}_{\mathscr{S}}\mathbf{e}_{\mathscr{S}}+\mathbf{B}_{\mathscr{S}} \left(\ddot{\mathscr{S}}+2\zeta_\mathscr{S}\omega_\mathscr{S}\dot{\mathscr{S}}\right)
\label{Eq: ErrDyn_Lem1}
\end{equation} \vspace{-3mm}
where $\mathbf{B}_{\mathscr{S}}=\left[0\text{, }-1\right]^T$. By choosing $\zeta_\mathscr{S}\text{, } \omega_\mathscr{S}>0$, $\mathbf{A}_{\mathscr{S}}$ is Hurwitz, which implies $\mathbf{P}_{\mathscr{S}}\mathbf{A}_{\mathscr{S}}+\mathbf{A}^T_{\mathscr{S}}\mathbf{P}_{\mathscr{S}}=-\mathbf{I}$ with $\mathbf{P}_{\mathscr{S}}>0$. Choose $\mathbb{V}=\mathbf{e}_{\mathscr{S}}^T\mathbf{P}_{\mathscr{S}}\mathbf{e}_{\mathscr{S}}$ as the Lyapunov function for (\ref{Eq: ErrDyn_Lem1}), so
\begin{equation}
\lambda_{min}\left(\mathbf{P}_{\mathscr{S}}\right)\Vert\mathbf{e}_{\mathscr{S}}\Vert_2^2\leq\mathbb{V}\leq\lambda_{max}\left(\mathbf{P}_{\mathscr{S}}\right)\Vert\mathbf{e}_{\mathscr{S}}\Vert_2^2 \label{Eq: Lem-Ineq-Lyap}
\end{equation}
If both $\ddot{\mathscr{S}}$ and $\dot{\mathscr{S}}$ are bounded, differentiating $\mathbb{V}$ with respect to time will yield
\begin{equation*}
\begin{array}{ll}
\dot{\mathbb{V}} 	&  =  \mathbf{e}_{\mathscr{S}}^T\left(\mathbf{P}_{\mathscr{S}}\mathbf{A}_{\mathscr{S}}+\mathbf{A}_{\mathscr{S}}^T\mathbf{P}_{\mathscr{S}}\right)
					    \mathbf{e}_{\mathscr{S}}+2\mathbf{e}_{\mathscr{S}}^T\mathbf{P}_{\mathscr{S}}\mathbf{B}_{\mathscr{S}}\left(\ddot{\mathscr{S}}
					    +2\zeta_\mathscr{S}\omega_\mathscr{S}\dot{\mathscr{S}}\right)\\
				&\leq-\frac{\mathbb{V}}{\lambda_{max}\left(\mathbf{P}_{\mathscr{S}}\right)}+\frac{2\lambda_{max}\left(\mathbf{P}_{\mathscr{S}}\right)}{\sqrt{\lambda_{min}\
					   \left(\mathbf{P}_{\mathscr{S}}\right)}}\Vert\ddot{\mathscr{S}}+2\zeta_\mathscr{S}\omega_\mathscr{S}\dot{\mathscr{S}}\Vert_{\infty} \sqrt{\mathbb{V}}
\end{array}
\end{equation*}
Let $\mathbb{W}\left(t\right)=\sqrt{\mathbb{V}}$, so $\dot{\mathbb{W}}=\frac{\dot{\mathbb{V}}}{2\sqrt{\mathbb{V}}}$, when $\mathbb{V}\neq0$. Hence, $\dot{\mathbb{W}} \leq -{\mathbb{W} }\left/{\left(2\lambda_{max}\left(\mathbf{P}_{\mathscr{S}}\right)\right)}\right.+{\lambda_{max}\left(\mathbf{P}_{\mathscr{S}}\right)}
					\Vert\ddot{\mathscr{S}}+2\zeta_\mathscr{S}\omega_\mathscr{S}\dot{\mathscr{S}}\Vert_{\infty}\left/{\sqrt{\lambda_{min}\left(\mathbf{P}_{\mathscr{S}}\right)}}\right.$.
According to the comparison principle (Page 102, \cite{Khalil2002Book}), one can get
\begin{equation*}
\begin{array}{ll}
\mathbb{W} &\leq  e^{-\frac{t }{2\lambda_{max}\left(\mathbf{P}_{\mathscr{S}}\right)}}\mathbb{W}\left(0\right)+\left(1-e^{-\frac{t}{2\lambda_{max}
		    		\left(\mathbf{P}_{\mathscr{S}}\right)}}\right)\frac{2\lambda^2_{max}\left(\mathbf{P}\right)}{\sqrt{\lambda_{min}\left(\mathbf{P}\right)}}
				\Vert\ddot{\mathscr{S}}+2\zeta_\mathscr{S}\omega_\mathscr{S}\dot{\mathscr{S}}\Vert_{\infty}
\end{array} \label{Eq: W-Ineq-Chap5} 
\end{equation*}
According to (\ref{Eq: Lem-Ineq-Lyap}), $\mathbb{W}\leq\sqrt{\lambda_{max}\left(\mathbf{P}_{\mathscr{S}}\right)}\Vert\mathbf{e}_{\mathscr{S}}\Vert_2$, 
so $\mathbb{W}\left(0\right)\leq\lambda_{max}\left(\mathbf{P}_{\mathscr{S}}\right)\Vert\mathbf{e}_{\mathscr{S}}\left(0\right)\Vert_2$. By setting $\mathscr{S}_c\left(0\right)=\mathscr{S}\left(0\right)$, and $\dot{\mathscr{S}}_c\left(0\right)=0$, one has  $\Vert\mathbf{e}_{\mathscr{S}}\left(0\right)\Vert=\vert \dot{\mathscr{S}}\left(0\right)\vert$. Eventually,
\begin{equation*}\footnotesize
\Vert\mathbf{e}_{\mathscr{S}}\Vert_2\leq \sqrt{\frac{\lambda_{max}\left(\mathbf{P_{\mathscr{S}}}\right)}{\lambda_{min}\left(\mathbf{P_{\mathscr{S}}}\right)}}e^{-\frac{t}{2\lambda_{max}\left(\mathbf{P_{\mathscr{S}}}\right)}}\vert \dot{\mathscr{S}}\left(0\right)\vert+\left(1-e^{-\frac{t}{2\lambda_{max}\left(\mathbf{P_{\mathscr{S}}}\right)}}\right)\frac{2\lambda_{max}^2\left(\mathbf{P}_{\mathscr{S}}\right)}{\lambda_{min}\left(\mathbf{P}_{\mathscr{S}}\right)}\Vert\ddot{\mathscr{S}}+2\zeta_\mathscr{S}\omega_{\mathscr{S}}\dot{\mathscr{S}}\Vert_{\infty}
\end{equation*}
Therefore, $\mathbf{e}_{\mathscr{S}}$ is uniformly and ultimately bounded. With the consideration of $\Vert\tilde{\mathscr{S}}\Vert_2\leq\Vert\mathbf{e}_{\mathscr{S}}\Vert_2$, we are able to conclude that $\tilde{\mathscr{S}}$ is uniformly and ultimately bounded. 

The second conclusion of Lemma \ref{Lem: 2ndOrderEstProperties} could be demonstrated by virtue of the perturbation theory. According to (\ref{Eq: ErrDyn_Lem1}), the following singularly perturbed system is readily obtained.
\begin{equation*}\small
\frac{1}{\omega_\mathscr{S}}\dot{\tilde{\mathscr{S}}}_1=\tilde{ \mathscr{S}}_2\text{, }\qquad
\frac{1}{\omega_\mathscr{S}}\dot{\tilde{ \mathscr{S}}}_2=-2\zeta_\mathscr{S}\tilde{ \mathscr{S}}_2-\tilde{ \mathscr{S}}_1-\frac{1}{\omega_\mathscr{S}}\left(\frac{1}{\omega_\mathscr{S}}\ddot{\mathscr{S}}+2\zeta_\mathscr{S}\dot{\mathscr{S}}\right)
\end{equation*}
where $\tilde{\mathscr{S}}_1=\tilde{\mathscr{S}}$ and $\tilde{\mathscr{S}}_2=\frac{1}{\omega_\mathscr{S}}\dot{\tilde{\mathscr{S}}}$. Obviously, the effect of $\frac{1}{\omega_\mathscr{S}}\ddot{\mathscr{S}}+2\zeta_\mathscr{S}\dot{\mathscr{S}}$ will be diminished by increasing the natural frequency $\omega_\mathscr{S}$. According to the properties of singularly perturbed system, we readily have $ \tilde{\mathscr{S}}_1=\mathscr{S}_c- \mathscr{S}=\mathcal{O}\left(\frac{1}{\omega_ \mathscr{S}}\right)$ and $\dot{\tilde{\mathscr{S}}}_2=\frac{\dot{\tilde{\mathscr{S}}}}{\omega_\mathscr{S}}=\mathcal{O}\left(\frac{1}{\omega_\mathscr{S}}\right)$.

\section*{Appendix C: Proof of Lemma \ref{Lem: Chap5_InnerStab}}
In light of (\ref{Eq: Chap5_Errdyn_Theta2}), (\ref{Eq: Chap5_Omega_ErrDyn}), (\ref{Eq: Chap5_Desired_u_Theta}), (\ref{Eq: Chap5_Omega_ErrDyn2}), (\ref{Eq: Chap5_DesiredTorque}), (\ref{Eq: Chap5_Auxil_Xi_Theta}), and (\ref{Eq: Chap5_D-Theta-Tau_EstErr}), one has
\begin{equation}
\left\{\begin{array}{ll}
\boldsymbol{\dot{\varepsilon}}_\Theta &= -\mathbf{K}_{\Theta}\boldsymbol{\varepsilon}_\Theta +\mathbf{G}\mathbf{e}_\Omega-\mathbf{\widetilde{d}}_\Theta \\
\mathbf{\dot{e}}_\Omega &= -\mathbf{K}_{\Omega}{\mathbf{e}}_\Omega-\mathbf{C}_\Omega\mathbf{G}^T\boldsymbol{\varepsilon}_\Theta-\mathbf{\widetilde{d}_\tau} 
\end{array}\right.
\qquad
\left\{\begin{array}{ll}
\mathbf{\dot{\widetilde{d}}}_\Theta&=-\boldsymbol{\mathcal{T}}_\Theta^{-1}\widetilde{\mathbf{d}}_\Theta  \\
\mathbf{\dot{\widetilde{d}}}_\tau&=-\boldsymbol{\mathcal{T}}_\Omega^{-1}\widetilde{\mathbf{d}}_\tau
\end{array}\right.  \label{Eq: Chap5_ClosedInnerLoop}
\end{equation}

The stability of (\ref{Eq: Chap5_ClosedInnerLoop}) is shown by picking the Lyapunov function $\mathbb{V}$ as below.
\begin{equation}
\mathbb{V} =\frac{\boldsymbol{\varepsilon}_\Theta^T\boldsymbol{\varepsilon}_\Theta}{2}+\frac{\mathbf{e}_\Omega^T\mathbf{C}_\Omega^{-1}\mathbf{e}_\Omega}{2}+\frac{\widetilde{\mathbf{d}}_\Theta^T\widetilde{\mathbf{d}}_\Theta}{2}+\frac{\widetilde{\mathbf{d}}_\tau^T\mathbf{C}_\Omega^{-1}\widetilde{\mathbf{d}}_\tau}{2}
\end{equation}
The derivative of $\mathbb{V}$ is 
\begin{equation*}\small
\begin{array}{ll}
\dot{\mathbb{V}}&= \boldsymbol{\varepsilon}_\Theta^T\left(-\mathbf{K}_{\Theta}\boldsymbol{\varepsilon}_\Theta +\mathbf{G}\mathbf{e}_\Omega-\mathbf{\widetilde{d}}
					_\Theta\right) - \mathbf{e}_\Omega^T\mathbf{C}_\Omega^{-1}\left(\mathbf{K}_{\Omega}{\mathbf{e}}_\Omega+\mathbf{C}_\Omega\mathbf{G}
					^T\boldsymbol{\varepsilon}_\Theta+\mathbf{\widetilde{d}_\tau} \right)-\mathbf{\widetilde{d}}_\Theta^T\boldsymbol{\mathcal{T}}_\Theta^{-1}\mathbf{\widetilde{d}}_\Theta  -\mathbf{\widetilde{d}}_\tau^T\mathbf{C}_\Omega^{-1}\boldsymbol{\mathcal{T}}_\Omega^{-1}\mathbf{\widetilde{d}}
					_\tau\nonumber \\
				&=	- \boldsymbol{\varepsilon}_\Theta^T\mathbf{K}_{\Theta}\boldsymbol{\varepsilon}_\Theta-\boldsymbol{\varepsilon}_\Theta^T\mathbf{\widetilde{d}}
					_\Theta-\mathbf{e}_\Omega^T\mathbf{C}_\Omega^{-1}\mathbf{K}_{\Omega}{\mathbf{e}}_\Omega-\mathbf{e}_\Omega^T\mathbf{C}
					_\tau^{-1}\mathbf{\widetilde{d}_\tau}-\mathbf{\widetilde{d}}_\Theta^T\boldsymbol{\mathcal{T}}_\Theta^{-1}\mathbf{\widetilde{d}}_\Theta-
					\mathbf{\widetilde{d}}_\tau^T\mathbf{C}_\Omega^{-1}\boldsymbol{\mathcal{T}}_\Omega^{-1}\mathbf{\widetilde{d}}_\tau  \nonumber 
\end{array}
\end{equation*}
where $\mathbf{\widetilde{d}}_\Theta$ and $\mathbf{\widetilde{d}}_\tau$ are both vectors. Let $\mathbf{\widetilde{d}}_\Theta=\left[\widetilde{d}_\mu\text{, }\widetilde{d}_\alpha\text{, }\widetilde{d}_\beta\right]^T$ and $\mathbf{\widetilde{d}}_\tau=\left[\widetilde{d}_p\text{, }\widetilde{d}_q\text{, }\widetilde{d}_r\right]^T$, so
\begin{eqnarray}
\dot{\mathbb{V}}&=& -K_\mu\varepsilon_\mu^2-K_\alpha\varepsilon_\alpha^2-K_\beta\varepsilon_\beta^2-\varepsilon_\mu\widetilde{d}_\mu-\varepsilon_\alpha\widetilde{d}
					_\alpha-\varepsilon_\beta\widetilde{d}_\beta-\frac{\widetilde{d}_\mu^2}{\mathcal{T}_\mu}-\frac{\widetilde{d}_\alpha^2}{\mathcal{T}_\alpha}-
					\frac{\widetilde{d}_\beta^2}{\mathcal{T}_\beta}\nonumber \\
				& & -\frac{K_pe_p^2}{c_p}-\frac{K_qe_q^2}{c_q}-\frac{K_re_r^2 }{c_r}-\frac{e_p\widetilde{d}_p}{c_p}-\frac{e_q\widetilde{d}_q}{c_q}
					-\frac{e_r\widetilde{d}_r}{c_r}-\frac{\widetilde{d}_p^2}{c_p\mathcal{T}_p}-\frac{\widetilde{d}_q^2}{c_q\mathcal{T}_q}-\frac{\widetilde{d}_r^2}{c_r\mathcal{T}_r}
\end{eqnarray}
It is easy to show that $\dot{\mathbb{V}}_2$ is negative definite, if all control parameters are chosen to be positive. Hence, $\boldsymbol{\varepsilon}_\Theta$ and ${\mathbf{e}}_\Omega$ will exponentially converge to zero. To show the second conclusion in Lemma \ref{Lem: Chap5_InnerStab}, a singular perturbation model is established based on (\ref{Eq: Chap5_Cmd_PQR}) and (\ref{Eq: Chap5_Auxil_Xi_Theta}). To simplify the analysis complexity, assume that $\omega_p=\omega_q=\omega_r=\omega_\Omega$ and $\zeta_p=\zeta_q=\zeta_r=\zeta_\Omega$. Define a new variable $\boldsymbol{\bar{\Omega}}_{cdot}= \frac{\boldsymbol{\dot{\Omega}}_c}{\omega_\Omega}$. Therefore, the singular perturbation model is \vspace{-5mm}
\begin{subequations} \label{Eq: Chap5_SingularPerturb_Theta}
\begin{align}
\boldsymbol{\dot{\xi}}_\Theta&=-\mathbf{K}_{\Theta}\boldsymbol{{\xi}}_{{\Theta}}+\mathbf{G}\left(\boldsymbol{\Omega}_c-\boldsymbol{\Omega}_d\right)   \label{Eq: Chap5_SingularPerturb_Theta1}\\
\frac{1}{\omega_\Omega}\boldsymbol{\dot{\Omega}}_c&=\boldsymbol{\bar{\Omega}}_{cdot}  \label{Eq: Chap5_SingularPerturb_Theta2}\\
\frac{1}{\omega_\Omega}\boldsymbol{\dot{\bar{\Omega}}}_{cdot} &= -\frac{2\zeta_\Omega}{\omega_\Omega}\boldsymbol{\dot{\bar{\Omega}}}_{cdot}-\left(\boldsymbol{\Omega}_c-\boldsymbol{\Omega}_d\right)  \label{Eq: Chap5_SingularPerturb_Theta3}
\end{align}
\end{subequations}
Obviously, the command filter system composed of (\ref{Eq: Chap5_SingularPerturb_Theta2}) and (\ref{Eq: Chap5_SingularPerturb_Theta3}) has much faster dynamics than the auxiliary system (\ref{Eq: Chap5_SingularPerturb_Theta1}), if $\omega_\Omega$ is chosen to be sufficiently large. The reduced system of (\ref{Eq: Chap5_SingularPerturb_Theta}) is given by $\boldsymbol{\dot{\bar{\xi}}}_\Theta=-\mathbf{K}_{\Theta}\boldsymbol{\bar{\xi}}_{\Theta}$ by setting $\omega_\Omega$ to be infinity, where $\boldsymbol{\bar{\xi}}_{\Theta}$ is the reduced system state vector. If $\boldsymbol{\xi}_{\Theta}\left(0\right)=\boldsymbol{\bar{\xi}}_{\Theta}\left(0\right)=0$, $\Vert\boldsymbol{\xi}_{\Theta}\Vert=\mathcal{O}\left(\frac{1}{\omega_\Omega}\right)$ will hold uniformly according to the Tikhonov's theorem, which implies ${\xi}_\sigma=\mathcal{O}\left(\frac{1}{\omega_\sigma}\right)$.  It is easy to get $\sigma_c-\sigma_d=\mathcal{O}\left(\frac{1}{\omega_\sigma}\right)$ according to Lemma \ref{Lem: 2ndOrderEstProperties}.

\section*{Appendix D: Proof of Proposition \ref{Prop: Chap5_InnerStab_NoZeroDer}}
When $\mathbf{\dot{d}}_\Theta \neq 0$ and $\boldsymbol{\dot{D}}_\tau \neq 0$, the closed-loop error dynamics (\ref{Eq: Chap5_ClosedInnerLoop}) will be rewritten as 
\begin{equation}
\left\{\begin{array}{ll}
\boldsymbol{\dot{\varepsilon}}_\Theta &= -\mathbf{K}_{\Theta}\boldsymbol{\varepsilon}_\Theta +\mathbf{G}\mathbf{e}_\Omega-\mathbf{\widetilde{d}}_\Theta \\
\mathbf{\dot{e}}_\Omega &= -\mathbf{K}_{\Omega}{\mathbf{e}}_\Omega-\mathbf{C}_\Omega\mathbf{G}^T\boldsymbol{\varepsilon}_\Theta-\mathbf{\widetilde{d}_\tau}
\end{array}\right.
\qquad
\left\{\begin{array}{ll}
\boldsymbol{\mathcal{T}}_\Theta\mathbf{\dot{\widetilde{d}}}_\Theta&=-\mathbf{\widetilde{d}}_\Theta -\boldsymbol{\mathcal{T}}_\Theta\mathbf{\dot{d}}_\Theta \\
\boldsymbol{\mathcal{T}}_\Omega\mathbf{\dot{\widetilde{d}}}_\tau&=-\mathbf{\widetilde{d}_\tau}-\boldsymbol{\mathcal{T}}_\Omega\mathbf{\dot{d}}_\tau
\end{array}\right.  \label{Eq: Chap5_ClosedInnerLoop1}
\end{equation}
If the time constants $\boldsymbol{\mathcal{T}}_\Theta$ and $\boldsymbol{\mathcal{T}}_\Omega$ are chosen to be sufficiently small,  (\ref{Eq: Chap5_ClosedInnerLoop1}) will be a standard perturbation model whose reduced system is given in (\ref{Eq: Chap5_ClosedInnerLoopStandard}). The reduced system  (\ref{Eq: Chap5_ClosedInnerLoopStandard}) is apparently exponentially stable. Since both $\mathbf{\dot{d}}_\Theta$ and $\boldsymbol{\dot{D}}_\tau$ are bounded, their impact will be diminished by reducing $\boldsymbol{\mathcal{T}}_\Theta$ and $\boldsymbol{\mathcal{T}}_\Omega$, respectively. Since $\boldsymbol{\Omega}_c\left(0\right)=\boldsymbol{\Omega}_c\left(0\right)$ and $\boldsymbol{\xi}_\Theta\left(0\right)=0$, one has $\boldsymbol{{\varepsilon}}_\Theta\left(0\right)=\boldsymbol{\bar{e}}_\Theta\left(0\right)$ and $\boldsymbol{{e}}_\Omega\left(0\right)=\boldsymbol{\bar{e}}_\Omega\left(0\right)$. According to the Tikhonov's theorem for a standard perturbation model, one is able to conclude that $\Vert{\boldsymbol{\varepsilon}}_\Theta-\boldsymbol{\bar{e}}_\Theta\Vert= \mathcal{O}\left(\epsilon_1\right)$ and $\Vert\boldsymbol{e}_\Omega-\boldsymbol{\bar{e}}_\Omega\Vert= \mathcal{O}\left(\epsilon_1\right)$ will uniformly hold. According to the definition of the oder of magnitude, it is easy to find that $\epsilon_1=\max\left\{\mathcal{T}_\mu\text{, }\mathcal{T}_\alpha\text{, }\mathcal{T}_\beta\text{, }\mathcal{T}_p\text{, }\mathcal{T}_q\text{, }\mathcal{T}_r\right\}$.

Furthermore, with the consideration of (\ref{Eq: Chap5_SingularPerturb_Theta}) and (\ref{Eq: Chap5_ClosedInnerLoop1}), we have
\begin{equation}
\left\{\begin{array}{ll}
\boldsymbol{\dot{e}}_\Theta &= -\mathbf{K}_{\Theta}\boldsymbol{e}_\Theta +\mathbf{G}\mathbf{e}_\Omega+\mathbf{G}\left(\boldsymbol{\Omega}_c-\boldsymbol{\Omega}_d\right) -\mathbf{\widetilde{d}}_\Theta  \\
\mathbf{\dot{e}}_\Omega &= -\mathbf{K}_{\Omega}{\mathbf{e}}_\Omega-\mathbf{C}_\Omega\mathbf{G}^T\left(\boldsymbol{e}_\Theta-\boldsymbol{\xi}_{\Theta}\right)-\mathbf{\widetilde{d}_\tau}\\
\boldsymbol{\dot{\xi}}_\Theta&=-\mathbf{K}_{\Theta}\boldsymbol{\xi}_{\Theta}+\mathbf{G}\left(\boldsymbol{\Omega}_c-\boldsymbol{\Omega}_d\right)   
\end{array}\right. \label{Eq: Chap5_InnerPerturb_Full1}
\end{equation}
\begin{equation}
\left\{\begin{array}{ll}
\frac{1}{\omega_\Omega}\boldsymbol{\dot{\Omega}}_c&=\boldsymbol{\bar{\Omega}}_{cdot} \\
\frac{1}{\omega_\Omega}\boldsymbol{\dot{\bar{\Omega}}}_{cdot} &= -\frac{2\zeta_\Omega}{\omega_\Omega}\boldsymbol{\dot{\bar{\Omega}}}_{cdot}-\left(\boldsymbol{\Omega}_c-\boldsymbol{\Omega}_d\right) \\
\end{array}\right.
\qquad
\left\{\begin{array}{ll}
\boldsymbol{\mathcal{T}}_\Theta\mathbf{\dot{\widetilde{d}}}_\Theta&=-\mathbf{\widetilde{d}}_\Theta -\boldsymbol{\mathcal{T}}_\Theta\mathbf{\dot{d}}_\Theta\\
\boldsymbol{\mathcal{T}}_\Omega\mathbf{\dot{\widetilde{d}}}_\tau&=-\mathbf{\widetilde{d}_\tau}-\boldsymbol{\mathcal{T}}_\Omega\mathbf{\dot{d}}_\tau
\end{array}{ll}\right. \label{Eq: Chap5_InnerPerturb_Full2}
\end{equation}
Notice that Eq. (\ref{Eq: Chap5_InnerPerturb_Full2}) will perform as fast dynamics, if $\omega_\Omega$ is chosen sufficiently large and $\boldsymbol{\mathcal{T}}_\Theta$ and $\boldsymbol{\mathcal{T}}_\Omega$ are chosen sufficiently small. The reduced system composed of (\ref{Eq: Chap5_InnerPerturb_Full1}) and (\ref{Eq: Chap5_InnerPerturb_Full2}) is still (\ref{Eq: Chap5_ClosedInnerLoopStandard}). By picking $\boldsymbol{\xi}_{\Theta}\left(0\right)=0$, one has $\boldsymbol{e}_\Theta\left(0\right)=\bar{e}_\Theta$, so $\Vert{\boldsymbol{e}}_\Theta-\boldsymbol{\bar{e}}_\Theta\Vert= \mathcal{O}\left(\epsilon_2\right)$ will uniformly hold, where $\epsilon_2=\max\left\{\epsilon_1\text{, }\frac{1}{\omega_p}\text{, }\frac{1}{\omega_q}\text{, }\frac{1}{\omega_r}\right\}$. In addition, $\lim_{t\to\infty}\boldsymbol{\bar{e}}_\Theta\to0$, so $\boldsymbol{e}_\Theta$ will be ultimately bounded.

\section*{Acknowledgments}
This research work presented in this paper is supported by Natural Science and Engineering Research Council of Canada (NSERC) Discovery Grant 227674.

\section*{References} \label{Sec: Reference}
\bibliographystyle{aiaa}
\bibliography{./MyRefs}

\begin{thebibliography}{10}
\newcommand{\enquote}[1]{``#1''}

\bibitem{Lissaman1970Science}
Lissaman, P. B.~S. and Shollenberger, C.~A., \enquote{Formation Flight of
  Birds,} {\em Science\/}, Vol.~68, No. 3934, 1970, pp.~1003--1005.
\newblock \newline\doi{10.1126/science.168.3934.1003}.

\bibitem{Weimerskirch2001Nature}
Weimerskirch, H., Martin, J., Clerquin, Y., Alexandre, P., and Jiraskova, S.,
  \enquote{Energy Saving in Flight Formation,} {\em Nature\/}, Vol.~413, 2001,
  pp.~697--698.
\newblock \newline\doi{10.1038/35099670}.

\bibitem{Portugal2014Nature}
Portugal, S.~J., Hubel, T.~Y., Fritz, J., and et~al, \enquote{Upwash
  Exploitation and Downwash Avoidance by Flap Phasing in Ibis Formation
  Flight,} {\em Nature\/}, Vol.~413, 2014, pp.~399--402.
\newblock \newline\doi{10.2514/6.2007-4182}.

\bibitem{Halaas2014SciTech}
Halaas, D.~J., Bieniawski, S.~R., Whitehead, B.~T., Flanzer, T., and Blake,
  W.~B., \enquote{Formation Flight For Aerodynamic Benefit Simulation
  Development and Validation,} {\em Proceedings of the 52nd {AIAA} Aerospace
  Sciences Meeting\/}, National Harbor, MD, USA, Jan. 2014.
\newblock \newline\doi{10.2514/6.2014-1459}, {A}IAA 2014-0927.

\bibitem{Kent2015JGCD}
Kent, T.~E. and Richards, A.~G., \enquote{Analytic Approach to Optimal Routing
  for Commercial Formation Flight,} {\em Journal of Guidance, Control, and
  Dynamics\/}, Vol.~38, No.~10, 2015, pp.~1872--1884.
\newblock \newline\doi{10.2514/1.G000806}.

\bibitem{Zhang2017JA}
Zhang, Q. and Liu, H. H.~T., \enquote{Aerodynamics Modeling and Analysis of
  Close Formation Flight,} {\em Journal of Aircraft\/}, Vol.~54, No.~6, 2017,
  pp.~2192--2204.
\newblock \newline\doi{10.2514/1.C034271}.

\bibitem{Bangash2006JA}
Bangash, Z.~A., Sanchez, R.~P., and Ahmed, A., \enquote{Aerodynamics of
  Formation Flight,} {\em Journal of Aircraft\/}, Vol.~43, No.~4, 2006,
  pp.~907--912.
\newblock \newline\doi{10.2514/1.13872}.

\bibitem{Cho2014JMST}
Cho, H., Lee, S., and Han, C., \enquote{Experimental Study on The Aerodynamic
  Characteristics of a Fighter-Type Aircraft Model in Close Formation Flight,}
  {\em Journal of Mechanical Science and Technology\/}, Vol.~28, No.~8, 2014,
  pp.~3059--3065.
\newblock \newline\doi{10.1007/s12206-014-0713-2}.

\bibitem{Ray2002AFMCE}
Ray, R., Cobleigh, B., Vachon, M.~J., and John, C.~S., \enquote{Flight Test
  Techniques Used to Evaluate Performance Benefits During Formation Flight,}
  {\em Proceedings of AIAA Atmospheric Flight Mechanics Conference and
  Exhibit\/}, AIAA, Monterey, California, 2002.
\newblock \newline\doi{10.2514/6.2002-4492}, {A}IAA 2002-4492.

\bibitem{Pahle2012AFMC}
Pahle, J., Berger, D., Venti, M., Duggan, C., Faber, J., and Cardinal6, K.,
  \enquote{An Initial Flight Investigation of Formation Flight for Drag
  Reduction on the C-17 Aircraft,} {\em Proceedings of 2012 Atmospheric Flight
  Mechanics Conference, AIAA AVIATION Forum\/}, AIAA, Minneapolis, Minnesota,
  USA, Jan. 2012.
\newblock \newline\doi{10.2514/6.2012-4802}, {A}IAA 2012-4802.

\bibitem{Bieniawski2014ASM}
Bieniawski, S.~R., Clark, R.~W., Rosenzweig, S.~E., and Blake, W.~B.,
  \enquote{Summary of Flight Testing and Results for the Formation Flight for
  Aerodynamic Benefit Progam,} {\em Proceedings of 52nd AIAA Aerospace Sciences
  Meeting\/}, AIAA, National Harbor, MD, Jan. 2014.
\newblock \newline\doi{10.2514/6.2014-1457}, {A}IAA 2014-1457.

\bibitem{Hanson2018AFMC}
Hanson, C., Pahle, J., Reynolds, J., Andrade, S., and Brown, N.,
  \enquote{Experimental Measurements of Fuel Savings During Aircraft Wake
  Surfing,} {\em Proceedings of 2018 Atmospheric Flight Mechanics Conference,
  AIAA AVIATION Forum\/}, AIAA, Atlanta, Georgia, USA, Jan. 2018.
\newblock \newline\doi{10.2514/6.2018-3560}, {A}IAA 2018-3560.

\bibitem{Schumacher2000ACC}
Schumacher, C.~J. and Kumar, R., \enquote{Adaptive Control of UAVs in
  Close-Coupled Formation Flight,} {\em Proceedings of the 2000 American
  Control Conference\/}, IEEE, Chicago, Illinois, 2000.
\newblock \newline\doi{10.1109/ACC.2000.876619}.

\bibitem{Pachter2001JGCD}
Pachter, M., Azzo, J. J.~D., and Proud, A.~W., \enquote{Tight Formation Flight
  Control,} {\em Journal of Guidance, Control, and Dynamics\/}, Vol.~24, No.~2,
  2001, pp.~246--254.
\newblock \newline\doi{10.2514/2.4735}.

\bibitem{Dogan2005JGCD}
Dogan, A. and Venkataramanan, S., \enquote{Nonlinear Control for
  Reconfiguration of Unmanned-Aerial-Vehicle Formation,} {\em Journal of
  Guidance, Control, and Dynamics\/}, Vol.~28, No.~4, 2005, pp.~667--678.
\newblock \newline\doi{10.2514/1.8760}.

\bibitem{Almeida2015GNC}
de~Almeida, F.~A., \enquote{Tight Formation Flight with Feasible Model
  Predictive Control,} {\em Proceedings of AIAA Guidance, Navigation, and
  Control Conference\/}, AIAA, Kissimmee, Florida, U.S.A., 2015, {A}IAA
  2015-0602.

\bibitem{Binetti2003JGCD}
Binetti, P., Ariyur, K.~B., Krsti\'c, M., and Bernelli, F., \enquote{Formation
  Flight Optimization Using Extremum Seeking Feedback,} {\em Journal of
  Guidance, Control, and Dynamics\/}, Vol.~26, No.~1, 2003, pp.~132--142.
\newblock \newline\doi{10.2514/2.5024}.

\bibitem{Lavretsky2003GNC}
Lavretsky, E., Hovakimyan, N., Calise, A., and Stepanyan, V., \enquote{Adaptive
  Vortex Seeking Formation Flight Neurocontrol,} {\em Proceedings of the 2003
  AIAA Guidance, Navigation, and Control Conference and Exhibit\/}, AIAA,
  Austin, Texas, Aug. 2003.
\newblock \newline\doi{10.2514/6.2003-5726}, {A}IAA 2003-5726.

\bibitem{Chichka2006JGCD}
Chichka, D.~F., Speyer, J.~L., Fanti, C., and Park, C.~G.,
  \enquote{Peak-Seeking Control for Drag Reduction in Formation Flight,} {\em
  Journal of Guidance, Control, and Dynamics\/}, Vol.~29, No.~5, 2006,
  pp.~1221--1230.
\newblock \newline\doi{10.2514/1.15424}.

\bibitem{Zhang2016GNC}
Zhang, Q. and Liu, H. H.-T., \enquote{Robust Design of Close Formation Flight
  Control via Uncertainty and Disturbance Estimator,} {\em Proceedings of 2016
  AIAA Guidance, Navigation, and Control Conference, AIAA SciTech Forum\/},
  AIAA, San Diego, California, USA, Jan. 2016.
\newblock \newline\doi{10.2514/6.2016-2102}, {A}IAA 2016-2102.

\bibitem{Zhang2017GNC}
Zhang, Q. and Liu, H. H.-T., \enquote{Integrator-Augmented Robust Adaptive
  Control Design for Close Formation Flight,} {\em Proceedings of 2017 AIAA
  Guidance, Navigation, and Control Conference, AIAA SciTech Forum\/},
  Grapevine, TX, USA, Jan. 2017.
\newblock \newline\doi{10.2514/6.2017-1255}, {A}IAA 2017-1255.

\bibitem{Zhang2018AESCTE}
Zhang, Q. and Liu, H. H.~T., \enquote{Aerodynamic model-based robust adaptive
  control for close formation flight,} {\em Aerospace Science and
  Technology\/}, Vol.~79, Aug. 2018, pp.~5--16.
\newblock \newline\doi{10.1016/j.ast.2018.05.029}.

\bibitem{Singh2000IJRNC}
Singh, S., Pachter, M., Chandler, P., Banda, S., Rasmussen, S., and Schumacher,
  C., \enquote{Input-Output Invertibility and Sliding Mode Control for Close
  Formation Flying,} {\em Int. J. Robust Nonlinear Control\/}, Vol.~10, No.~10,
  2000, pp.~779--797.
\newblock
  \newline\doi{10.1002/1099-1239(200008)10:10<779::AID-RNC513>3.0.CO;2-6}.

\bibitem{Galzi2007ACC}
Galzi, D. and Shtessel, Y., \enquote{Closed-Coupled Formation Flight Control
  Using Quasi-Continuous High-Order Sliding-Mode,} {\em Proceedings of the 2007
  American Control Conference\/}, IEEE, New York City, USA, 2007.

\bibitem{Brodecki2015JGCD}
Brodecki, M. and Subbarao, K., \enquote{Autonomous Formation Flight Control
  System Using In-Flight Sweet-Spot Estimation,} {\em Journal of Guidance,
  Control, and Dynamics\/}, Vol.~38, No.~6, 2015, pp.~1083--1096.
\newblock \newline\doi{10.2514/1.G000220}.

\bibitem{Farrell2005JGCD}
Farrell, J., Sharma, M., and Polycarpou, M., \enquote{Backstepping-Based Flight
  Control with Adaptive Function Approximation,} {\em Journal of Guidance,
  Control, and Dynamics\/}, Vol.~28, No.~6, 2005, pp.~1089--1102.
\newblock \newline\doi{10.2514/1.13030}.

\bibitem{Farrell2009TAC}
Farrell, J.~A., Polycarpou, M., Sharma, M., and Dong, W., \enquote{Command
  Filtered Backstepping,} {\em {IEEE} Transactions on Automatic Control\/},
  Vol.~54, No.~6, Jun. 2009, pp.~1391--1395.
\newblock \newline\doi{10.1109/TAC.2009.2015562}.

\bibitem{Zhang2018TIE}
Zhang, Q. and Liu, H. H.~T., \enquote{UDE-Based Robust Command Filtered
  Backstepping Control for Close Formation Flight,} {\em IEEE Transactions on
  Industrial Electronics\/}, Vol.~65, No.~11, Nov. 2018, pp.~8818--8827.
\newblock \newline\doi{10.1109/TIE.2018.2811367}, Early access online, March
  12, 2018.

\bibitem{Khalil2002Book}
Khalil, H.~K., {\em Nonlinear Systems\/}, Prentice Hall, 3rd ed., 2001.

\bibitem{Sonneveldt2007JGCD}
Sonneveldt, L., Chu, Q., and Mulder, J., \enquote{Nonlinear Flight Control
  Design Using Constrained Adaptive Backstepping,} {\em Journal of Guidance,
  Control, and Dynamics\/}, Vol.~30, No.~2, 2007, pp.~322--336.
\newblock \newline\doi{10.2514/1.25834}.

\bibitem{Sonneveldt2009JGCD}
Sonneveldt, L., Oort, E.~V., Chu, Q., and Mulder, J., \enquote{Nonlinear
  Adaptive Trajectory Control Applied to an F-16 Model,} {\em Journal of
  Guidance, Control, and Dynamics\/}, Vol.~32, No.~1, 2009, pp.~322--336.
\newblock \newline\doi{10.2514/1.38785}.

\bibitem{Morelli1998ACC}
Morelli, E.~A., \enquote{Global Nonlinear Parametric Modelling with Application
  to F-16 Aerodynamics,} {\em Proceedings of the 1998 American Control
  Conference\/}, IEEE, Philadelphia, PA, Jun. 1998.
\newblock \newline\doi{10.1109/ACC.1998.703559}.

\bibitem{Ioannou1995Book}
Ioannou, P.~A. and Sun, J., {\em Robust Adaptive Control\/}, Prentice-Hall,
  Inc., 1995.

\bibitem{Russel2003Tech}
Russel, R.~S., \enquote{Non-linear F-16 simulation using Simulink and Matlab,}
  Tech. rep., Technical report, University of Minnesota, Jun. 2003.

\end{thebibliography}

\end{document}